%
%
\documentclass[opre,dblanonrev]{informs4}
\RequirePackage{tgtermes}
\RequirePackage{newtxtext}
\RequirePackage{bm}
\RequirePackage{endnotes}

\OneAndAHalfSpacedXI



\usepackage{natbib}
 \bibpunct[, ]{(}{)}{,}{a}{}{,}%
 %
 %
 %
 %
 %

\EquationsNumberedThrough    

\TheoremsNumberedThrough     
\ECRepeatTheorems  %


\usepackage{algorithm}
\usepackage{algpseudocode}
\usepackage{amsfonts,amsmath,amstext,amssymb,amsopn}
\usepackage{dsfont}
\usepackage{multirow}
\usepackage{bbm}
\usepackage{tikz}
\usepackage[margin=1in]{geometry}
\usepackage{scalerel,stackengine}
\usetikzlibrary{arrows.meta,
            decorations.pathreplacing,
                    calligraphy,
                positioning}
\usepackage{booktabs}




\usepackage{amssymb} 

\newenvironment{myproof}[1][Proof]{%
  \par\noindent\textit{#1.}\quad%
}{%
  \par\noindent\mbox{} \hfill$\square$\par
}



\newcommand{\atar}{{\sf(ATAR)}}

\def\revmod{\textsc{Rev-Cost}}
\def\subdual{\textsc{Sub-Dual}}

\makeatletter
\def\BState{\State\hskip-\ALG@thistlm}
\makeatother
\def\one{{\mathds{1}}}
\def\RR{{\mathbb R}}

\def\EE{{\mathbb E}}

\def\PP{{\mathbb P}}

\def\C{{\mathcal C}}

\def\S{{\mathcal S}}

\def\h#1{{#1}^1}




\def\n1{\h{\mathscr{N}}}

\usepackage{makecell}





\begin{document}



\RUNTITLE{Adaptive Two-sided Assortment Optimization: Revenue Maximization}

\TITLE{Adaptive Two-sided Assortment Optimization: Revenue Maximization}

\ARTICLEAUTHORS{%
\AUTHOR{Mohammadreza Ahmadnejadsaein, Omar El Housni}
\AFF{School of Operations Research and Information Engineering, Cornell Tech, Cornell University\\ 
\EMAIL{\{ma932,oe46\}@cornell.edu}}

}

\ABSTRACT{%
We study adaptive two-sided assortment optimization for revenue maximization in choice-based matching platforms. The platform has two sides of agents, an initiating side, and a responding side. The decision-maker sequentially selects agents from the initiating side, shows each an assortment of agents from the responding side, and observes their choices. After processing all initiating agents, the responding agents are shown assortments and make their selections. A match occurs when two agents mutually select each other, generating pair-dependent revenue. Choices follow Multinomial Logit (MNL) models. This setting generalizes prior work focused on maximizing the number of matches under submodular demand assumptions, which do not hold in our revenue-maximization context.
Our main contribution is the design of polynomial-time approximation algorithms with constant-factor guarantees. In particular, for general pairwise revenues, we develop a randomized algorithm that achieves a $(\frac{1}{2}-\epsilon)$-approximation in expectation for any $\epsilon > 0$. The algorithm is static and provides guarantees under various agent arrival settings, including fixed order, simultaneous processing, and adaptive selection. When revenues are uniform across all pairs involving any given responding-side agent, the guarantee improves to $(1 - \frac{1}{e} - \epsilon)$. In structural settings where responding-side agents share a common revenue-based ranking, we design a simpler adaptive deterministic algorithm achieving a $\frac{1}{2}$-approximation. Our approach leverages novel linear programming relaxations, correlation gap arguments, and structural properties of the revenue functions.
}%



\KEYWORDS{Assortment Optimization, Two-sided Platforms, Revenue Maximization, Multinomial Logit Model} 

\maketitle

\vspace{4mm}
\section{Introduction}
Two-sided online platforms such as Upwork, Airbnb, Tinder, Uber, and LinkedIn have profoundly transformed modern marketplaces by facilitating interactions between two distinct user groups, typically customers seeking services and suppliers offering them. These platforms streamline operations across domains including transportation, accommodations, employment, and social connections. A fundamental challenge in two-sided platforms is managing the heterogeneous preferences of agents, which can create imbalanced demand patterns. Specifically, when agents on one side have strong preferences towards a limited set of highly desirable participants from the opposite side, such as freelancers with exceptional qualifications or attractive profiles in dating markets, this concentration of demand leads to \emph{choice congestion}. Choice congestion occurs when popular agents receive more requests than they can feasibly handle, negatively impacting both overall system efficiency and the platform's revenue. Consequently, the performance of these markets critically depends on platform design and participant preferences. Such dynamics have been extensively explored in two-sided matching contexts and specific applications (see, e.g.,~\citet{rochet2003,fradkin2015,manshadi2022}). 

Two-sided assortment optimization has emerged as a powerful tool to improve the efficiency of two-sided platforms by optimizing the assortments displayed to agents on either side of the market. Recent papers have explored this framework in various settings (e.g.,~\citet{ashlagi2022assortment,torrico2023multiagent,aouad2023,housni2024twosided}), highlighting its potential to mitigate inefficiencies such as choice congestion and imbalanced demand.
Most of this literature considers a two-sided market consisting of an initiating side and a responding side. The platform, acting as a central decision maker, presents each initiating-side agent with an assortment of agents from the responding side based on their preferences, and observes their choices. After all initiating-side agents are processed, each responding-side agent is shown an assortment and makes a selection. These works typically assume that each agent selects a single counterpart from the offered set, as in rental markets, although other settings allow multiple selections, such as in dating platforms. 

The design of the platform, specifically how agents interact with the system and how the decision maker controls displayed assortments, gives rise to various optimization settings and policies. For example, platforms differ in whether assortments are static or adaptive, uniform or personalized, and whether agents are processed in a fixed, dynamic, or chosen order. These structural choices directly shape the optimization problem faced by the platform and motivate diverse policy classes.
\citet{housni2024twosided} categorize two-sided assortment policies into several classes based on how the platform sequences agents and adapts assortments. One broad class is that of \emph{one-sided} policies, where the platform processes all agents on one side before serving the other side. A key subclass is \emph{static one-sided} policies, in which the decision maker simultaneously offers assortments to all agents on the initiating side, observes their choices, and then processes the responding side. This class was introduced by~\citet{ashlagi2022assortment} and further developed by~\citet{torrico2023multiagent}. Another important subclass is \emph{dynamic one-sided} policies, where agents on the initiating side are processed sequentially, and assortments are adapted based on revealed choices. Within this subclass, the arrival order may be stochastic, adversarial, or chosen by the platform. For instance, \citet{aouad2023} study an online model with sequential customer arrivals and time-varying supplier availability, both potentially adversarial. \citet{housni2024twosided} analyze the \emph{one-sided adaptive} setting that is a variant of dynamic one-sided policies in which the platform controls the order of agent arrivals and adaptively offers assortments to maximize the expected number of matches.

All the works above focus on maximizing the expected number of successful matches, where a match occurs when two agents from opposite sides select each other. However, for the platform, it is often more desirable to design algorithms that explicitly target revenue maximization because maximizing the number of matches does not necessarily guarantee high revenue under heterogeneous customer-supplier pair revenues.

We are aware of two papers that focus on maximizing revenue in static settings with general pairwise revenues. \citet{ahmed2022parameterized} study maximizing expected revenue within the class of {\em fully static} policies. However, their work does not provide constant-factor algorithms, offering only parameterized guarantees. The other is a recent concurrent work by \citet{danny2024}, who study revenue maximization under static one-sided policies, where each customer-supplier pair is associated with a specific revenue. They consider two settings: a \emph{customized model}, where the platform can personalize the set of selecting customers shown to each supplier, and an \emph{inclusive model}, where the supplier must see all customers who selected them.
 In this paper, we study the class of {\bf adaptive} one-sided policies in two-sided platforms with the goal of maximizing expected revenue. Our model is presented below.
 

\subsection{Model}
We consider a two-sided platform represented by a weighted bipartite graph \(G = (\mathcal{C}, \mathcal{S}, E)\), where \(\mathcal{C}\) and \(\mathcal{S}\) refer to the customers and suppliers, with \(|\mathcal{C}| = n\) and \(|\mathcal{S}| = m\), and \(E = \mathcal{C} \times \mathcal{S}\) the set of customer-supplier pairs. Each edge \((i, j) \in E\) has a non-negative weight \(r_{ij}\), representing the revenue accrued by the platform upon a successful match between customer \(i\) and supplier \(j\).

The platform begins from an initiating side, either customers or suppliers, and processes each agent on that side adaptively, then proceeds with adaptive processing of agents on the responding side, where each agent may select at most one agent from the set of offers they receive. We assume without loss of generality that customers are the initiating side and suppliers are the responding side. In each step, the platform selects a customer, offers her an assortment consisting of a subset of suppliers, observes her choice, and determines the next customer to serve. After all agents on the initiating side have been processed, the platform proceeds to process the responding side: it offers each agent a subset of agents from the initiating side, restricted to those who previously selected that agent during initiating steps. Each realized successful match generates revenue corresponding to the weight of the matched customer-supplier pair in the bipartite graph, and the platform's objective is to maximize total expected revenue from these successful matches.

Agents on either side, namely customers and suppliers, may select at most one counterpart from an assortment offered by the platform, consisting of a subset of agents from the opposite side. Their selection behavior is governed by individual preference models. In the literature on assortment optimization, such preferences are typically captured using discrete choice models. Specifically, given an assortment of available alternatives, each agent selects an option probabilistically according to a given choice distribution. In this paper, we focus on settings where  all choice preferences are governed by individual \emph{Multinomial Logit (MNL)} models, and refer to this setting as~\atar. For completeness, we present a mathematical formulation of \atar\ using dynamic programming in  Section~\ref{sec:DP_formulation}.


\vspace{2mm}
\noindent
{\bf Challenges in Designing Algorithms for~\atar.}
Most prior work on two-sided assortment optimization focuses on maximizing the number of matches and leverages the submodularity of agents’ demand functions, defined as the probability of selecting an item from a given assortment. To the best of our knowledge, the literature has primarily studied a special case of \atar\ where \( r_{ij} = 1 \) for all \( i \in \mathcal{C} \) and \( j \in \mathcal{S} \). For this special case of~\atar, \citet{housni2024twosided} present a greedy \(\frac{1}{2}\)-approximation algorithm. At each step, their algorithm selects a customer arbitrarily and offers her the assortment that maximizes the marginal increase in suppliers’ demand functions. Crucially, their primal-dual analysis exploits the submodularity and monotonicity of demand functions to prove that the expected number of matches for the greedy algorithm is at least half of an LP relaxation. In online assortment optimization for two-sided matching platforms, \citet{aouad2023}, leveraging the submodularity and monotonicity of each supplier’s demand function, proposed a greedy algorithm that achieves a constant-factor competitive ratio of \(1/2\) in their setting.

In contrast, our setting introduces general pairwise revenues, which break the submodularity of the revenue function associated with each agent and pose a fundamental obstacle, as this precludes the direct application of classical greedy or myopic algorithms that rely on the diminishing returns property. As a result, achieving the same constant-factor guarantee (i.e., $\frac{1}{2}$-approximation) in the~\atar~setting presents significant challenges. The goal of this paper is to address these challenges and design polynomial time algorithms with constant theoretical guarantees for~\atar.



\subsection{Main Results}
In this paper, we develop two polynomial-time constant-factor approximation algorithms for~\atar. Unlike most prior work that focuses on maximizing the number of matches, we directly optimize expected revenue, allowing heterogeneous, nonnegative revenues $r_{ij}$ for each customer-supplier pair $(i,j) \in \mathcal{C} \times \mathcal{S}$. Our main contributions are as follows.

\vspace{2mm}
\noindent
{\bf Algorithm for~\atar~with General Revenues.} In the general setting with heterogeneous nonnegative revenues $\{r_{ij}\}_{i\in\mathcal{C}, j\in\mathcal{S}}$, we design a randomized polynomial-time algorithm that achieves a \((\frac{1}{2} - \epsilon)\)-approximation guarantee for~\atar\  for any $\epsilon>0$. Furthermore, we show in the special case where the revenues are uniform on each responding-side agent, meaning that for each supplier \(j\in\mathcal{S}\) there is a nonnegative \(r_j\) such that \(r_{ij}=r_j\) for all \(i\in\mathcal{C}\), our randomized algorithm guarantees a $(1 - \frac{1}{e} - \epsilon)$-approximation for any $\epsilon>0$. Our first main result is given in the following theorem.


\begin{theorem}
\label{thm:general_revenues}
For any $\epsilon > 0$, there exists a randomized static algorithm achieving a \((\frac{1}{2} - \epsilon)\)-approximation for~\atar~in the general case with heterogeneous revenues, and a \((1 - \frac{1}{e} - \epsilon)\)-approximation when revenues are uniform across all pairs involving any given supplier.
\end{theorem}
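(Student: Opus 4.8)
The plan is to sandwich $\mathsf{OPT}$ between a linear programming relaxation from above and a non-adaptive independent-sampling algorithm from below, with the gap controlled by a correlation gap argument applied separately to each supplier's revenue function $g_j$.

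First I would set up the relaxation. The variables are $x_{i,S}\ge 0$, interpreted as the probability that customer $i$ is offered assortment $S\subseteq\mathcal{S}$, with $\sum_S x_{i,S}=1$; these induce per-pair marginals $y_{ij}=\sum_S x_{i,S}\,\phi_i(j,S)$, the probability that $i$ lands in the backlog of $j$. The objective is $\sum_j g_j^{+}(y_{\cdot j})$, where $g_j^{+}$ is the concave closure of $g_j$, i.e. the maximum of $\EE[g_j(C)]$ over all distributions of a random set $C$ with the prescribed inclusion marginals. To certify $\mathsf{OPT}\le\mathrm{LP}$ for every processing scenario at once, I would observe that any policy — adaptive order, static, or fixed order — offers each customer $i$ a (history-marginalized) random assortment, so its induced selection marginals $y^{*}_{ij}$ lie in the convex hull of $\{\phi_i(\cdot,S)\}_S$ and are therefore LP-feasible; moreover, since each customer enters at most one backlog, the policy's value decomposes as $\sum_j \EE[g_j(C_j)]$, and for each $j$ the concave closure dominates $\EE[g_j(C_j)]$ by definition. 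This is precisely why a single static algorithm will give guarantees uniformly across arrival settings.

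Since both the assortment variables and the concave-closure witness distributions are exponential in size, the LP cannot be written explicitly, so I would run the Ellipsoid method on the dual. The dual separation oracle decomposes into per-customer subproblems of the form $\max_{S\subseteq\mathcal{S}}\sum_j c_{ij}\,\phi_i(j,S)$ — an MNL assortment-optimization problem with coefficients $c_{ij}$ read off the current dual iterate — together with per-supplier subproblems of choosing the backlog subset maximizing a $g_j$-based expression. Each admits an FPTAS, so approximate separation yields a $(1-\epsilon')$-approximate LP solution supported on polynomially many $x_{i,S}$, from which a samplable assortment distribution is extracted; this is where the $\epsilon$ in the statement originates.

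The static algorithm then samples, independently across customers, an assortment $S_i$ from the extracted distribution, after which each supplier is offered its optimal subset of the realized backlog, collecting $\sum_j \EE[g_j(C_j)]$. The key point is that for a fixed supplier $j$ the indicators $\{\mathbf{1}[i\in C_j]\}_i$ are independent with marginals $y_{ij}$, so $C_j$ follows a product distribution and $\EE[g_j(C_j)]$ equals the independent-distribution value at marginals $y_{\cdot j}$. Applying our correlation gap bound to $g_j$ — which is monotone and, by the structure of the MNL revenue, a submodular-order function — gives $\EE[g_j(C_j)]\ge\tfrac12\,g_j^{+}(y_{\cdot j})$, and summing over $j$ yields $\mathrm{ALG}\ge(\tfrac12-\epsilon)\,\mathsf{OPT}$. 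In the uniform case $r_{ij}=r_j$, the function $g_j$ is a scaling of the MNL demand function, hence genuinely monotone submodular, so the classical $1-\tfrac1e$ correlation gap applies and upgrades the guarantee to $1-\tfrac1e-\epsilon$. The hard part will be the correlation gap estimate itself: showing that a monotone submodular-order $g_j$ loses at most a factor $\tfrac12$ when its arguments are decorrelated, since $g_j$ fails to be submodular and the standard $1-\tfrac1e$ machinery does not apply; a secondary hurdle is reconciling the FPTAS-based separation with the Ellipsoid method so that the approximate dual certificate still yields a feasible, roundable primal at only an $\epsilon$ cost.
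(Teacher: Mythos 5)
Your proposal follows the same architecture as the paper's proof: an LP relaxation whose value dominates $\mathsf{OPT}$ under any processing order, an Ellipsoid-based approximate solver, independent sampling of one assortment per customer, and a per-supplier correlation gap bound ($2$ for general revenues via the submodular-order structure, $\tfrac{e}{e-1}$ in the uniform case). The genuine gap is in your separation oracle. Your LP carries the concave closure $g_j^{+}$ in its objective; written out with distribution variables it is exactly \eqref{eq:relaxation_onesided_customers}, and the supplier constraints of its dual read $\beta_j+\sum_{i\in C}\gamma_{ij}\ge g_j(C)$ for all $C\subseteq\mathcal{C}$. Separation therefore requires (approximately) solving $\max_{C\subseteq\mathcal{C}}\bigl\{g_j(C)-\sum_{i\in C}\gamma_{ij}\bigr\}$, where $g_j(C)=\max_{C'\subseteq C}R_j(C')$ is itself a maximization. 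You assert this ``admits an FPTAS,'' but the results you would invoke (those of \citet{kunnumkal2010}, \citet{lu2023fairassortments}, and \citet{chen2025fairassortment}) are all for assortment optimization with fixed costs, i.e., objectives of the form $R_j(C)-\sum_{i\in C}c_i$; none applies verbatim to the nested $g_j$-based objective. This is exactly the obstruction the paper's Section~\ref{sec:1/2-epsilon-approx:LP-relxation} is built to remove: it passes to \eqref{eq:problem:primal}, whose objective uses $R_j$, proves via the active-set argument (Claims~\ref{claim:relaxation_OA_C_with_x,g} and~\ref{claim:1-LP-optimal-solution}) that nothing is lost, and then the dual separation is literally \eqref{eq:subdual}, to which the FPTAS of \citet{chen2025fairassortment} applies. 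Your route is repairable without that detour, but the repair is a real step you must supply: for a fixed inner set $C'$, the outer set should absorb exactly the customers with $\gamma_{ij}<0$, which gives $\max_{C}\bigl\{g_j(C)-\sum_{i\in C}\gamma_{ij}\bigr\}=\sum_{i\in\mathcal{C}}\max(-\gamma_{ij},0)+\max_{C'\subseteq\mathcal{C}}\bigl\{R_j(C')-\sum_{i\in C'}\max(\gamma_{ij},0)\bigr\}$; both terms are nonnegative, so an FPTAS for the clipped-cost problem yields one for yours.

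The two components you explicitly defer are precisely the paper's supporting lemmas, so your plan is consistent with the paper's proof but not yet a proof of the theorem. Converting an approximately feasible dual point from the Ellipsoid run into a primal feasible $(1-\delta)$-approximate solution is Lemma~\ref{lemma:approx} (collect the violated dual constraints, then solve the auxiliary primal restricted to the corresponding variables); in your formulation this machinery must handle two exponential constraint families rather than one, although the customer-side family is exactly separable under MNL (items with negative coefficients never belong to an optimal assortment), so the argument survives. The factor-$2$ correlation gap for monotone submodular-order functions is Lemma~\ref{lemma:MNL 1/2 coerrelation gap}, proved via a cross-monotonic, budget-balanced cost-sharing scheme built from marginal gains taken in descending revenue order, combined with Lemma~\ref{lemma:Agrawal-correlation-gap}; your uniform-revenue claim (submodularity, hence $\tfrac{e}{e-1}$) matches the paper's. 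One thing your formulation does buy: since you keep explicit assortment-distribution variables $x_{i,S}$, the sampling step is immediate, and you do not need the paper's Lemma~\ref{lemma:MNL-polyhedron-distribution}, the Topaloglu-style decomposition of the marginals $x_{ij}$ into a nested-assortment distribution.
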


To prove Theorem~\ref{thm:general_revenues}, we introduce a novel linear programming (LP) relaxation of~\atar, using exponentially many variables. We show that the dual of this formulation admits a fully polynomial-time approximation scheme (FPTAS) for its separation oracle, allowing us to approximately solve the LP relaxation using the Ellipsoid method. We then derive an upper bound on the correlation gap of nonnegative, monotone, and {\em submodular order} functions, a class of functions that appear in the objective of~\atar. By leveraging the approximate LP solution and the correlation gap bound on the \emph{optimal revenue function} of each supplier, our randomized static algorithm offers a random subset of suppliers to each customer non-adaptively, and we show this gives a \((\frac{1}{2} - \epsilon)\)-approximation guarantee for~\atar. Our analysis and proof of Theorem~\ref{thm:general_revenues} are presented in Section~\ref{sec:1/2-epsilon-approx}.

Our guarantee in Theorem \ref{thm:general_revenues} improves on the previous \(\tfrac13\)-approximation for the customized setting in the one‐sided static problem with pairwise revenues presented by \citet{danny2024}. Furthermore, in the special case of uniform revenues for each supplier, which includes the matching maximization problem in this context, our
\(\bigl(1 - \tfrac1e - \epsilon\bigr)\)-approximation, improves the \(\tfrac12\)-approximation guarantee proposed by \citet{housni2024twosided} for the case of MNL choice models.

\vspace{2mm}
\noindent
{\bf Algorithm for~\atar~with Same-Order Revenues.} We strengthen our result to an exact \(\frac{1}{2}\)-approximation under an additional widely used structural assumption on the revenues. Specifically, we consider a same-order condition on pair revenues which means  that all suppliers rank customers in the same revenue order, i.e., under this condition there exists a permutation $\sigma_{\mathcal{C}}$ of customers $\mathcal{C}$ such that $r_{\sigma_{\mathcal{C}}(1),j}\ge r_{\sigma_{\mathcal{C}}(2),j}\ge \cdots\ge r_{\sigma_{\mathcal{C}}(n),j}$ for every $j\in\mathcal{S}$. For example, any revenue structure of the forms \( r_{ij} = r_i + r_j \), \( r_{ij} = r_i \cdot r_j \), \( r_{ij} = r_i \), \( r_{ij} = r_j \), \( r_{ij} = \max\{r_i, r_j\} \), or \( r_{ij} = \min\{r_i, r_j\}\) for some \( \{r_i\}_{i \in \mathcal{C}} \) and \( \{r_j\}_{j \in \mathcal{S}} \), satisfies the same-order condition. Our second main result is given in the following theorem.


\begin{theorem}
\label{thm:separable_revenues}
Suppose all suppliers rank customers in the same order with respect to the revenue on each edge of the bipartite graph, i.e., there is a permutation $\sigma_{\mathcal{C}}$ of customers $\mathcal{C}$ such that $r_{\sigma_{\mathcal{C}}(1),j}\ge r_{\sigma_{\mathcal{C}}(2),j}\ge \cdots\ge r_{\sigma_{\mathcal{C}}(n),j}$ for every $j\in\mathcal{S}$. Then, there exists an adaptive algorithm that achieves a \(\frac{1}{2}\)-approximation of~\atar.
\end{theorem}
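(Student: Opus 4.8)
The plan is to establish Theorem~\ref{thm:separable_revenues} through a primal-dual argument that mirrors, at a high level, the greedy analysis of \citet{housni2024twosided} for the unweighted case, but which replaces their use of submodularity with the weaker \emph{submodular order} property satisfied by the optimal revenue functions $g_j$ under the same-order condition. The first and most important structural step is to prove that, when all suppliers rank customers in a common revenue order $\sigma_{\mathcal{C}}$, each function $g_j$ is a nonnegative, monotone, submodular-order function with respect to this order; this is precisely the claim highlighted in the contributions, and it is what makes the myopic greedy choice tractable to analyze. Concretely, I would show that the maximizer $C' \subseteq C$ defining $g_j(C) = \max_{C'\subseteq C} R_j(C')$ has a nested/threshold structure consistent with $\sigma_{\mathcal{C}}$, so that adding customers in the prescribed order yields marginal gains that behave submodularly along that order even though $g_j$ is globally non-submodular.

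Given this structural lemma, I would define the adaptive greedy algorithm: processing customers in the order $\sigma_{\mathcal{C}}$ (or whatever order the analysis requires), at each step the algorithm selects the remaining customer and offers the assortment $S_i \subseteq \mathcal{S}$ that maximizes the expected marginal increase in $\sum_{j} g_j(C_j)$ given the current backlogs. The second step is to write down the LP relaxation~\eqref{eq:relaxation_onesided_customers} that serves as the benchmark, together with its dual, assigning dual variables to the customer-capacity constraints and to the per-supplier backlog constraints. I would then set the dual variables using the greedy algorithm's realized marginal gains — the standard device in this kind of argument — so that dual feasibility can be verified using monotonicity and the submodular-order inequality in place of ordinary submodularity.

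The heart of the proof, and the step I expect to be the main obstacle, is the \textbf{dual feasibility verification}. In the classical submodular analysis, feasibility of each dual constraint reduces to the inequality $g_j(C \cup \{i\}) - g_j(C) \ge g_j(C' \cup \{i\}) - g_j(C')$ for $C \subseteq C'$, i.e. diminishing marginal returns; here that inequality fails in general, and I must instead invoke the submodular-order version, which only guarantees the marginal comparison when the elements are added in the consistent order $\sigma_{\mathcal{C}}$. The delicacy is that the adaptive realization of backlogs is random and need not respect $\sigma_{\mathcal{C}}$ a priori, so I would need to argue that processing customers in the common revenue order forces every realized backlog $C_j^t$ to be an order-consistent prefix-type set, thereby making the submodular-order inequality applicable exactly where the dual constraint requires it. Establishing that the greedy order and the revenue order can be aligned so that the weaker inequality suffices is the crux.

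Once dual feasibility holds, the remainder is routine: by weak \LP duality the constructed dual solution upper-bounds $\opt$ via~\eqref{eq:relaxation_onesided_customers}, and by the choice of dual variables the greedy algorithm's expected revenue equals (up to the factor-$\tfrac12$ loss coming from the standard primal-dual accounting, where each unit of gain is charged against two dual terms) at least half of the dual objective. Combining these two bounds yields $\EE[\alg] \ge \tfrac12 \cdot \opt$, which is the claimed $\tfrac12$-approximation. I would close by noting that the algorithm is deterministic in its assortment choices and adaptive, consistent with the theorem statement, and that the worked examples $r_{ij} = r_i + r_j$, $r_i \cdot r_j$, $r_i$, or $r_j$ all satisfy the same-order hypothesis and hence fall under this guarantee.
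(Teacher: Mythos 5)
Your overall architecture coincides with the paper's proof: establish that each $g_j$ is monotone and submodular order with respect to the common revenue order (the paper imports this from Lemma~\ref{lemma:udwani-MNL-SO}, due to \citet{udwani2024submodular}); run a greedy that processes customers in that order and offers each customer $i_t$ an assortment maximizing $\sum_{j\in S} g_j(i_t\mid C_j^{t-1})\,\phi_{i_t}(j,S)$; set random dual variables for~\eqref{eq:dual_relaxation_onesided_customers} from realized marginals ($\tilde\alpha_{i_t}$ the realized gain, $\tilde\gamma_{i_t,j}=g_j(i_t\mid C_j^{t-1})$, $\tilde\beta_j=g_j(C_j^n)$); note that the dual objective is pathwise exactly twice the algorithm's revenue; and conclude by weak duality against~\eqref{eq:relaxation_onesided_customers}. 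Your factor-$\tfrac12$ accounting and your plan for the second family of dual constraints (via the greedy choice) are also the paper's.

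However, there is a genuine gap at exactly the step you flag as the crux. The first family of dual constraints must hold for \emph{every} $C\subseteq\mathcal{C}$, i.e., $\EE[\tilde\beta_j]+\sum_{i\in C}\EE[\tilde\gamma_{i,j}]\ge g_j(C)$, including sets $C$ whose elements interleave arbitrarily with the final backlog $C_j^n$ in the common order. Your proposed resolution --- that processing in the common order ``forces every realized backlog $C_j^t$ to be an order-consistent prefix-type set'' --- is both inaccurate and insufficient. Backlogs are not prefixes of $\mathcal{C}$ (customers may choose other suppliers or the outside option); what is true, and automatic, is only that every element of $C_j^{t-1}$ precedes $i_t$ in the order, so the single-element comparison $g_j(i_t\mid B)\le g_j(i_t\mid C_j^{t-1})$ is valid whenever $C_j^{t-1}\subseteq B$ and $i_t$ succeeds $B$. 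But this alone does not bound $g_j(C)$ for an arbitrary $C$: one must decompose $g_j(C\cup C_j^n)$ into single-element marginals conditioned on sets that the order makes admissible, and verify that the marginals of the backlog elements telescope back to exactly $g_j(C_j^n)$. This is what the paper does via the interleaved-partitions inequality of \citet{udwani2024submodular} (Lemma~\ref{lem:interleaved-inequality}): taking $E_t=\{i_t\}$ if $i_t\in C_j^n$ and $O_t=\{i_t\}$ if $i_t\in C\setminus C_j^n$, it yields, pathwise,
\[
g_j(C)\;\le\; g_j\bigl(C\cup C_j^n\bigr)\;\le\; g_j\bigl(C_j^n\bigr)+\sum_{i_t\in C\setminus C_j^n} g_j\bigl(i_t\mid C_j^{t-1}\bigr),
\]
after which monotonicity and taking expectations finish dual feasibility. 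Equivalently, one can expand $g_j(C\cup C_j^n)$ telescopically along the common order and apply the submodular order inequality term by term, using that $C_j^{t-1}=C_j^n\cap\{i_1,\dots,i_{t-1}\}$ because processing follows the order. Either way, this interleaving argument is the missing ingredient; without it (or Udwani's lemma) your dual feasibility verification does not go through.
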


In order to prove Theorem~\ref{thm:separable_revenues}, we show that under the same-order condition the optimal revenue function for each supplier satisfy the submodular order property. Leveraging the submodular order property and a primal-dual analysis, we design an adaptive greedy algorithm that processes customers sequentially and adaptively offers each customer an assortment of suppliers based on the choices of previous customers, maximizing the marginal increase in expected revenue and providing a \(\frac{1}{2}\)-approximation for~\atar. Unlike the general-case algorithm, which is static and requires randomized assortment sampling, the greedy algorithm is deterministic, adaptive, and practically more efficient. It also leverages a completely different analysis from the general case algorithm. Our analysis and proof of Theorem \ref{thm:separable_revenues} are provided in Section \ref{sec:1/2-greedy-approx}.

\begin{remark}
In~\atar, the decision maker chooses the order in which agents are processed. Our randomized algorithm from Theorem~\ref{thm:general_revenues} is static and its performance guarantee holds against an LP upper bound of \atar. Hence, it offers the same guarantee even when the processing order is fixed or agents on each side are processed simultaneously, as illustrated by Figure~1. In all variants, customers are processed first, and the decision maker observes their choices before moving to suppliers.
({\textsf{ATAR}})  (Adaptive): the decision maker can adaptively choose the processing order of customers.  
({\textsf{FTAR}}) (Fixed-order): the processing order of customers is exogenously fixed, the decision maker still adaptively selects assortments as agents arrive.  
({\textsf{STAR}}) (Static): all customers are processed simultaneously.  
Note that in all variants, the processing order of suppliers does not affect the outcome because the supplier \emph{backlogs}, which are the sets of customers who choose each supplier after the platform finishes processing the customer side, are disjoint.
The values of~\eqref{eq:relaxation_onesided_customers} and~\eqref{eq:problem:primal} represent LP relaxation upper bounds that benchmark the performance guarantees of our adaptive algorithm (Theorem~\ref{thm:separable_revenues}) and randomized static algorithm (Theorem~\ref{thm:general_revenues}), respectively.
\end{remark}
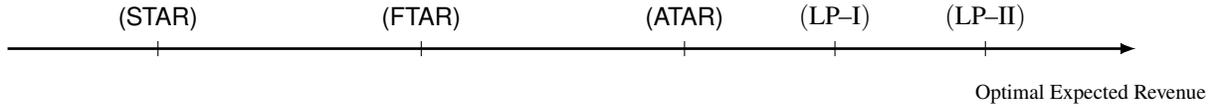
\begin{figure}[ht]\label{fig:arrival-order-comparision}
\centering
\begin{tikzpicture}[>=stealth, every node/.style={font=\small}]
  \draw[->, >=latex, line width=1pt] (0,0) -- (15,0);
  \draw (11,-0.1) -- (11,0.1) node[above,align=center,text width=3.2cm]{$(\textsc{LP--I})$};
  \draw (13,-0.1) -- (13,0.1) node[above,align=center,text width=3.2cm]{$(\textsc{LP--II})$};
\node[align=center] at (14.4,-0.6) {\scriptsize{Optimal Expected Revenue}};
  \foreach \x/\w in {
    2/{{\sf(STAR)}},
    5.5/{{\sf(FTAR)}},
    9/{\atar}
  }{
    \draw (\x,-0.1) -- (\x,0.1) node[above,align=center,text width=3.2cm]{\w};
  }
\end{tikzpicture}
\caption{
\noindent
\noindent
\noindent
Comparison of optimal expected revenue across variants of the Two-sided Assortment with Revenues problem under different processing order settings.
}
\end{figure}

\noindent
{\bf Numerical Experiments.} We numerically evaluate our static and greedy algorithms against the LP upper bounds for~\atar\ on synthetic data. We show that the empirical performance is better than the worst-case theoretical guarantee. Moreover, our results show that the greedy algorithm consistently outperforms the static algorithm even under general revenues, despite lacking any theoretical guarantee in that case, highlighting the power of adaptivity. Our numerical study is presented in Section \ref{sec:numerics}.

\subsection{Related Literature}

The literature on two-sided assortment optimization has primarily focused on algorithms for maximizing the number of successful matches. Revenue maximization is a more recent and structurally distinct objective. A separate line of work explores domain-specific applications with practical design constraints. We review representative contributions across these directions.

\vspace{2mm}
\noindent
{\bf Two-sided Assortment: Maximizing Number of Matches.}
 \citet{ashlagi2022assortment} appear to be the first to formally model the two-sided assortment optimization problem. They consider a static one-sided setting in which all customers are simultaneously presented assortments of suppliers, make their choices, and suppliers then observe which customers selected them.
A match occurs if both agents select each other, and the goal is to maximize the expected number of successful matches.  Assuming homogeneous MNL choice behavior, they prove the problem is strongly NP-hard and design the first constant-factor approximation algorithm using an LP relaxation and rounding approach. \citet{torrico2023multiagent} extend this static framework to allow for heterogeneous preferences and general choice models. They improve upon prior results by proposing a $(1 - \tfrac{1}{e})$-approximation algorithm using the continuous greedy method, as well as a simpler greedy algorithm achieving a $\tfrac{1}{2}$-approximation. In dynamic settings, \citet{aouad2023} study online assortment in labor markets with either adversarial or stochastic arrivals. In their model, customers arrive sequentially and supplier availability can vary over time. Under i.i.d. arrivals with full supplier availability, they provide a $(1 - \tfrac{1}{e})$-approximation guarantee.

\citet{housni2024twosided} unify several of these models under a broader taxonomy of four policy classes: (i) \emph{fully static}, where assortments for all agents are determined simultaneously; (ii) \emph{one-sided static}, where agents on one side are processed simultaneously and the other side responds; (iii) \emph{one-sided adaptive}, where one side is processed sequentially with adaptive assortments, followed by the other side; and (iv) \emph{fully adaptive}, where agents from both sides can be processed in any order with adaptive decisions. They analyze the performance gaps between these policies, referred to as \emph{adaptivity gaps}, showing that the gap between one-sided adaptive and one-sided static policies is exactly \((1 - \tfrac{1}{e})\), and the gap between fully adaptive and one-sided adaptive is \(\tfrac{1}{2}\). In contrast, fully static policies can perform arbitrarily poorly. For their one-sided adaptive setting, they propose a $\tfrac{1}{2}$-approximation greedy algorithm that selects customers adaptively and offers assortments maximizing the myopic marginal gain in expected matches at each step.

\vspace{2mm}
\noindent
{\bf Two-sided Assortment: Maximizing Expected Revenue.}  
A recent, smaller body of work investigates expected revenue maximization with general pairwise revenues. In particular, \citet{ahmed2022parameterized} consider {fully static} policies and show that revenue maximization is NP-hard, providing parameterized complexity results but no constant-factor approximation guarantees. More recently, in a concurrent work to ours, \citet{danny2024} study static one-sided policies with pair-specific revenues, where customers are processed simultaneously,  under two (response) models: a {customized model}, where the platform can personalize the set of selected customers shown to each supplier, and an {inclusive model}, where each supplier must see all customers who selected them. For the customized model, they propose a randomized LP-based algorithm achieving a $\tfrac{1}{3}$-approximation. For the inclusive model, they provide a $\left(\tfrac{10}{539} - \epsilon\right)$-approximation via a case-based analysis.
Our work tackles the general adaptive setting, in contrast to the static policies studied by \citet{danny2024}. As a byproduct of our analysis, one of our algorithms for the adaptive problem is static, and therefore achieves a $(\tfrac{1}{2} - \epsilon)$-approximation guarantee for the customized model of \citet{danny2024} as well, thereby improving the best known guarantee for that setting. Our techniques are also fundamentally different in nature.

\vspace{2mm}
\noindent
{\bf Two-sided Assortment: Domain-Specific Applications.}
Several studies examine two-sided assortment for specific applications. For example, \citet{rios2023} and \citet{rios2024} model dating platforms with fairness and visibility constraints, proposing personalized dynamic menus. \citet{aouad2023} also investigate labor markets, considering employer preferences and online worker presentation. \citet{shi2022} introduces endogenous pricing into two-sided platforms, characterizing equilibrium-based outcomes. These works incorporate important real-world platform constraints but generally remain focused on match count rather than revenue maximization. In contrast, our work is motivated by revenue-centric goals.

{\em Outline.} The rest of this paper is organized as follows. In Section~\ref{sec:preliminaries}, we provide some preliminaries on agent choice models and revenue functions. In Section~\ref{sec:1/2-epsilon-approx}, we establish Theorem~\ref{thm:general_revenues}. 
In Section~\ref{sec:1/2-greedy-approx}, we establish Theorem~\ref{thm:separable_revenues}. 
We conclude with our numerical experiments in Section~\ref{sec:numerics}. 
\vspace{4mm}
\section{Preliminaries and Notation}\label{sec:preliminaries}
In this section, we provide  preliminaries and notation related to our model. We begin with MNL choice models and then present an exact dynamic programming formulation of~\atar.

\subsection{The MNL Choice Models} 
We formalize customer and supplier preferences through {agent‑specific} MNL choice models.

\vspace{2mm}
\noindent
{\em Customer Choice Model.}
Each customer \( i \in \mathcal{C} \) has preferences over suppliers \( \mathcal{S} \cup \{0\} \), where \(\{0\}\) denotes the outside option, i.e., selecting no option from the offered assortment. The choice model is defined as a function \(\phi_i \colon \mathcal{S} \cup \{0\} \times 2^{\mathcal{S}} \to [0,1],\) where \(\phi_i(j, S)\) represents the probability that customer \( i \) selects supplier \( j \in S \cup \{0\} \) when presented with an assortment \( S \subseteq \mathcal{S} \). Under the MNL model, these probabilities are parameterized by preference weights \(\{u_{ij}\}_{j \in \mathcal{S} \cup \{0\}}\). Without loss of generality, we normalize the outside-option preference weight so that \(u_{i0} = 1\) for all $i\in\mathcal{C}$. For any set \( S \subseteq \mathcal{S} \), the selection probabilities for customer $i \in \mathcal{C}$ are:
\[
\phi_i(j, S) = 
\begin{cases} 
\frac{u_{ij}}{1 + \sum_{\ell \in S} u_{i\ell}}, & \text{if } j \in S, \\
\frac{1}{1 + \sum_{\ell \in S} u_{i\ell}}, & \text{if } j = 0, \\
0, & \text{otherwise}.
\end{cases}
\]



\vspace{2mm}
\noindent
{\em Supplier Choice Model.}
Symmetrically, each supplier \( j \in \mathcal{S} \) has preferences over customers \( \mathcal{C} \cup \{0\} \), modeled by \(\phi_j \colon \mathcal{C} \cup \{0\} \times 2^{\mathcal{C}} \to [0,1],\) with preference weights \(\{w_{ji}\}_{i \in \mathcal{C}\cup\{0\}}\) and \( w_{j0} = 1 \). For any set \( C \subseteq \mathcal{C} \), the selection probabilities for supplier $j \in \S$ are:
\[
\phi_j(i, C) = 
\begin{cases} 
\frac{w_{ji}}{1 + \sum_{k \in C} w_{jk}}, & \text{if } i \in C, \\
\frac{1}{1 + \sum_{k \in C} w_{jk}}, & \text{if } i = 0, \\
0, & \text{otherwise}.
\end{cases}
\]

The platform's expected revenue from offering any set of customers \( C \subseteq \mathcal{C} \), who initially chose supplier \( j \), to supplier \( j \) is given by
$$R_j(C)=\sum_{i\in C} r_{ij} \phi_j(i, C)=\frac{\sum_{i\in C} r_{ij}w_{ji}}{1+\sum_{i\in C} w_{ji}}.$$

For supplier $j\in \mathcal{S}$ and set of customers $C\subseteq \mathcal{C}$, we denote the \emph{optimal revenue function} by \( g_j(C) \) which  represents the maximum expected revenue attainable by any subset of $C$, i.e.,
$$g_j(C)=\underset{C'\subseteq C}{\max}\;  R_j(C').$$
Note that, in the special case where $r_{ij}=1$ for all $i\in\mathcal{C}$ and $j\in\mathcal{S}$, we have \(R_j(C)=g_j(C)
\ \text{for all}\,j\in\mathcal{S},\ C\subseteq\mathcal{C},\) which coincides with the demand function of supplier $j$, that is, the probability that supplier $j$ chooses any customer from the offered set  $C$.

\subsection{Dynamic Programming Formulation of~\atar}
\label{sec:DP_formulation}
We present a dynamic programming formulation of~\atar~that computes the two‐sided platform’s expected revenue under the optimal policy.

For all  $t\in \{0, \dots, n-1\}$, let \( \mathcal{C}^{t} \subseteq \mathcal{C} \) denote the set of customers remaining to be processed at the beginning of step \( t+1\), where  \( n \) is the total number of customers. For each supplier \( j \in \mathcal{S} \), define \( C_j^t \subseteq \mathcal{C} \) as the \emph{backlog} of supplier \( j \) at the beginning of step \( t+1 \), representing the set of customers who have selected supplier \( j \) thus far. Let \( \{C_j^t\}_{j \in \mathcal{S}} \) denote all supplier backlogs at time \( t \). Define the value-to-go function \( V_t(\mathcal{C}^t, \{C_j^t\}_{j \in \mathcal{S}}) \) as the maximum expected revenue achievable from step \( t+1 \) onward under adaptive policies. 
In particular, \( V_0(\mathcal{C}, \{\emptyset\}_{j \in \mathcal{S}}) \) represents the maximum expected revenue at the beginning.
The dynamic programming of~\atar~is describe as follows.

For all  $t\in \{0, \dots, n-1\},\ \mathcal{C}^t\subseteq \mathcal{C},\ j\in\mathcal{S},\ C_j^t\subseteq\mathcal{C}$, the Bellman equation is given by
\begin{align}
V_t(\mathcal{C}^t, \{C_j^t\}_{j \in \mathcal{S}}) = \max_{i \in \mathcal{C}^t,\, S_i \subseteq \mathcal{S}} \left\{ \sum_{\ell \in S_i \cup \{0\}} \phi_i(\ell, S_i) \, V_{t+1}\left(\mathcal{C}^t \setminus \{i\}, \{C_j^{t+1}\}_{j \in \mathcal{S}} \right) \right\},\notag
\end{align}
where for all $j \in \mathcal{S}$, the updated backlogs \( \{C_j^{t+1}\}_{j \in \mathcal{S}} \) in this equation satisfy $C_j^{t+1} = C_j^t \cup \{i\}$ if $j = \ell$, and $C_j^{t+1} = C_j^t$ otherwise.
Also, the boundary condition for all disjoint final collection of backlogs $\{C_j^n\}_{j\in \mathcal{S}}$ is
\begin{equation}
\begin{aligned}
&\quad\quad V_n(\emptyset, \{C_j^n\}_{j \in \mathcal{S}}) = \sum_{j \in \mathcal{S}} g_j( C_j^n ).\notag
\end{aligned}
\end{equation}

Finally, let \( \mathsf{OPT} \) denote the optimal expected revenue for~\atar, which is the value-to-go function evaluated at the initial state. We have
\begin{align}
\mathsf{OPT} = V_0(\mathcal{C}, {\emptyset}_{j \in \mathcal{S}}). \label{eq:C-OA_DP}
\tag{\sf{ATAR}}
\end{align}

The Bellman equation in the dynamic program~\eqref{eq:C-OA_DP} corresponds to each adaptive step \( t \in \{0, \dots, n-1\} \), where the platform selects a customer \( i \in \mathcal{C}^t \) to process and offers an assortment \( S_i \subseteq \mathcal{S} \) of suppliers. The expression inside the maximization captures expected future revenue: if customer \( i \) chooses supplier \( \ell \in S_i \), the backlog \( C_\ell^t \) is updated; if the outside option \( 0 \) is chosen, no backlog updates. The boundary equation characterizes the final stage, where all customers have been processed adaptively, and the platform processes suppliers adaptively. The platform accumulates revenue based on supplier backlogs, which form disjoint subsets of customers since each customer can appear in at most one supplier's backlog. Thus, \( V_n(\emptyset, \{C_j^n\}_{j \in \mathcal{S}}) = \sum_{j \in \mathcal{S}} g_j(C_j^n) \). Solving~\eqref{eq:C-OA_DP} from \( \mathcal{C}^0 = \mathcal{C} \) and \( C_j^0 = \emptyset \) for all \( j \), yields the optimal expected revenue \( \mathsf{OPT} = V_0(\mathcal{C}, \{\emptyset\}_{j \in \mathcal{S}}) \).

While this dynamic program provides an exact formulation of~\atar, solving it is intractable due to the high-dimensional state space and exponential size decision tree. A natural approach to address this challenge is designing approximation algorithms. In our setting with general pairwise revenues, the optimal revenue function \( g_j \), for all \( j \in \mathcal{S} \), is monotone but not submodular (see Appendix~\ref{Appendix:proof:non-submodularity-g_j}), which presents new foundational challenges, as it limits the use of existing algorithmic frameworks that rely on the submodularity of objective functions.


\vspace{4mm}
\section{\((\frac{1}{2}-\epsilon)\)-Approximation Algorithm for \atar}
\label{sec:1/2-epsilon-approx}
In this section, we present a randomized algorithm that gives a \((\frac{1}{2} - \epsilon)\)-approximation to \atar{} for any $\epsilon>0$. Furthermore, in the special case when the revenues associated with edges connected to the same supplier are all equal, our algorithm gives a \(\bigl(1 - \tfrac1e - \epsilon\bigr)\)-approximation to \atar. Thereby, establishing Theorem~\ref{thm:general_revenues}.


Our approach consists of three main steps. In Section~\ref{sec:1/2-epsilon-approx:LP-relxation}, we introduce a novel linear programming relaxation of~\atar. Despite the exponential number of variables in the LP relaxation, we show how to compute a near-optimal solution efficiently by combining an FPTAS for the separation oracle of its dual problem with the Ellipsoid method. In Section~\ref{sec:1/2-epsilon-approx:correlation-gap}, we establish a correlation gap bound for the optimal expected revenue function $g_j$ for all $j\in\mathcal{S}$, which quantifies the loss from using independent random assortments versus a correlated LP solution with the same marginals. Finally, in Section~\ref{sec:1/2-epsilon-approx:algorithm}, we leverage the LP solution to design a randomized static algorithm that satisfies the desired approximation guarantee.

In this work, we slightly abuse notation and use the same terms to denote both mathematical programs and their corresponding optimal values throughout the remainder of the paper.

\subsection{LP Relaxations for \atar}\label{sec:1/2-epsilon-approx:LP-relxation}
We begin by considering the following LP relaxation for~\atar.
\begin{align}
\begin{aligned}
\max \quad & \sum_{j\in \S}\sum_{C\subseteq \mathcal{C}}g_j(C)\cdot \lambda_{j,C}&\\
\text{s.t.}\quad & \sum_{C\subseteq \mathcal{C}}\lambda_{j,C} = 1,  &\forall j\in \mathcal{S},\notag\\
& \sum_{C\subseteq\mathcal{C}: C\ni i} \lambda_{j,C} = \sum_{S\subseteq \mathcal{S}: S\ni j}\tau_{i,S}\cdot\phi_i(j,S), &\forall i\in \mathcal{C}, \ j\in \mathcal{S},\notag\\
& \sum_{S\subseteq \mathcal{S}} \tau_{i,S} =1, &\forall i\in \mathcal{C},\notag \\
& \lambda_{j,C}, \ \tau_{i,S}\geq 0,  &\forall j\in \mathcal{S}, \ C\subseteq \mathcal{C}, \ i\in \mathcal{C}, \ S\subseteq \mathcal{S}. \notag
\end{aligned}\label{eq:relaxation_onesided_customers}\tag{\textsc{LP--I}}
\end{align}
This formulation is an adaptation of the LP relaxation introduced by \citet{housni2024twosided} for their \emph{one-sided adaptive} problem, where the goal is to maximize the expected number of matches in that setting. In their LP relaxation, for any supplier $j\in \mathcal{S}$, the objective function of the LP involves the demand function \( f_j \) defined for any set of customers \( C \subseteq \mathcal{C} \) as $f_j(C) = \sum_{i \in C} \phi_j(i, C)$, which represents the probability that supplier \( j \) selects any customer from the offered set \( C \subseteq \mathcal{C} \). In our setting, for all $j\in \mathcal{S}$, we replace the demand function \( f_j\) with the optimal revenue function \( g_j\), where for  $C\subseteq \mathcal{C}$, the function $g_j(C)$ represents the maximum expected revenue of offering a subset of customers in $C$ to supplier~$j$. Notably, solving this LP in polynomial-time is challenging due to its exponentially many variables. 

The following lemma demonstrates that this formulation is indeed a relaxation of~\atar. Before stating the lemma, we interpret the variables of the LP above. The variable $\lambda_{j,C}$ corresponds to the probability that supplier $j$ is selected by exactly all customers in $C$, while $\tau_{i,S}$ denotes the probability of displaying assortment $S$ to customer $i$. Both $C$ and $S$ may take the empty set as a valid value. The first and third family of constraints are distribution constraints. The second family of constraints simply states that, for all \( i \in \mathcal{C} \) and \( j \in \mathcal{S} \), the probability that customer \( i \) sees and chooses supplier \( j \) is equal to the probability that \( i \) appears in the “backlog” of supplier \( j \), which is the set of customers who choose supplier \( j \) after the platform finishes processing the customer side.

\begin{lemma}\label{lemma:LPrelaxation-1}
\eqref{eq:relaxation_onesided_customers} is a relaxation of~\atar.
\end{lemma}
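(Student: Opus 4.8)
The plan is to exhibit a single feasible point of~\eqref{eq:relaxation_onesided_customers} whose objective value equals $\mathsf{OPT}$; since the LP is a maximization, this immediately yields that its optimal value is at least $\mathsf{OPT}$, i.e.\ that~\eqref{eq:relaxation_onesided_customers} is a relaxation of~\atar. First I would fix an optimal adaptive policy $\pi$ for the dynamic program~\eqref{eq:C-OA_DP}, and let all probabilities below be taken over the randomness of the agents' MNL choices together with any internal randomness of $\pi$. For each customer $i\in\mathcal{C}$ and assortment $S\subseteq\mathcal{S}$, I set $\tau_{i,S}$ to be the probability that $\pi$ offers $S$ to $i$ at the unique step in which $i$ is processed; for each supplier $j\in\mathcal{S}$ and set $C\subseteq\mathcal{C}$, I set $\lambda_{j,C}$ to be the probability that the realized final backlog $C_j^n$ of supplier $j$ equals $C$.

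Next I would verify feasibility. The distribution constraints $\sum_S\tau_{i,S}=1$ and $\sum_C\lambda_{j,C}=1$ hold because each customer is shown exactly one assortment and each supplier terminates with exactly one backlog $C_j^n\subseteq\mathcal{C}$, so in both cases the underlying events partition the probability space. The crux is the coupling constraint $\sum_{C\ni i}\lambda_{j,C}=\sum_{S\ni j}\tau_{i,S}\,\phi_i(j,S)$. Its left-hand side is exactly the probability that $i$ lies in the backlog of $j$, i.e.\ that $i$ selects $j$. For the right-hand side, I would observe that the assortment offered to $i$ is a (possibly random) function of the history of choices made \emph{before} $i$ is processed, and is therefore independent of $i$'s own choice given that assortment; conditioned on being shown $S$, customer $i$ selects $j$ with probability $\phi_i(j,S)$ by the MNL model. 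By the tower property, $\PP(i\text{ selects }j)=\sum_{S}\tau_{i,S}\,\phi_i(j,S)=\sum_{S\ni j}\tau_{i,S}\,\phi_i(j,S)$, where the last equality uses $\phi_i(j,S)=0$ whenever $j\notin S$. This matches the left-hand side, so the constraint holds.

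Finally I would match the objective. Since the supplier backlogs $\{C_j^n\}_{j\in\mathcal{S}}$ are disjoint, the boundary condition of~\eqref{eq:C-OA_DP} gives $V_n(\emptyset,\{C_j^n\}_{j\in\mathcal{S}})=\sum_{j}g_j(C_j^n)$, and hence $\mathsf{OPT}=\EE_\pi\!\big[\sum_{j}g_j(C_j^n)\big]=\sum_{j}\sum_{C}g_j(C)\,\lambda_{j,C}$, which is precisely the LP objective evaluated at $(\lambda,\tau)$. Thus the constructed point is feasible and attains value $\mathsf{OPT}$, so the optimum of~\eqref{eq:relaxation_onesided_customers} is at least $\mathsf{OPT}$.

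I expect the main obstacle to be a clean justification of the coupling constraint: one must argue rigorously that marginalizing an adaptive policy---in which both the offered assortment and the very identity of the customer processed at each step depend on the past choices---still yields $\PP(i\text{ selects }j)=\sum_{S\ni j}\tau_{i,S}\,\phi_i(j,S)$. The point to make precise is the conditional independence of $i$'s choice from the history given the offered assortment, which is exactly the defining property of the agent-specific MNL model; everything else reduces to the tower property and the disjointness of backlogs.
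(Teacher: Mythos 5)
Your proposal is correct and follows essentially the same route as the paper's proof: both construct a feasible point of~\eqref{eq:relaxation_onesided_customers} from an optimal policy by setting $\tau_{i,S}$ to the probability that $i$ is offered $S$ and $\lambda_{j,C}$ to the probability that supplier $j$'s final backlog is $C$, verify the coupling constraint via the conditional independence of $i$'s MNL choice from the history given the offered assortment, and match the objective using the disjointness of backlogs. The only difference is presentational: the paper carries out the same argument through explicit sample-path summations over the policy's decision tree, whereas you phrase it with the tower property, which is the cleaner formulation of the identical idea.
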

The proof of Lemma \ref{lemma:LPrelaxation-1} is similar to that of Lemma~2 in \citet{housni2024twosided}; for completeness, we include it in Appendix~\ref{proof:sec:1/2-epsilon-approx:LP-relxation}. Note that~\eqref{eq:relaxation_onesided_customers} is a relaxation of \atar\ that is valid under broad class of discrete choice models. Under the MNL model, we leverage its structural properties to further relax \eqref{eq:relaxation_onesided_customers} and develop a new novel LP relaxation of \atar\,  which plays a central role in our analysis.

\begin{align}
\begin{aligned}
\max  \quad & \sum_{j\in \mathcal{S}}\sum_{C\subseteq \mathcal{C}}R_j(C)\cdot \lambda_{j,C}&\\
\text{s.t.}\quad & \sum_{C\subseteq \mathcal{C}}\lambda_{j,C} = 1,  &\forall j\in \mathcal{S},\notag\\
& \sum_{C\subseteq\mathcal{C}: C\ni i} \lambda_{j,C} =  x_{ij}, &\forall i\in \mathcal{C}, \ j\in \mathcal{S},\notag\\
& \frac{x_{ij}}{u_{ij}} + \sum_{\ell \in \mathcal{S}} x_{i\ell} \leq 1, &\forall i\in \mathcal{C}, \ j\in \mathcal{S},\notag\\
& \lambda_{j,C}, \ x_{ij}\geq 0,  &\forall j\in \mathcal{S}, \ C\subseteq \mathcal{C}, \ i\in \mathcal{C}. \notag
\end{aligned}\label{eq:problem:primal}\tag{\textsc{LP--II}}
\end{align}

The LP relaxations~\eqref{eq:relaxation_onesided_customers} and~\eqref{eq:problem:primal} differ in two key aspects that will be useful in our analysis. One difference is that the objective function of \eqref{eq:problem:primal} uses the expected revenue function \( R_j \) instead of the optimal revenue function \( g_j \) for all \( j \in \mathcal{S} \). Another difference lies in the second and third families of constraints, which differ between the two LPs because we leverage structural properties of the MNL model to design the constraints in \eqref{eq:problem:primal}. Although the two formulations are not equivalent, we show that \eqref{eq:problem:primal} is a relaxation of \eqref{eq:relaxation_onesided_customers} under MNL, and therefore \eqref{eq:problem:primal} is also a relaxation of \atar. In particular, we have the following lemma.

\begin{lemma}
\label{lemma:equivalent-LPs}
\eqref{eq:problem:primal} is a relaxation of~\atar. 

\end{lemma}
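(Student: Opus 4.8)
The plan is to combine Lemma~\ref{lemma:LPrelaxation-1} with a direct comparison between the two relaxations. Since Lemma~\ref{lemma:LPrelaxation-1} already shows that the optimal value of \eqref{eq:relaxation_onesided_customers} upper bounds $\mathsf{OPT}$, it suffices to prove that the optimal value of \eqref{eq:problem:primal} is at least that of \eqref{eq:relaxation_onesided_customers}. I would establish this by taking an arbitrary feasible solution $(\lambda,\tau)$ of \eqref{eq:relaxation_onesided_customers} and constructing from it a feasible solution of \eqref{eq:problem:primal} whose objective equals the \eqref{eq:relaxation_onesided_customers}-objective of $(\lambda,\tau)$. Applying this to an optimal solution then yields $\eqref{eq:problem:primal}\ge\eqref{eq:relaxation_onesided_customers}\ge\mathsf{OPT}$, proving the lemma.

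The construction must reconcile the two differences between the formulations: \eqref{eq:problem:primal} uses $R_j$ in place of $g_j$, and it has a different third constraint family. For the objective, note that for every supplier $j$ and every $C\subseteq\mathcal{C}$ there is an optimal subset $C^\star(j,C)\in\argmax_{C'\subseteq C}R_j(C')$ with $R_j(C^\star(j,C))=g_j(C)$. I would therefore \emph{shift mass onto these optimal subsets}, defining $\lambda'_{j,D}=\sum_{C:\,C^\star(j,C)=D}\lambda_{j,C}$ and $x'_{ij}=\sum_{D\ni i}\lambda'_{j,D}$. By construction $\sum_{D}\lambda'_{j,D}=\sum_C\lambda_{j,C}=1$, the second constraint of \eqref{eq:problem:primal} holds by the definition of $x'_{ij}$, and the new objective satisfies $\sum_{j,D}R_j(D)\lambda'_{j,D}=\sum_{j,C}R_j(C^\star(j,C))\lambda_{j,C}=\sum_{j,C}g_j(C)\lambda_{j,C}$, matching the original objective. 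Moreover, since $C^\star(j,C)\subseteq C$, the shifted marginals only shrink: $x'_{ij}\le x_{ij}:=\sum_{C\ni i}\lambda_{j,C}$ for every $i,j$.

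The remaining, and crux, step is to verify the MNL constraint $\tfrac{x'_{ij}}{u_{ij}}+\sum_{\ell}x'_{i\ell}\le 1$. Here I would invoke the second constraint of \eqref{eq:relaxation_onesided_customers}, giving $x_{ij}=\sum_{S\ni j}\tau_{i,S}\,\phi_i(j,S)=\sum_{S\ni j}\tau_{i,S}\tfrac{u_{ij}}{1+\sum_{\ell\in S}u_{i\ell}}$. Then $\tfrac{x_{ij}}{u_{ij}}=\sum_{S\ni j}\tau_{i,S}\tfrac{1}{1+\sum_{\ell\in S}u_{i\ell}}$, while swapping the order of summation yields $\sum_{\ell}x_{i\ell}=\sum_{S}\tau_{i,S}\tfrac{\sum_{\ell\in S}u_{i\ell}}{1+\sum_{\ell\in S}u_{i\ell}}$. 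Adding these, extending the first sum from $\{S\ni j\}$ to all $S$ (which only increases it), and using $\sum_S\tau_{i,S}=1$, the two fractions combine to $\sum_S\tau_{i,S}\tfrac{1+\sum_{\ell\in S}u_{i\ell}}{1+\sum_{\ell\in S}u_{i\ell}}=1$, so $\tfrac{x_{ij}}{u_{ij}}+\sum_{\ell}x_{i\ell}\le 1$. Because $x'\le x$ componentwise, the same bound holds for $x'$, so $(\lambda',x')$ is feasible for \eqref{eq:problem:primal} with the claimed objective. I expect this last MNL computation to be the part requiring the most care, since it is where the structural form of the logit choice probabilities is genuinely exploited; the mass-shifting step is the conceptual idea but is routine once the optimal subsets $C^\star(j,C)$ are fixed.
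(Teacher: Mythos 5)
Your proposal is correct, and its two core ingredients are exactly those of the paper's proof: the MNL marginal computation (your verification that $\tfrac{x_{ij}}{u_{ij}}+\sum_{\ell}x_{i\ell}\le 1$ is the content of the paper's Claim~\ref{claim:relaxation_OA_C_with_x,g}) and the mass-shifting onto optimal subsets together with the observation that subsets shrink, hence marginals shrink, hence the MNL constraints are preserved (the content of Claim~\ref{claim:1-LP-optimal-solution}). The difference is the decomposition. The paper factors the argument through an intermediate program $(\textsc{LP--III})$, which has the feasible set of \eqref{eq:problem:primal} but the $g_j$ objective: it first shows $(\textsc{LP--III})$ relaxes \eqref{eq:relaxation_onesided_customers}, then shows $(\textsc{LP--III})$ admits an optimal solution supported on active sets (where $g_j=R_j$), and concludes that \eqref{eq:problem:primal} and $(\textsc{LP--III})$ have equal optimal values. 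You instead compose the two transformations into a single construction mapping any feasible $(\lambda,\tau)$ of \eqref{eq:relaxation_onesided_customers} directly to a feasible $(\lambda',x')$ of \eqref{eq:problem:primal} with objective $\sum_{j,C}g_j(C)\lambda_{j,C}$, never introducing $(\textsc{LP--III})$ or the active-set machinery ($\mathcal{P}_j$, $\mathcal{M}_j$, and its fibers). Your route is shorter and proves exactly what the lemma asserts; the paper's route additionally establishes the stronger structural fact that \eqref{eq:problem:primal} and $(\textsc{LP--III})$ are equivalent reformulations (i.e., replacing $g_j$ by $R_j$ in the objective costs nothing over this feasible set), a fact the paper explicitly invokes in the discussion following the proof. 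One point to make explicit in a final write-up: you should fix a single-valued choice $C\mapsto C^\star(j,C)$ among the maximizers (e.g., by lexicographic tie-breaking, as the paper does) so that the fibers $\{C: C^\star(j,C)=D\}$ genuinely partition $2^{\mathcal{C}}$ and your $\lambda'$ is well defined.
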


\subsubsection{Proof of Lemma~\ref{lemma:equivalent-LPs}~\label{appendix:proof:lemma:equivalent-LPs}}
In order to prove Lemma~\ref{lemma:equivalent-LPs}, we introduce two auxiliary claims. 
First, denote by $(\textsc{LP--III})$ the linear program with the same feasible set as \eqref{eq:problem:primal}, but with the objective obtained by replacing each function \(R_j\) with \(g_j\) for all \(j \in \mathcal{S}\). In other words, the objective of $(\textsc{LP--III})$ is $\sum_{j \in \mathcal{S}} \sum_{C \subseteq \mathcal{C}} g_j(C) \cdot \lambda_{j,C}$.


\begin{claim}\label{claim:relaxation_OA_C_with_x,g}
\textup{$(\textsc{LP--III})$} is a relaxation for~\eqref{eq:relaxation_onesided_customers}.
\end{claim}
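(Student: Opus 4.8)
The plan is to show that every feasible point of \eqref{eq:relaxation_onesided_customers} can be turned into a feasible point of $(\textsc{LP--III})$ attaining the same objective value. Since both programs maximize the \emph{identical} expression $\sum_{j\in\mathcal{S}}\sum_{C\subseteq\mathcal{C}} g_j(C)\,\lambda_{j,C}$ in the shared variables $\lambda$, producing such a value-preserving map immediately yields $(\textsc{LP--III}) \ge \eqref{eq:relaxation_onesided_customers}$, i.e.\ the former is a relaxation of the latter. The variables $\tau_{i,S}$ appearing in \eqref{eq:relaxation_onesided_customers} are replaced in $(\textsc{LP--III})$ by the marginals $x_{ij}$, so the whole content of the claim is that the marginal vector induced by any distribution $\tau$ over assortments satisfies the $x$-constraints of $(\textsc{LP--III})$.

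Concretely, I would take an arbitrary feasible $(\lambda,\tau)$ of \eqref{eq:relaxation_onesided_customers} and define, for all $i\in\mathcal{C}$ and $j\in\mathcal{S}$,
\[
x_{ij} := \sum_{S\subseteq\mathcal{S}:\, S\ni j} \tau_{i,S}\,\phi_i(j,S),
\]
keeping the same $\lambda$. With this choice, the first constraint family of $(\textsc{LP--III})$ is verbatim that of \eqref{eq:relaxation_onesided_customers}; the second constraint family holds because the right-hand side of the second constraint of \eqref{eq:relaxation_onesided_customers} is precisely $x_{ij}$; and nonnegativity of $\lambda$ and $x$ is immediate. Because $\lambda$ is unchanged and both objectives use $g_j$, the objective value is exactly preserved. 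The only remaining point, and the one place the MNL structure is genuinely used, is the third constraint family $\frac{x_{ij}}{u_{ij}} + \sum_{\ell\in\mathcal{S}} x_{i\ell} \le 1$.

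To verify that inequality I would substitute the MNL probabilities. Writing $\phi_i(0,S) = \bigl(1 + \sum_{\ell\in S} u_{i\ell}\bigr)^{-1}$, one gets $\frac{x_{ij}}{u_{ij}} = \sum_{S\ni j}\tau_{i,S}\,\phi_i(0,S)$; and, after swapping the order of summation and using the identity $\sum_{\ell\in S}\phi_i(\ell,S) = \frac{\sum_{\ell\in S} u_{i\ell}}{1+\sum_{\ell\in S} u_{i\ell}} = 1 - \phi_i(0,S)$ together with $\sum_S \tau_{i,S} = 1$, one gets $\sum_{\ell\in\mathcal{S}} x_{i\ell} = 1 - \sum_S \tau_{i,S}\,\phi_i(0,S)$. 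Adding the two expressions, the terms over $\{S : S\ni j\}$ cancel, leaving
\[
\frac{x_{ij}}{u_{ij}} + \sum_{\ell\in\mathcal{S}} x_{i\ell} = 1 - \sum_{S\subseteq\mathcal{S}:\, S\not\ni j} \tau_{i,S}\,\phi_i(0,S) \le 1,
\]
since every summand is nonnegative. This confirms feasibility of $(\lambda,x)$ and completes the argument.

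I expect the bookkeeping in this last computation, in particular the cancellation that leaves a single nonnegative correction term $\sum_{S\not\ni j}\tau_{i,S}\phi_i(0,S)$, to be the only nontrivial step; everything else is a direct transcription of constraints and of the objective. Conceptually the inequality records the fact that MNL choice probabilities arising from any distribution $\tau$ over offered assortments automatically obey the polytope constraints of $(\textsc{LP--III})$, which is exactly why that program is a valid relaxation under MNL.
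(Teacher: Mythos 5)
Your proposal is correct and follows essentially the same route as the paper's proof: both map a feasible point $(\lambda,\tau)$ of \eqref{eq:relaxation_onesided_customers} to $(\lambda,x)$ with $x_{ij}=\sum_{S\ni j}\tau_{i,S}\,\phi_i(j,S)$, note that the objective and the first two constraint families carry over verbatim, and verify the MNL constraint via the identical algebra (the paper writes the slack as $\sum_{S}\tau_{i,S}(\mathds{1}_{\{j\in S\}}-1)\phi_i(0,S)\le 0$, which is exactly your $-\sum_{S\not\ni j}\tau_{i,S}\,\phi_i(0,S)$). Your use of an arbitrary feasible solution rather than the paper's optimal one is an immaterial difference.
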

\begin{myproof} 
Let $(\lambda^*, \tau^*)$ be an optimal solution to \eqref{eq:relaxation_onesided_customers}. For all  $  (i,j)\in \mathcal{C}\times\mathcal{S}$ denote
    \begin{align*}  x_{ij}^*= \sum_{S\subseteq \mathcal{S}: S\ni j} \tau_{i,S}^*\cdot\phi_i(j,S),
\end{align*}
 where $\phi_i(j,S)=\frac{\mathds{1}_{\{j\in S\}}u_{ij}}{1+\sum_{\ell\in S} u_{i\ell}}$. First, we show that $(\lambda^*, x^*)$ constitutes a feasible solution to $(\textsc{LP--III})$ by verifying that it satisfies all constraints. The first two families of constraints are satisfied due to the feasibility of $\lambda^*$ with respect to~\eqref{eq:relaxation_onesided_customers}, and the construction of $x$. We now show that the third constraint is also satisfied: for all $i \in \mathcal{C}$ and $j \in \mathcal{S}$, we have $x_{ij}^* \geq 0$,
and 
\begin{align*}
\frac{ x_{ij}^*}{u_{ij}}+\sum_{\ell\in\mathcal{S}} x_{i\ell}^*=&\frac{ x_{ij}^*}{u_{ij}}+\sum_{\ell\in\mathcal{S}}\sum_{S\subseteq\mathcal{S}: S\ni \ell}\tau_{i,S}^*\cdot\phi_i(\ell, S)=\frac{x_{ij}^*}{u_{ij}}+\sum_{\ell\in\mathcal{S}}\sum_{S\subseteq\mathcal{S}}\tau_{i,S}^*\cdot\phi_i(\ell ,S)\\
    =&\frac{x_{ij}^*}{u_{ij}}+\sum_{S\subseteq\mathcal{S}}\sum_{\ell\in\mathcal{S}}\tau_{i,S}^*\cdot\phi_i(\ell,S)=\frac{ x_{ij}^*}{u_{ij}}+\sum_{S\subseteq\mathcal{S}}\tau_{i,S}^*\sum_{\ell\in\mathcal{S}}\phi_i(\ell,S)\\
    =&\frac{ x_{ij}^*}{u_{ij}}+\sum_{S\subseteq\mathcal{S}}\tau_{i,S}^*\left(1-\phi_i(0,S)\right)=\sum_{S\subseteq\mathcal{S}: S\ni j}\tau_{i,S}^*\cdot\frac{\phi_i(j,S)}{u_{ij}}+\sum_{S\subseteq\mathcal{S}}\tau_{i,S}^*\left(1-\phi_i(0,S)\right)\\
=&\sum_{S\subseteq\mathcal{S}}\tau_{i,S}^*\left(\mathds{1}_{\{j\in S\}}\phi_i(0,S)\right)+\sum_{S\subseteq\mathcal{S}}\tau_{i,S}^*\left(1-\phi_i(0,S)\right)\\
=&\sum_{S\subseteq\mathcal{S}}\tau_{i,S}^*-\sum_{S\subseteq\mathcal{S}}\tau_{i,S}^*\cdot\phi_i(0,S)\left(1-\mathds{1}_{\{j\in S\}}\right)\leq 1,
\end{align*}
where the last equality holds since $\tau^*$ is a feasible solution to \eqref{eq:relaxation_onesided_customers}. Thus, any optimal solution $( \lambda^*, \tau^*)$ to \eqref{eq:relaxation_onesided_customers} forms a feasible solution $( \lambda^*,  x^*)$ to $(\textsc{LP--III})$ achieving the same objective value for both linear programs. Therefore, $(\textsc{LP--III})$ is a relaxation for~\eqref{eq:relaxation_onesided_customers}.
\end{myproof}

Before introducing the second claim, we define the notion of an \emph{active set}. For any supplier \( j \in \mathcal{S} \), we say that a set \( C \subseteq \mathcal{C} \) is an \emph{active set} for supplier \( j \) if \( g_j(C) = R_j(C) \). Let \( \mathcal{P}_j \) denote the collection of all active sets for supplier \( j \), that is,
\[
\mathcal{P}_j = \{ C \subseteq \mathcal{C} : g_j(C) = R_j(C) \}.
\]
Next, take a supplier \( j \in \mathcal{S} \), and consider any subset \( C \subseteq \mathcal{C} \). Since \( g_j(C) = \max_{C' \subseteq C} R_j(C') \), there exists at least one subset \( C^* \subseteq C \) such that \( g_j(C) = R_j(C^*) \). Although this maximizer may not be unique, we fix a lexicographic order over all subsets of \( \mathcal{C} \), and define a function $\mathcal{M}_j : 2^{\mathcal{C}} \to \mathcal{P}_j$ that assigns to each set \( C \subseteq \mathcal{C} \) the unique subset \( \mathcal{M}_j(C) \in \arg\max_{C' \subseteq C} R_j(C') \) with the smallest lexicographic rank. By construction, \( \mathcal{M}_j(C) \subseteq C \) and \( g_j(C) = R_j(\mathcal{M}_j(C)) = g_j(\mathcal{M}_j(C)) \), so \( \mathcal{M}_j(C) \in \mathcal{P}_j \).
We also define the inverse mapping \( \mathcal{M}_j^{-1}: \mathcal{P}_j \to 2^{\mathcal{C}} \) for each $j\in\mathcal{S}$ by
\begin{align}
    \mathcal{M}_j^{-1}(C) = \{ C' \subseteq \mathcal{C} : \mathcal{M}_j(C') = C \}, \quad \text{for any } C \in \mathcal{P}_j.\notag
\end{align}
Since \( \mathcal{M}_j \) is a function, the collection \( \{ \mathcal{M}_j^{-1}(C) \}_{C \in \mathcal{P}_j} \) forms a partition of \( 2^{\mathcal{C}} \) for each \( j \in \mathcal{S} \). Moreover, for any \( C \in \mathcal{P}_j \) and any \( C' \in \mathcal{M}_j^{-1}(C) \), we have \( g_j(C') = g_j(C) \).

Now, we consider the following claim.
\begin{claim}
\label{claim:1-LP-optimal-solution}
There exists an optimal solution \( (\hat{\lambda}, \hat{x}) \) to~\textup{$(\textsc{LP--III})$} satisfying \(\hat{\lambda}_{j,C} = 0\) for all \( C \notin \mathcal{P}_j \); that is, \(\hat{\lambda}_{j,C} = 0\) whenever \(C\) is not an active set for supplier \(j \in \mathcal{S}\).
\end{claim}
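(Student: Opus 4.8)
The plan is to start from an arbitrary optimal solution \((\lambda^*, x^*)\) of \((\textsc{LP--III})\) and transform it into an optimal solution supported only on active sets, without changing the objective value. The transformation is a mass-consolidation step driven by the map \(\mathcal{M}_j\): for every supplier \(j\in\mathcal{S}\) and every \(C'\subseteq\mathcal{C}\), the set \(\mathcal{M}_j(C')\in\mathcal{P}_j\) is an active subset of \(C'\) with \(g_j(C')=g_j(\mathcal{M}_j(C'))=R_j(\mathcal{M}_j(C'))\). The idea is to move all the probability mass that \(\lambda^*\) places on each set \(C'\) onto its active representative \(\mathcal{M}_j(C')\). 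Concretely, I would define
\[
\hat{\lambda}_{j,C}=\sum_{C'\in\mathcal{M}_j^{-1}(C)}\lambda^*_{j,C'} \quad\text{for } C\in\mathcal{P}_j, \qquad \hat{\lambda}_{j,C}=0 \quad\text{for } C\notin\mathcal{P}_j,
\]
and then set \(\hat{x}_{ij}=\sum_{C\subseteq\mathcal{C}:\,C\ni i}\hat{\lambda}_{j,C}\), so that the second family of constraints of \((\textsc{LP--III})\) holds by construction.

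Next I would verify that \((\hat{\lambda},\hat{x})\) is feasible and attains the same objective. The first (distribution) constraint is immediate because \(\{\mathcal{M}_j^{-1}(C)\}_{C\in\mathcal{P}_j}\) partitions \(2^{\mathcal{C}}\), so \(\sum_{C}\hat{\lambda}_{j,C}=\sum_{C'}\lambda^*_{j,C'}=1\) for each \(j\). The objective is preserved because \(g_j\) is constant on each fiber \(\mathcal{M}_j^{-1}(C)\): grouping terms gives \(\sum_{C\in\mathcal{P}_j}g_j(C)\sum_{C'\in\mathcal{M}_j^{-1}(C)}\lambda^*_{j,C'}=\sum_{C'}g_j(C')\lambda^*_{j,C'}\), which is exactly the original optimal value.

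The main obstacle, and the only place where something could go wrong, is the third family of constraints, since consolidating mass onto subsets alters the \(x\) variables. The key observation that resolves it is monotonicity of the support under \(\mathcal{M}_j\): because \(\mathcal{M}_j(C')\subseteq C'\), consolidation can only remove an element from a realized set, never add one, so \(\mathds{1}_{\{i\in\mathcal{M}_j(C')\}}\le\mathds{1}_{\{i\in C'\}}\) and hence
\[
\hat{x}_{ij}=\sum_{C'\subseteq\mathcal{C}}\mathds{1}_{\{i\in\mathcal{M}_j(C')\}}\,\lambda^*_{j,C'}\;\le\;\sum_{C'\subseteq\mathcal{C}:\,C'\ni i}\lambda^*_{j,C'}=x^*_{ij}
\]
for every \((i,j)\in\mathcal{C}\times\mathcal{S}\). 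Since each \(\hat{x}_{ij}\) only decreases relative to \(x^*_{ij}\), the left-hand side \(\tfrac{\hat{x}_{ij}}{u_{ij}}+\sum_{\ell\in\mathcal{S}}\hat{x}_{i\ell}\) is bounded above by \(\tfrac{x^*_{ij}}{u_{ij}}+\sum_{\ell\in\mathcal{S}}x^*_{i\ell}\le 1\), so feasibility is retained with no further adjustment. Combining the three checks, \((\hat{\lambda},\hat{x})\) is feasible with the same (hence optimal) objective value and satisfies \(\hat{\lambda}_{j,C}=0\) for every \(C\notin\mathcal{P}_j\), which establishes the claim.
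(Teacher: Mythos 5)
Your proposal is correct and follows essentially the same route as the paper's own proof: the identical mass-consolidation construction via $\mathcal{M}_j$, the same fiber-partition argument for the distribution constraint and objective preservation, and the same monotonicity observation $\mathcal{M}_j(C')\subseteq C'$ yielding $\hat{x}_{ij}\le x^*_{ij}$ to handle the third family of constraints.
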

\begin{myproof} 
Let \( (\lambda^*, x^*) \) be an optimal solution to~$(\textsc{LP--III})$, we consider the following solution to~$(\textsc{LP--III})$, such that for any $j\in\mathcal{S},\ i\in\mathcal{C}$, and $ C \subseteq \mathcal{C}$:
\begin{align*}
\hat{\lambda}_{j,C} =
\begin{cases}
\sum_{C' \in \mathcal{M}_j^{-1}(C)} \lambda^*_{j,C'} &\text{if } C \in \mathcal{P}_j, \\
0 & \text{otherwise},
\end{cases}  \qquad\quad  \text{and} \qquad\quad
\hat{x}_{ij} = \sum_{C \subseteq \mathcal{C}:C \ni i} \hat{\lambda}_{j,C}.
\end{align*}
We show that \((\hat{\lambda}, \hat{x})\) is feasible to~$(\textsc{LP--III})$ and achieves the optimal objective value of this LP. Let us start with evaluating the objective value of the LP attainable by  \((\hat{\lambda}, \hat{x})\):
\begin{align*}
\sum_{j \in \mathcal{S}} \sum_{C \subseteq \mathcal{C}} g_j(C) \, \hat{\lambda}_{j,C}
&= \sum_{j \in \mathcal{S}} \sum_{C \in \mathcal{P}_j} g_j(C) \, \hat{\lambda}_{j,C} \\
&= \sum_{j \in \mathcal{S}} \sum_{C \in \mathcal{P}_j} g_j(C) \, \sum_{C' \in \mathcal{M}_j^{-1}(C)} \lambda^*_{j,C'} \\
&= \sum_{j \in \mathcal{S}} \sum_{C \in \mathcal{P}_j} \sum_{C' \in \mathcal{M}_j^{-1}(C)} g_j(C') \, \lambda^*_{j,C'} = \sum_{j \in \mathcal{S}} \sum_{C' \subseteq \mathcal{C}} g_j(C') \, \lambda^*_{j,C'},
\end{align*}
where the third equality holds because 
for  \( C \in \mathcal{P}_j \) and  \( C' \in \mathcal{M}_j^{-1}(C) \), we have \( g_j(C') = g_j(C) \), and the fourth equality holds because the collection \( \{ \mathcal{M}_j^{-1}(C) \}_{C \in \mathcal{P}_j} \) forms a partition of \( 2^{\mathcal{C}} \) for each \( j \in \mathcal{S} \). 
Now we verify the feasibility of the solution. For the first family of constraints, we have  for any $j\in\mathcal{S}$:
\begin{align}
\sum_{C \subseteq \mathcal{C}} \hat{\lambda}_{j,C}= \sum_{C \in \mathcal{P}_j} \hat{\lambda}_{j,C} = \sum_{C \in \mathcal{P}_j} \sum_{C' \in \mathcal{M}_j^{-1}(C)} \lambda^*_{j,C'} = \sum_{C' \subseteq \mathcal{C}} \lambda^*_{j,C'} = 1.\notag
\end{align}
The second family of constraints is verified by construction of our solution. Now, let us verify the third family of constraints. We have for all $i\in\mathcal{C}$ and $j\in\mathcal{S}$:
\begin{align*}
\hat x_{ij}=\sum_{C \subseteq \mathcal{C}:C\ni i} \hat{\lambda}_{j,C}
= \sum_{C \in \mathcal{P}_j:C \ni i} \hat{\lambda}_{j,C}
&= \sum_{C \in \mathcal{P}_j} \sum_{C' \in \mathcal{M}_j^{-1}(C)} \mathds{1}_{\{i \in C\}} \lambda^*_{j,C'} \\
&\leq \sum_{C \in \mathcal{P}_j} \sum_{C' \in \mathcal{M}_j^{-1}(C)} \mathds{1}_{\{i \in C'\}} \lambda^*_{j,C'} = \sum_{C' \subseteq \mathcal{C}:C' \ni i} \lambda^*_{j,C'} = x^*_{ij},
\end{align*}
where the inequality holds since for all $C' \in \mathcal{M}_j^{-1}(C)$ we have $C'\supseteq C$, so $\mathds{1}_{\{i \in C\}}\leq \mathds{1}_{\{i \in C'\}}$. 
Therefore, we have established that \( 0 \leq \hat{x}_{ij} \leq x^*_{ij} \) for all \( i \in \mathcal{C} \), \( j \in \mathcal{S} \). This implies that for all $i\in \mathcal{C}, \ j\in \mathcal{S},$
$$ \frac{\hat{x}_{ij}}{u_{ij}} + \sum_{\ell \in \S} \hat{x}_{i\ell} \leq \frac{x^*_{ij}}{u_{ij}} + \sum_{\ell \in \S} x^*_{i\ell} \leq 1.$$
This concludes our proof of the claim.
\end{myproof}




By Claim~\ref{claim:relaxation_OA_C_with_x,g},~$(\textsc{LP--III})$ is a relaxation of~\eqref{eq:relaxation_onesided_customers}. To complete the proof of Lemma~\ref{lemma:equivalent-LPs}, it suffices to show that~\eqref{eq:problem:primal} and~$(\textsc{LP--III})$ are equivalent.
These two problems share the same feasible set; the only difference lies in their objective functions:~$(\textsc{LP--III})$ uses \(g_j(C)\), while~\eqref{eq:problem:primal} uses \(R_j(C)\). Since \(g_j(C) \geq R_j(C)\) for any \(C \subseteq \mathcal{C}\), it follows that the optimal value of~\eqref{eq:problem:primal} is less than or equal to that of~$(\textsc{LP--III})$.
Conversely, Claim~\ref{claim:1-LP-optimal-solution} guarantees the existence of an optimal solution to~$(\textsc{LP--III})$ in which \(\lambda_{jC} = 0\) for all non-active sets \(C\), and we know that for active sets \(C\), \(g_j(C) = R_j(C)\). Therefore, the optimal value of~$(\textsc{LP--III})$ is also less than or equal to that of~\eqref{eq:problem:primal}.
These two bounds imply that the two problems are equivalent, which completes the proof.


Please note that although we proved that $(\textsc{LP--III})$ and~\eqref{eq:problem:primal} are reformulations of each other, directly finding even an approximate solution to $(\textsc{LP--III})$ is challenging. However, we show how to approximately solve~\eqref{eq:problem:primal}, and later in this section, we illustrate how to leverage this solution to devise a randomized algorithm with provable guarantees for~\atar. This randomized algorithm is entirely distinct from the greedy algorithm proposed by \citet{housni2024twosided} for the special case of matching.

\subsubsection{Solving~\textsc{(LP--II)}} \label{sec:algLP2}
Now we focus on how to solve~\eqref{eq:problem:primal}, although solving linear programs with an exponential number of variables is typically challenging. For example, \citet{housni2024twosided} did not directly solve their proposed LP relaxation, that is~\eqref{eq:relaxation_onesided_customers} with demand functions instead of our optimal revenue functions. Instead, they used their LP relaxation for an existence proof in the analysis of adaptivity gaps and a benchmark in the analysis of their algorithm. Our approach is different. Although we can not solve~\eqref{eq:problem:primal} exactly, we will show how to obtain $(1-\delta)$-approximate solutions in polynomial-time for any $\delta>0$.
To begin solving~\eqref{eq:problem:primal} we present its dual program that is
\begin{align}
\begin{aligned}
    \min \quad & \sum_{j \in \mathcal{S}} \beta_j + \sum_{(i,j) \in \mathcal{C} \times \mathcal{S}} \alpha_{ij}&\notag\\
    \text{s.t.}\quad&  \beta_j \geq \underset{C\subseteq\mathcal{C}}{\max}\{R_j(C)-\sum_{i\in C}\gamma_{ij}\},  &\forall  j\in \mathcal{S},\notag\\
     &  \frac{\alpha_{ij}}{u_{ij}} + \sum_{\ell \in \mathcal{S}} \alpha_{i\ell} \geq \gamma_{ij}, &\forall i\in \mathcal{C}, \ j\in \mathcal{S},\notag\\
    &   \alpha_{ij} \geq 0,\ \beta_j, \gamma_{ij}\in \mathbb{R}, & \forall i\in \mathcal{C}, \ j\in \mathcal{S}.\notag
\end{aligned}\label{eq:problem:dual}\tag{\textsc{Dual--II}}
\end{align}
For each $j\in \mathcal{S},\ C\subseteq \mathcal{C}$, and $\boldsymbol{\gamma}\in \mathbb{R}^{n\times m}$, we define the following function
\begin{align*}
    \revmod_j(C, \boldsymbol{\gamma}) = R_j(C)-\sum_{i\in C}\gamma_{ij},
\end{align*}
and refer to this function as \emph{revenue with cost} to describe a setting in which, in expectation we receive revenue \(R_j(C)\), and for each customer \(i\) included in the set \(C\) offered to supplier \(j\), we incur a fixed cost \(\gamma_{ij}\).

For any $j\in \mathcal{S}$, given any $\boldsymbol{\gamma}\in \mathbb{R}^{n\times m}$, the following maximization problem denotes the \emph{sub-dual} problem as 
\begin{align*}\label{eq:subdual}\tag{$\subdual_j$}
   \subdual_j(\boldsymbol{\gamma})= \underset{C\subseteq\mathcal{C}}{\max} \ \revmod_j(C, \boldsymbol{\gamma}).
\end{align*}
Finally, for any $j\in \mathcal{S}$, and $\boldsymbol{\gamma}\in \mathbb{R}^{n\times m}$, the first family of constraints in \eqref{eq:problem:dual} is referred to as
\begin{align*}\label{constraint:dual-AC} \tag{$\textsc{Dual-AC}_j$}
\beta_j \geq \subdual_j(\boldsymbol{\gamma}).
\end{align*}
Here, $\textsc{Dual-AC}_j$ stands for \emph{assortment with cost} constraint in the dual program for supplier~$j$.

Later in this section, we show that we can approximately solve~\eqref{eq:problem:dual} by approximately solving the sub-dual problem using the Ellipsoid method. We now focus on the sub-dual problem, known in the literature as \emph{assortment optimization with fixed costs}.

\vspace{2mm}
\noindent
\textbf{FPTAS for ($\textsc{Sub-Dual}_j$)}. As explained earlier, for any \(j \in \mathcal{S},\ C \subseteq \mathcal{C}\), and \(\boldsymbol{\gamma} \in \mathbb{R}^{n \times m}\), Problem~\eqref{eq:subdual} corresponds to selecting a subset of customers to maximize the expected revenue \(R_j(C)\) minus the total fixed cost \(\sum_{i \in C} \gamma_{ij}\). This is an instance of assortment optimization with fixed costs under the MNL model.
The literature on this problem is fairly developed. The initial work of~\citet{kunnumkal2010} considers the case of positive costs, proposing both a \(\tfrac{1}{2}\)-approximation algorithm and a PTAS. \citet{lu2023fairassortments}, building on the framework of~\citet{desir2020capacitated}, develop an FPTAS for the case of negative fixed costs. However, their approach does not extend to arbitrary cost settings.
In our case, the presence of the free variable \(\boldsymbol{\gamma}\) in~\eqref{eq:problem:dual} necessitates handling arbitrary (positive or negative) fixed costs. The only applicable algorithm is that of~\citet{chen2025fairassortment}, who recently proposed an FPTAS for assortment optimization with arbitrary costs and an arbitrary cardinality constraint. In particular, it also applies to the unconstrained setting by setting the cardinality bound equal to the size of the universe. Therefore, we use their FPTAS to get a \((1 - \delta)\)-approximate solution to~\eqref{eq:subdual} in polynomial time for any \(\delta > 0\).

\vspace{2mm}
\noindent
{\bf Near-Optimal Algorithm for Solving~\textsc{(LP--II)}.} We continue this section by presenting an FPTAS for solving~\eqref{eq:problem:primal}. To design an algorithm for approximately solving this LP, we adapt the classical Ellipsoid method. This adaptation, similar to the approaches of \citet{chen2025fairassortment} and \citet{jansen2003approximate}, leverages approximate oracles to efficiently solve LPs with exponentially many variables, as is the case in our setting. To this end, the following lemma shows that given any $\delta >0$, if we can find a $(1-\delta)$-approximate solution to Problem~\eqref{eq:subdual} for any $j\in\mathcal{S}$, then there exists an Ellipsoid-based algorithm that  finds a $(1-\delta)$-approximate solution to~\eqref{eq:problem:primal}, and runs in polynomial-time. We defer the proof of Lemma~\ref{lemma:approx} to Appendix \ref{appendix:proof:lemma:approx}.



\begin{lemma}[Approximation Algorithm for Solving~\eqref{eq:problem:primal}]
\label{lemma:approx} 
Fix any \(\delta > 0\). For any \(j \in \mathcal{S}\), suppose there exists a polynomial-time \((1 - \delta)\)-approximation algorithm for~\eqref{eq:subdual}. Then, there exists a polynomial-time Ellipsoid-based algorithm that computes a \((1 - \delta)\)-approximate solution to~\eqref{eq:problem:primal}.
\end{lemma}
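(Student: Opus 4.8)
The plan is to solve \eqref{eq:problem:primal} by running the Ellipsoid method on its dual \eqref{eq:problem:dual}, using the hypothesized $(1-\delta)$-approximation for \eqref{eq:subdual} as an \emph{approximate separation oracle}. The dual has only polynomially many variables $(\boldsymbol\beta,\boldsymbol\alpha,\boldsymbol\gamma)$, and the only obstacle to polynomial-time separation is the exponential family \eqref{constraint:dual-AC}, whose separation problem is precisely $\max_{C}\revmod_j(C,\boldsymbol\gamma)=\subdual_j(\boldsymbol\gamma)$. First I would build the oracle: given a candidate point, check the (polynomially many) second-family constraints and nonnegativity directly; then for each $j\in\S$ run the approximation algorithm for \eqref{eq:subdual} to obtain a set $\hat C_j$ with $\revmod_j(\hat C_j,\boldsymbol\gamma)\ge(1-\delta)\,\subdual_j(\boldsymbol\gamma)$, and if $\beta_j<\revmod_j(\hat C_j,\boldsymbol\gamma)$ return the violated constraint indexed by $\hat C_j$; otherwise declare the point feasible.

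The key structural observation I would isolate is what ``declared feasible'' buys us. Taking $C=\emptyset$ gives $\subdual_j(\boldsymbol\gamma)\ge R_j(\emptyset)=0$, so $\beta_j\ge(1-\delta)\subdual_j(\boldsymbol\gamma)\ge0$; moreover $\beta_j\ge(1-\delta)\revmod_j(C,\boldsymbol\gamma)$ for \emph{every} $C$. Hence $(\boldsymbol\beta/(1-\delta),\boldsymbol\alpha,\boldsymbol\gamma)$ is \emph{genuinely} dual-feasible (the second family and $\boldsymbol\gamma$ are untouched, and the inflated $\beta_j/(1-\delta)$ now dominates $\revmod_j(C,\boldsymbol\gamma)$ for all $C$), with objective at most $\frac{1}{1-\delta}(\sum_j\beta_j+\sum_{ij}\alpha_{ij})$ because all terms are nonnegative. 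Embedding the dual in a feasibility test ``objective $\le v$'' and running the Ellipsoid with the oracle then yields a clean dichotomy: either the oracle declares some point feasible, in which case the dual optimum $D^\star$ satisfies $D^\star\le v/(1-\delta)$; or the Ellipsoid certifies infeasibility \emph{using only the polynomially many generated constraints}, in which case the full dual system is infeasible at level $v$, so $D^\star>v$.

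From the infeasible branch I would extract an actual primal solution. Let $\mathcal F_j$ be the (polynomially many) sets returned across the run; the generated dual constraints are exactly the dual of the \emph{restricted primal} --- namely \eqref{eq:problem:primal} with $\lambda_{j,C}=0$ forced for all $C\notin\mathcal F_j$ (the $x_{ij}$ and the second constraint family remain intact), a polynomial-size LP solvable exactly. Infeasibility of the restricted dual at level $v$ means, by strong LP duality for this polynomial-size pair, that the restricted primal has optimum $>v$; padding its optimizer with zeros gives a solution feasible for the full \eqref{eq:problem:primal} of value $>v$. Combining with the contrapositive of the first branch --- if $v<(1-\delta)D^\star$ then no point is ever declared feasible, so the infeasible branch must occur --- I would binary-search for the largest $v$ (to polynomial precision) at which infeasibility is certified. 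This drives $v$ up to at least $(1-\delta)D^\star$ minus the search tolerance, and since $D^\star=P^\star$ is the optimum of \eqref{eq:problem:primal}, the extracted solution has value at least $(1-\delta)P^\star$ up to tolerance, which a slight readjustment of $\delta$ absorbs.

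For the running-time claim I would verify the usual Ellipsoid bookkeeping: a priori bounds on $|\beta_j|,|\alpha_{ij}|,|\gamma_{ij}|$ (coming from the bounded revenues $R_j\le\max_i r_{ij}$) to confine the search to a box of polynomial bit-size, a polynomial number of oracle calls per Ellipsoid run, and a polynomial number of bisection steps. The step I expect to be the main obstacle --- and the one deserving the most care --- is exactly the interface between \emph{approximate} feasibility and \emph{genuine} feasibility: making the scaling $\boldsymbol\beta\mapsto\boldsymbol\beta/(1-\delta)$ argument airtight (including the nonnegativity of $\beta_j$ needed so the objective does not blow up), and then transferring the approximate dual guarantee into a true feasible primal solution of value $(1-\delta)P^\star$ via the restricted-primal/duality argument, rather than merely estimating the optimal value.
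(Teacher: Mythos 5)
Your proposal is correct and follows essentially the same route as the paper's own proof: use the FPTAS for~\eqref{eq:subdual} as an approximate separation oracle inside the Ellipsoid method on~\eqref{eq:problem:dual}, convert approximate acceptance into genuine dual feasibility (your rescaling \(\boldsymbol\beta\mapsto\boldsymbol\beta/(1-\delta)\) is the same mechanism as the paper's replacement of \(\boldsymbol\beta'\) by \(\beta_j^*=\subdual_j(\boldsymbol\gamma')\), both resting on \(\beta'_j\ge(1-\delta)\subdual_j(\boldsymbol\gamma')\) and nonnegativity), and recover a primal solution by exactly solving the polynomial-size primal restricted to the generated sets, invoking strong duality. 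The only difference is organizational: the paper makes a single sliding-objective Ellipsoid run and compares its output \(r'\) against the auxiliary primal--dual pair~\eqref{eq:problem:primal-aux}/\eqref{eq:problem:dual-aux}, whereas you binary-search over objective levels with feasibility tests, which incurs an extra search tolerance that you correctly absorb into \(\delta\).
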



\subsection{Correlation Gap for Optimal Expected Revenue Functions under MNL}\label{sec:1/2-epsilon-approx:correlation-gap}

Throughout this section, we focus on analyzing correlation gap of the function $g_j$ for each supplier $j \in \mathcal{S}$, which we use in the analysis of our randomized static algorithm in Section~\ref{sec:1/2-epsilon-approx:algorithm}. We begin by defining the notion of correlation gap. Let \(\mathcal{N}\) be a finite ground set, and let \( f : 2^{\mathcal{N}} \to \mathbb{R}_+ \) be a nonnegative set function. Given a probability distribution \(D\) over subsets of \(\mathcal{N}\), let \(D^{\mathrm{ind}}\) denote the independent distribution with the same marginals as \(D\). In particular, \(D^{\mathrm{ind}}\) is a distribution in which each element in \(\mathcal{N}\) is selected independently of the others, and the probability of each element being selected equals its marginal probability of appearing in a set drawn from the distribution \(D\). The \emph{correlation gap} of \(f\) is defined as the worst-case ratio 
$$\sup_{D} \frac{\mathbb{E}_{A \sim D}[f(A)]}{\mathbb{E}_{A \sim D^{\mathrm{ind}}}[f(A)]}.$$

In our setting for~\atar, correlation gap will measure the maximum possible loss when replacing an adaptive (correlated) assortment policy with a static (independent) assortment policy, and plays a central role in bounding the performance gap between adaptive and static assortment policies.

We are interested in analyzing the correlation gap of the optimal revenue function \( g_j \) for each supplier \( j \in \mathcal{S} \), which appears in the objective of~\atar. To this end, we simplify notation by dropping the supplier index and focus on the expected revenue function of the optimal assortment in the following setting. In particular, let $\mathcal{N}$ be a ground set of items with nonnegative revenues $\{r_i\}_{i \in \mathcal{N}}$. Under the MNL choice model, let $\phi:2^{\mathcal{N}}\to[0,1]$ denote the choice probabilities, and define the optimal revenue function $g(A)=\underset{A'\subseteq A}{\max}\ \sum_{i\in A'} r_i\,\phi(i,A')$ for all $ A\subseteq\mathcal{N}$.  
In Lemma~\ref{lemma:MNL 1/2 coerrelation gap}, we derive the correlation gap of function $g$.

\begin{lemma}[Correlation Gap for Optimal Revenue Function under MNL] \label{lemma:MNL 1/2 coerrelation gap}
For any distribution \(D\) over subsets of \(\mathcal{N}\), let \(D^{\mathrm{ind}}\) be an independent distribution having the same marginals as \(D\).  Then,
\[\frac{\mathbb{E}_{A\sim D}\bigl[g(A)\bigr]}{\mathbb{E}_{A\sim D^{\mathrm{ind}}}\bigl[g(A)\bigr]}\leq {\sf CorrGap},
\]
where ${\sf CorrGap}=2$ for general revenues, and ${\sf CorrGap}=\dfrac{e}{e-1}$ if all revenues are equal.
\end{lemma}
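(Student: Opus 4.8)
The plan is to derive both bounds by reducing the correlation gap of $g$ to known correlation gap estimates for structured set functions, once the relevant structural properties of $g$ are in place. Throughout I would use that $g$ is monotone nondecreasing, since enlarging $A$ only enlarges the family $\{A'\subseteq A\}$ over which $R$ is maximized, so $g(A)\le g(B)$ whenever $A\subseteq B$. The two regimes in the statement correspond to two genuinely different structural facts about $g$, and I would treat them separately: in the equal-revenue case $g$ is actually submodular, whereas in the general case $g$ is only submodular \emph{with respect to an order}, which is exactly why the bound degrades from $e/(e-1)$ to $2$.

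For the equal-revenue case, write $r_i=r$ for all $i$ and let $w_i$ denote the MNL weights, so that $\phi(i,A')=w_i/(1+\sum_{\ell\in A'}w_\ell)$ and $R(A')=r\,\tfrac{\sum_{i\in A'}w_i}{1+\sum_{i\in A'}w_i}$. Since $t\mapsto t/(1+t)$ is increasing, this ratio only grows as items are added, so the maximizing sub-assortment is $A$ itself and $g(A)=r\,\tfrac{\sum_{i\in A}w_i}{1+\sum_{i\in A}w_i}$. This is a nonnegative increasing concave function composed with the modular set function $A\mapsto\sum_{i\in A}w_i$, hence $g$ is monotone and submodular. I would then invoke the classical correlation gap bound for nonnegative monotone submodular functions, namely $\mathbb{E}_{A\sim D^{\mathrm{ind}}}[g(A)]\ge(1-\tfrac1e)\,\mathbb{E}_{A\sim D}[g(A)]$, which yields ${\sf CorrGap}=e/(e-1)$.

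For the general case the target is the factor $2$, and the route is to show that $g$ is a nonnegative, monotone, \emph{submodular order} function with respect to the order listing items by nonincreasing revenue $r_i$, and then to apply the correlation gap bound of $2$ that holds for that class. The structural claim rests on the classical fact that, under MNL, the revenue-maximizing sub-assortment of any $A$ is revenue-ordered, i.e. $g(A)=\max_k R(A\cap N_k)$ where $N_k$ collects the $k$ items of largest revenue; this nestedness is what aligns the diminishing-returns behaviour of $g$ with the revenue order and lets me verify the ordered submodularity inequality without manipulating the ratio $R$ directly. Given the submodular order property, the correlation gap argument compares the independent-rounding value $\mathbb{E}_{D^{\mathrm{ind}}}[g]$ against a concave-closure upper bound on $\mathbb{E}_{D}[g]$; relative to the fully submodular case one can only control marginal gains of \emph{order-consistent} additions, and this restriction is precisely what costs a factor $2$ instead of $e/(e-1)$.

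I expect the main obstacle to be the verification that $g$ is a submodular order function. Unlike plain submodularity, the ordered diminishing-returns inequality must be checked against the revenue order and against the fact that the optimal sub-assortment can change as items are added, so the argument has to exploit the revenue-ordered, nested structure of MNL optima carefully rather than reason pointwise about $R$. A secondary technical point is establishing (or importing) the correlation gap bound of $2$ for monotone submodular order functions, which is the step where the loss from $e/(e-1)$ to $2$ is incurred and which I would state as its own lemma to keep the two cases cleanly separated.
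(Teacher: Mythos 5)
Your overall architecture matches the paper's: equal revenues give genuine submodularity of $g$ (the optimal sub-assortment is all of $A$, so $g(A)=r\,\tfrac{\sum_{i\in A}w_i}{1+\sum_{i\in A}w_i}$), yielding $e/(e-1)$ from the classical correlation gap bound for monotone submodular functions; and the general case is routed through the submodular order property of $g$ with respect to the descending revenue order, yielding the factor $2$. The first case is complete and identical to the paper's. The submodular order property itself, which you flag as the ``main obstacle,'' is actually the easy part: it can be imported directly from the literature (Lemma 9 of Udwani, which the paper cites), so you need not re-derive the nested revenue-ordered structure of MNL optima.

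The genuine gap is in what you call the ``secondary technical point'': the correlation gap bound of $2$ for nonnegative, monotone, submodular order functions. You propose to ``establish or import'' it, but there is nothing to import --- this bound is precisely the new content of the paper's proof (the paper remarks it is established ``as a byproduct''), and your sketch of how to establish it (comparing $\mathbb{E}_{D^{\mathrm{ind}}}[g]$ against a concave-closure upper bound, with the loss attributed to controlling only ``order-consistent additions'') is not a proof and does not reflect a workable mechanism. The paper's actual argument goes through cost-sharing schemes: by a result of Agrawal et al., any monotone function admitting a cross-monotonic, $\beta$-budget-balanced cost-sharing scheme has correlation gap at most $2\beta$. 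The paper then constructs the explicit incremental scheme along the revenue order,
\[
\chi_i(A) \;=\; \mathds{1}_{\{i \in A\}}\bigl(\,g(A_{(\geq i)}) - g(A_{(> i)})\,\bigr),
\]
where $A_{(>i)}$ collects the elements of $A$ with strictly higher revenue rank than $i$. Cross-monotonicity, $\chi_i(B\cup\{i\})\le\chi_i(A\cup\{i\})$ for $A\subseteq B$, is exactly where the submodular order property enters: $i$ succeeds both prefixes $A_{(>i)}\subseteq B_{(>i)}$ in the revenue order, so its marginal can only shrink. Budget balance holds with $\beta=1$ by a telescoping sum, $\sum_{i\in A}\chi_i(A)=g(A)$. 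Without this construction (or some substitute argument of comparable substance), the factor $2$ in your general case is asserted rather than proved, and that is the heart of the lemma.
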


\subsubsection{Preliminaries}
To prove Lemma~\ref{lemma:MNL 1/2 coerrelation gap}, we need some preliminaries. Let us introduce some key terminologies, and begin by recalling the notion of \emph{cost-sharing schemes} from game theory. In our setting, a cost-sharing scheme corresponds to assigning the value of a set function to the elements within the set. More formally, let $\mathcal{N} = \{1, \ldots, n\}$ denote a finite ground set, given a total cost function $f : 2^{\mathcal{N}} \to \mathbb{R}_+$, a cost-sharing scheme is defined as a collection of cost allocation functions $\{\chi_i\}_{i\in\mathcal{N}}$, where each $\chi_i : 2^{\mathcal{N}} \rightarrow \mathbb{R}_+$ specifies the share of cost assigned to element $i$ in any subset $A \subseteq \mathcal{N}$, denoted by $\chi_i(A)$. For a given $\beta>0$,
a cost-sharing scheme $\{\chi_i\}_{i\in\mathcal{N}}$ of a function $f$ is \emph{$\beta$-budget-balanced} if:
 $$\frac{f(A)}{\beta}\leq \sum_{i\in \mathcal{N}} \chi_i(A)\leq f(A),\qquad  \text{for all }~~A\subseteq \mathcal{N},$$
 and it is \emph{cross-monotonic}, if for all $i\in \mathcal{N}$:
 $$\chi_i(A\cup\{i\})\geq \chi_i(B\cup\{i\}),\qquad \text{for all}~~A\subseteq B\subseteq \mathcal{N}.$$
 A cost-sharing scheme is a \emph{$\beta$-cost-sharing scheme} if it is cross-monotonic and $\beta$-budget-balanced.


\citet{agrawal2012price} show that if a set function \( f \) is monotone and submodular, then its correlation gap is bounded by \( \frac{e}{e - 1} \). More generally, if \( f \) is monotone (non-decreasing) but not necessarily submodular, and it admits a \(\beta\)-cost sharing scheme, then its correlation gap is bounded by \( 2\beta \). We summarize their result in the following lemma.

\begin{lemma}[Theorem 4 in \citet{agrawal2012price}]\label{lemma:Agrawal-correlation-gap}
Consider a nonnegative set function \( f \). The correlation gap of \( f \) is upper bounded by:
\begin{itemize}
    \item[i)] \(\dfrac{e}{e-1}\), if \( f \) is monotone and submodular.
    \item[ii)] \(2\beta\), if \( f \) is monotone (non-decreasing) and admits a \(\beta\)-cost sharing scheme.
\end{itemize}
\end{lemma}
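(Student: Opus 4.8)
The plan is to verify the hypotheses of Lemma~\ref{lemma:Agrawal-correlation-gap} for the function $g$, handling the equal-revenue and general-revenue cases separately. Throughout, write $u_i$ for the MNL preference weight of item $i$, so that $\phi(i,A')=u_i/\bigl(1+\sum_{\ell\in A'}u_\ell\bigr)$, let $R(A')=\sum_{i\in A'}r_i\phi(i,A')$, and recall $g(A)=\max_{A'\subseteq A}R(A')$. First I would record that $g$ is monotone: if $A\subseteq B$, then the maximization defining $g(B)$ ranges over a superset of the subsets available to $g(A)$, so $g(A)\le g(B)$. This monotonicity is the only property of $g$ both cases will need beyond the structural facts developed below.

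For the \textbf{equal-revenue case} $r_i\equiv r$, the inner objective is $r\cdot\frac{\sum_{i\in A'}u_i}{1+\sum_{i\in A'}u_i}$, which is increasing in $A'$, so the maximizer is $A'=A$ and $g(A)=r\cdot\frac{W(A)}{1+W(A)}$ with $W(A):=\sum_{i\in A}u_i$. Since $x\mapsto r\,x/(1+x)$ is concave and nondecreasing and $W$ is modular, $g$ is a concave nondecreasing transform of a modular function, hence monotone and submodular. Lemma~\ref{lemma:Agrawal-correlation-gap}(i) then yields correlation gap at most $e/(e-1)$, as claimed.

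For the \textbf{general-revenue case} I would exhibit an explicit $1$-cost-sharing scheme and invoke Lemma~\ref{lemma:Agrawal-correlation-gap}(ii) with $\beta=1$ (which gives the bound $2\beta=2$). The structural input is the revenue-ordered form of optimal MNL assortments: the optimal assortment for ground set $A$ is $A^*=\{i\in A:r_i\ge g(A)\}$. Writing $z=g(A)=R(A^*)$ and clearing the denominator gives $z=\sum_{i\in A^*}(r_i-z)\,u_i$; since items in $A^*$ satisfy $r_i\ge z$ and items of $A\setminus A^*$ satisfy $r_i<z$, this is exactly the fixed-point identity $g(A)=\sum_{i\in A}u_i\,\bigl(r_i-g(A)\bigr)^+$. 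Guided by this identity, define $\chi_i(A):=u_i\,\bigl(r_i-g(A)\bigr)^+\ge0$. Budget balance with $\beta=1$ is then immediate, $\sum_{i\in A}\chi_i(A)=\sum_{i\in A}u_i\,(r_i-g(A))^+=g(A)$, so both inequalities $g(A)\le\sum_i\chi_i(A)\le g(A)$ hold with equality. Cross-monotonicity follows from monotonicity of $g$: for $A\subseteq B$ we have $g(A\cup\{i\})\le g(B\cup\{i\})$, hence $(r_i-g(A\cup\{i\}))^+\ge(r_i-g(B\cup\{i\}))^+$, and multiplying by $u_i\ge0$ gives $\chi_i(A\cup\{i\})\ge\chi_i(B\cup\{i\})$. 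Thus $\{\chi_i\}_{i\in\mathcal{N}}$ is a $1$-cost-sharing scheme for the monotone function $g$, and Lemma~\ref{lemma:Agrawal-correlation-gap}(ii) gives correlation gap at most $2$.

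The \textbf{main obstacle} is finding a cost-sharing scheme that is simultaneously \emph{exactly} budget-balanced (the value $\beta=1$ is what delivers the factor $2$) and cross-monotonic, despite $g$ being non-submodular in general. The naive scheme $\chi_i(A)=r_i\,\phi(i,A^*)$, charging each item its expected revenue inside the optimal assortment, is $1$-budget-balanced but fails cross-monotonicity: adding a high-revenue item can shrink the total weight of the optimal assortment and thereby \emph{increase} a surviving item's share. The fixed-point identity is precisely what repairs this, since it rewrites $g(A)$ as a sum of per-item terms $u_i\,(r_i-g(A))^+$ whose set-dependence enters only through the scalar $g(A)$, so monotonicity of $g$ transfers directly into cross-monotonicity. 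The one place requiring care is establishing this identity from the revenue-ordered structure of MNL optimal assortments, including the boundary case $r_i=g(A)$, which is harmless because such items contribute zero to both sides.
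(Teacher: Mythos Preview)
First, a framing remark: the statement you were asked to prove is a cited result (Theorem~4 of \citet{agrawal2012price}), and the paper does not prove it. What you have actually written is a proof of Lemma~\ref{lemma:MNL 1/2 coerrelation gap}, the correlation-gap bound for the MNL optimal-revenue function $g$, by verifying the hypotheses of the Agrawal et al.\ lemma. I will compare your argument to the paper's own proof of \emph{that} lemma.

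Your equal-revenue case coincides with the paper's. In the general case the two proofs take genuinely different routes to a $1$-cost-sharing scheme. The paper imports the submodular-order property of $g$ (Lemma~\ref{lemma:udwani-MNL-SO}) and uses the telescoping scheme $\chi_i(A)=\mathds{1}_{\{i\in A\}}\bigl(g(A_{(\ge i)})-g(A_{(>i)})\bigr)$ along the descending revenue order: budget balance is the telescope, and cross-monotonicity is precisely the submodular-order inequality. You instead exploit the MNL fixed-point identity $g(A)=\sum_{i\in A}u_i\,(r_i-g(A))^+$ and take $\chi_i(A)=u_i\,(r_i-g(A))^+$; budget balance is the identity itself, and cross-monotonicity reduces to bare monotonicity of $g$. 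Your route is more elementary and self-contained---it sidesteps the submodular-order machinery entirely---but it is MNL-specific and does not recover the paper's byproduct that \emph{every} nonnegative monotone submodular-order function has correlation gap at most $2$. One small fix: since the budget-balance condition in the paper sums over all $i\in\mathcal{N}$, you should set $\chi_i(A)=\mathds{1}_{\{i\in A\}}\,u_i\,(r_i-g(A))^+$, so that items $i\notin A$ with $r_i>g(A)$ do not violate the upper bound $\sum_{i\in\mathcal{N}}\chi_i(A)\le g(A)$; this costs nothing, as your cross-monotonicity check already only evaluates $\chi_i$ at sets containing $i$.
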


Let \( \pi \) be a total order over the ground set \( \mathcal{N} \). A set function \( f: 2^\mathcal{N} \to \mathbb{R}_+ \) is said to be \emph{submodular order} with respect to \( \pi \) if, for all sets \( A \subseteq B \subseteq \mathcal{N} \) and for any set \( C \subseteq \mathcal{N} \) that \emph{succeeds} \( B \) under \( \pi \), the marginal contributions satisfy $f(C \mid B) \leq f(C \mid A)$, where \( f(C \mid A) := f(C \cup A) - f(A) \).
Here, a set \( C \) is said to \emph{succeed} \( B \) under \( \pi \) if all elements of \( C  \) come after the elements in \( B \) in the total order \( \pi \). This condition generalizes the diminishing returns property of submodular functions.
As shown by \citet{udwani2024submodular}, our optimal expected revenue function \( g \) satisfies the submodular order property. In particular, we have the following lemma.

\begin{lemma}[Lemma 9 in~\citet{udwani2024submodular}]\label{lemma:udwani-MNL-SO}
Given a ground set of items \( \mathcal{N} \) with corresponding nonnegative revenues \( \{r_i\}_{i \in \mathcal{N}} \) and the MNL choice model \( \phi \), the optimal revenue function, defined as
$
g(A) = \max_{A' \subseteq A} \sum_{i \in A'} r_i \,\phi(i, A'),
$
for all \( A \subseteq \mathcal{N} \), is monotone sub-additive and submodular order with respect to the descending order of item revenues (with ties broken arbitrarily).
\end{lemma}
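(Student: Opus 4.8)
The plan is to verify the three asserted properties of $g$ separately, with monotonicity and sub-additivity being short and the submodular order property carrying essentially all the work. Throughout I would write $R(A')=\sum_{i\in A'} r_i\,\phi(i,A')=\frac{\sum_{i\in A'} r_i v_i}{1+\sum_{i\in A'} v_i}$ for the expected revenue of offering $A'$, where $\{v_i\}$ denote the underlying MNL weights, so that $g(A)=\max_{A'\subseteq A} R(A')$. The single fact I would establish first and then reuse everywhere is the \emph{revenue-ordered structure} of the optimal unconstrained MNL assortment: for any universe $A$, writing $z=g(A)$, a short computation shows that the marginal change $R(A'\cup\{i\})-R(A')$ has the same sign as $r_i-R(A')$, and hence the optimal assortment over $A$ is exactly $\{i\in A: r_i>z\}$ (items with $r_i=z$ being irrelevant); in particular $g(A)=R(\{i\in A: r_i>g(A)\})$.

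Monotonicity is immediate: if $A\subseteq B$, then $g(B)$ maximizes $R$ over a superset of the assortments feasible for $g(A)$. For sub-additivity I would let $S$ attain $g(A\cup B)$ and split it as $S_A=S\cap A$ and $S_B=S\setminus A\subseteq B$. Writing $p=\sum_{i\in S_A} r_i v_i$, $s=\sum_{i\in S_A} v_i$, $q=\sum_{i\in S_B} r_i v_i$, $t=\sum_{i\in S_B} v_i$, the elementary inequality $\frac{p+q}{1+s+t}\le\frac{p}{1+s}+\frac{q}{1+t}$ (valid because $1+s+t$ dominates each of $1+s$ and $1+t$) yields $g(A\cup B)=R(S)\le R(S_A)+R(S_B)\le g(A)+g(B)$, since $S_A\subseteq A$ and $S_B\subseteq B$.

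For the submodular order property I would first reduce to single elements and then settle the single-element case in closed form. Fix $A\subseteq B$ and a set $C$ succeeding $B$ in descending revenue order, listed as $C=\{c_1,\dots,c_k\}$ in that order. Setting $A_t=A\cup\{c_1,\dots,c_t\}$ and $B_t=B\cup\{c_1,\dots,c_t\}$, each $c_t$ succeeds $B_{t-1}\supseteq A_{t-1}$, so telescoping reduces the claim $g(B\cup C)-g(B)\le g(A\cup C)-g(A)$ to the single-element inequality $g(B'\cup\{c\})-g(B')\le g(A'\cup\{c\})-g(A')$ for $A'\subseteq B'$ and $c$ succeeding $B'$. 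To prove the latter I would distinguish two cases. If $r_c\le g(B')$, then by the revenue-ordered structure an item whose revenue is at most the current optimal revenue never enters the optimal assortment, so the left marginal is $0$, which is at most the (nonnegative) right marginal. If $r_c>g(B')\ge g(A')$, then because $c$ succeeds $B'\supseteq A'$, every item of $B'$ (hence of $A'$) has revenue $\ge r_c>g(A')$; the revenue-ordered structure then forces the optimal assortments over $A'$, $B'$, $A'\cup\{c\}$, and $B'\cup\{c\}$ to be the \emph{entire} respective sets. A direct computation gives the closed form $g(A'\cup\{c\})-g(A')=\frac{v_c\,(r_c-g(A'))}{1+\sum_{i\in A'} v_i+v_c}$, and the analogous expression over $B'$; since $g(A')\le g(B')<r_c$ and $\sum_{i\in A'} v_i\le\sum_{i\in B'} v_i$, the $B'$-marginal has a smaller nonnegative numerator and a larger positive denominator, so it is at most the $A'$-marginal.

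The main obstacle, and the only step requiring genuine structural insight, is the nontrivial case $r_c>g(B')$ of the single-element inequality: the argument hinges on the observation that because $c$ is the lowest-revenue element relative to the base, including it cannot eject any incumbent item, which collapses both optimal assortments to the full sets and yields the clean closed-form marginals whose monotone comparison closes the proof. The remaining ingredients—the revenue-ordered characterization and the telescoping reduction over the sorted elements of $C$—are routine once this observation is in place.
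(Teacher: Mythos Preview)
The paper does not give its own proof of this lemma; it is quoted verbatim as Lemma~9 of \citet{udwani2024submodular} and used as a black box. So there is nothing in the paper to compare against.

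Your argument is correct. Monotonicity and sub-additivity are handled cleanly, and your treatment of the submodular order property is complete. The telescoping reduction to single elements is valid because, with $C=\{c_1,\dots,c_k\}$ listed in descending revenue order, each $c_t$ succeeds $B_{t-1}=B\cup\{c_1,\dots,c_{t-1}\}$ and $A_{t-1}\subseteq B_{t-1}$ is preserved. In Case~2 of the single-element step, the only point that deserves a sentence of care is why the optimal assortment over $B'\cup\{c\}$ is the \emph{full} set: since $c$ is the last element in the revenue order, the revenue-ordered structure forces the optimum to be a prefix, i.e., either a subset of $B'$ or all of $B'\cup\{c\}$; you have $R(B'\cup\{c\})>R(B')=g(B')\ge R(S)$ for every $S\subseteq B'$, so the full set wins. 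With that observation explicit, your closed-form comparison
\[
g(X\cup\{c\})-g(X)=\frac{v_c\bigl(r_c-g(X)\bigr)}{1+\sum_{i\in X}v_i+v_c}
\]
for $X\in\{A',B'\}$ immediately yields the inequality, since the $B'$-side has the weakly smaller positive numerator and the weakly larger positive denominator.
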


This structural property enables us to construct a valid cost-sharing scheme for the function \( g \), and thereby derive a correlation gap bound using Lemma \ref{lemma:Agrawal-correlation-gap}. This is used in the next section to establish Lemma~\ref{lemma:MNL 1/2 coerrelation gap}.

\subsubsection{Proof of Lemma~\ref{lemma:MNL 1/2 coerrelation gap}}

First, in the special case where revenues are equal, i.e., \( r_i = r \) for all \( i \in \mathcal{N} \), the optimal revenue function under the MNL model simplifies to
$
g(A) = r \cdot \sum_{i \in A} \phi(i, A) = r \cdot\frac{\sum_{i \in A} v_i}{1+\sum_{i \in A} v_i},
$
where \( v_i \) is the preference weight of item \( i \). Thus, a simple calculation immediately shows that \( g \) is monotone and submodular, which is a well-known fact for any random utility choice model, including MNL. Thus by part (i) of Lemma~\ref{lemma:Agrawal-correlation-gap}, the correlation gap of \( g \) is bounded by \( \frac{e}{e - 1} \).

We now consider the general case with heterogeneous revenues, where \( g \) is no longer submodular. Without loss of generality, assume the revenues of items are ordered such that \( r_1 \geq  \dots \geq r_n \). This ordering is used to define the sets as follows: for any set \( A \subseteq \mathcal{N} \) and \( i \in \mathcal{N} \), let \( A_{(>i)} \) be the subset of \( A \) containing the items of $A$ that rank above item $i$ in this order, i.e., $A_{(>i)}=\{1,\ldots,i-1\} \cap A$. Let \( A_{(\geq i)} \) be the subset of \( A \) containing the items of $A$ that rank at or above item $i$ in this order, i.e., $A_{(\geq i)}=\{1,\ldots,i\} \cap A$.
We define the following cost-sharing scheme for all \( A \subseteq \mathcal{N} \) and \( i \in \mathcal{N} \):
\[
\chi_i(A) = \mathds{1}_{\{i \in A\}} \left( \; g(A_{(\geq i)}) - g(A_{(> i)}) \; \right).
\]
To verify the cross-monotonicity property, for any \( A \subseteq \mathcal{N} \) and \( i \in \mathcal{N} \), we have
\begin{align}
    \chi_i(A \cup \{i\}) = g((A \cup \{i\})_{(\geq i)}) - g((A \cup \{i\})_{(> i)}) = g(\{i\} \mid A_{(> i)}).\notag
\end{align}
Thus, for \( A \subseteq B \subseteq \mathcal{N} \), we obtain:
\begin{align}
    \chi_i(B \cup \{i\}) = g(\{i\} \mid B_{(> i)}) \leq g(\{i\} \mid A_{(>i)}) = \chi_i(A \cup \{i\}),\notag
\end{align}
where the inequality follows since $g$ is submodular order, \( A_{(>i)} \subseteq B_{(>i)} \) and \( i \) has a revenue smaller than all elements in \( B_{(>i)} \), i.e., $i$ succeeds  \( B_{(>i)} \) in the descending order  with respect to revenues.

To check budget balance, we have the following telescopic sum over all \( i \in \mathcal{N} \):
\[
\sum_{i \in \mathcal{N}} \chi_i(A) = \sum_{i \in A} \left( g(A_{(\geq i)}) - g(A_{(> i)}) \right) = g(A) - g(\emptyset) = g(A).
\]
Thus, the cost-sharing scheme is \( \beta \)-budget-balanced with \( \beta = 1 \) for the function \( g \), completing the proof. Note as a byproduct, this proof establishes that the correlation gap of any non-negative, monotone (non-decreasing), and submodular order function is bounded by 2.

\subsection{Our Approximation Algorithm for~\atar}\label{sec:1/2-epsilon-approx:algorithm}

In this section, we design our approximation algorithm for~\atar\ and prove Theorem~\ref{thm:general_revenues}, which establishes its theoretical performance guarantee.

Our approach begins with a near-optimal solution to the LP relaxation of~\atar, given in~\eqref{eq:problem:primal}. As shown in Lemma~\ref{lemma:approx}, we can efficiently compute a \((1 - \delta)\)-approximate solution to~\eqref{eq:problem:primal} for any $\delta > 0$. This solution yields the marginal probabilities $x_{ij}$ that customer $i$ selects supplier $j$. 
Our next step is to construct, for each customer \(i\), a distribution over assortments of suppliers that induces these marginal choice probabilities. This step is pretty much standard  and relies on a well-known property of the MNL choice model: any point inside the MNL polyhedron can be represented as a convex combination of assortments (see, e.g., \citet{Topaloglu2013}). The idea is formalized as follows. Let $\mathcal{S}=\{1,\dots,m\}$. Given a feasible solution \((\lambda, x)\) to~\eqref{eq:problem:primal}, for each customer \(i \in \mathcal{C}\), define a permutation \(\sigma_i: \mathcal{S} \to \mathcal{S}\) that orders the suppliers such that
\[
\frac{x_{i\sigma_i(1)}}{u_{i\sigma_i(1)}} \;\ge\;
\frac{x_{i\sigma_i(2)}}{u_{i\sigma_i(2)}} \;\ge\; \dots \;\ge\;
\frac{x_{i\sigma_i(m)}}{u_{i\sigma_i(m)}}.
\]
For any \(\ell \in \{1, \dots, m\}\), let
\(S_i(\ell)=\{\sigma_i(1), \sigma_i(2), \dots, \sigma_i(\ell)\}\) denote the top-\(\ell\) suppliers in this order, and define \(S_i(0)=\emptyset\).
For each $i\in\mathcal{C}$, we construct distribution \(\mathbb{P}_i:2^\mathcal{S} \to [0,1]\) as follows:
\begin{itemize}
  \item[i)]   $\mathbb{P}_i[S] = 0$, for any $S\in 2^\mathcal{S}\setminus\{S_i(0), S_i(1), \dots, S_i(m)\}$,
  \item[ii)]  \(\mathbb{P}_i[S_i(0)] = 1 - \left(\dfrac{x_{i\sigma_i(1)}}{u_{i\sigma_i(1)}} + \sum_{\ell=1}^{m} x_{i\sigma_i(\ell)} \right),\)
  \item[iii)] 
  $
  \mathbb{P}_i[S_i(\ell)] =
        \left(\frac{x_{i\sigma_i(\ell)}}{u_{i\sigma_i(\ell)}} -
              \frac{x_{i\sigma_i(\ell+1)}}{u_{i\sigma_i(\ell+1)}}\right)
        \left(1 + \sum_{j=1}^{\ell} u_{i\sigma_i(j)}\right),
  $
  for \(\ell \in \{1, \ldots, m-1\},\)
  \item[iv)] \(\mathbb{P}_i[S_i(m)] =
        \left(\dfrac{x_{i\sigma_i(m)}}{u_{i\sigma_i(m)}}\right)
        \left(1 + \sum_{j=1}^{m} u_{i\sigma_i(j)}\right).\)
\end{itemize}
\begin{lemma}\label{lemma:MNL-polyhedron-distribution}
For each $i \in \mathcal{C}$, $\mathbb{P}_i$  forms a probability distribution over $2^\mathcal{S}$. Moreover, sampling a set of suppliers from \(\mathbb{P}_i\), and offering it to customer \(i\) guarantees that for every \(j \in \mathcal{S}\) the marginal probability of customer $i$ choosing supplier \(j\) equals \(x_{ij}\).
\end{lemma}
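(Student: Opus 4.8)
The plan is to verify the two claims separately, noting that the only sets carrying positive mass are the nested top-$\ell$ assortments $\{S_i(0), S_i(1), \dots, S_i(m)\}$. To lighten notation I would fix a customer $i$ and abbreviate $a_\ell = x_{i\sigma_i(\ell)}/u_{i\sigma_i(\ell)}$ for the quantities used in the ordering and $b_\ell = 1 + \sum_{p=1}^{\ell} u_{i\sigma_i(p)}$ for the MNL normalizers, with $b_0 = 1$. In this notation the defining ordering of $\sigma_i$ is simply $a_1 \geq a_2 \geq \dots \geq a_m \geq 0$, and $\phi_i(\sigma_i(k), S_i(\ell)) = u_{i\sigma_i(k)}/b_\ell$ whenever $\ell \geq k$ and $0$ otherwise.

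First I would establish that $\mathbb{P}_i$ is a genuine distribution. Non-negativity splits into the three cases matching the definition: for $\ell \in \{1,\dots,m-1\}$ the mass $(a_\ell - a_{\ell+1})\,b_\ell$ is non-negative because $a_\ell \geq a_{\ell+1}$ and $b_\ell > 0$; the top set $S_i(m)$ carries $a_m b_m \geq 0$; and the empty-set mass $\mathbb{P}_i[S_i(0)] = 1 - \bigl(a_1 + \sum_{\ell=1}^{m} x_{i\sigma_i(\ell)}\bigr)$ is non-negative precisely by the third family of constraints of \eqref{eq:problem:primal}, invoked at the coordinate $j = \sigma_i(1)$ that maximizes $x_{ij}/u_{ij}$, which reads $a_1 + \sum_{\ell} x_{i\ell} \leq 1$. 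That the masses sum to one is an Abel/telescoping computation: substituting the definitions and using $b_\ell - b_{\ell-1} = u_{i\sigma_i(\ell)}$ together with $a_\ell\, u_{i\sigma_i(\ell)} = x_{i\sigma_i(\ell)}$ collapses $\sum_{\ell=1}^{m-1}(a_\ell - a_{\ell+1})\,b_\ell + a_m b_m$ to $a_1 + \sum_{\ell=1}^{m} x_{i\sigma_i(\ell)}$, which exactly cancels the subtracted terms in $\mathbb{P}_i[S_i(0)]$.

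Second, for the marginal claim I would fix a supplier $j = \sigma_i(k)$ and evaluate $\sum_{S} \mathbb{P}_i[S]\,\phi_i(j,S)$. Since $j \in S_i(\ell)$ if and only if $\ell \geq k$, and $\phi_i(j, S_i(\ell)) = u_{ij}/b_\ell$ on those sets, each product $\mathbb{P}_i[S_i(\ell)]\,\phi_i(j, S_i(\ell))$ simplifies because the factor $b_\ell$ cancels, leaving $(a_\ell - a_{\ell+1})\,u_{ij}$ for $\ell < m$ and $a_m u_{ij}$ for $\ell = m$. Summing over $\ell = k, \dots, m$ telescopes to $a_k\, u_{ij} = (x_{ij}/u_{ij})\,u_{ij} = x_{ij}$, which is the desired marginal.

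The computations are elementary, so the only real subtlety is bookkeeping: keeping the index shift in the Abel summation correct, and ensuring the single LP inequality is applied at the coordinate $j = \sigma_i(1)$ so that $\mathbb{P}_i[S_i(0)] \geq 0$. This is the standard fact that any point satisfying the MNL polyhedron inequalities decomposes as a convex combination of nested assortments, so I expect no conceptual difficulty beyond organizing the two telescoping sums carefully.
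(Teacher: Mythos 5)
Your proof is correct and follows essentially the same route as the paper's (which itself adapts the Topaloglu decomposition): non-negativity from the ordering of $x_{ij}/u_{ij}$ plus the LP constraint at $j=\sigma_i(1)$, then two telescoping sums for the total mass and the marginals. The $a_\ell, b_\ell$ abbreviations and the Abel-summation framing are just notational streamlining of the identical computation.
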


\begin{myproof}
We begin by verifying that the given distribution is a valid probability measure. First, for any \( i \in \mathcal{C} \), since \((\lambda, x)\) is feasible to~\eqref{eq:problem:primal}, we have \( \mathbb{P}_i[S_i(0)] \geq 0 \). Moreover, by construction, \( \mathbb{P}_i[S_i(\ell)] \geq 0 \) for all \( \ell \in \{1, \dots, m\} \). Thus, for any $S\subseteq\mathcal{S}$ we have $\mathbb{P}_i[S]\geq 0$.
In order to show that the total probability sums to 1, we verify the following:
\begin{align*}
   \sum_{\ell=0}^m \mathbb{P}_i[S_i(\ell)] &= 
   \mathbb{P}_i[S_i(0)] + \sum_{\ell=1}^{m-1} \left( \frac{x_{i\sigma_i(\ell)}}{u_{i\sigma_i(\ell)}} - \frac{x_{i\sigma_i(\ell+1)}}{u_{i\sigma_i(\ell+1)}} \right) \left(1 + \sum_{j=1}^\ell u_{i\sigma_i(j)} \right)  +  \frac{x_{i\sigma_i(m)}}{u_{i\sigma_i(m)}}  \left( 1 + \sum_{j=1}^m u_{i\sigma_i(j)} \right) \\
   &= \mathbb{P}_i[S_i(0)] + \left(\frac{x_{i\sigma_i(1)}}{u_{i\sigma_i(1)}}\right) \left(1 + u_{i\sigma_i(1)}\right) + \sum_{\ell=2}^{m} \left(\frac{x_{i\sigma_i(\ell)}}{u_{i\sigma_i(\ell)}}\right) u_{i\sigma_i(\ell)} \\
   &= \mathbb{P}_i[S_i(0)] + \frac{x_{i\sigma_i(1)}}{u_{i\sigma_i(1)}} + \sum_{\ell=1}^{m} x_{i\sigma_i(\ell)} = 1.
\end{align*}
Next, we show that under this construction, the marginal probability that customer \(i\) selects supplier \(\sigma_i(j)\) is exactly \(x_{i\sigma_i(j)}\) for all $j\in \{1,\dots, m\}$:
\begin{align*}
   \mathbb{P}[i \text{ chooses } \sigma_i(j)] 
   &= \sum_{\ell=1}^{m} \mathbb{P}_i[S_i(\ell)] \cdot \phi_i(\sigma_i(j), S_i(\ell))
   = \sum_{\ell=j}^{m} \mathbb{P}_i[S_i(\ell)] \cdot \phi_i(\sigma_i(j), S_i(\ell)) \\
   &= \sum_{\ell=j}^{m-1} \left(\frac{x_{i\sigma_i(\ell)}}{u_{i\sigma_i(\ell)}} - \frac{x_{i\sigma_i(\ell+1)}}{u_{i\sigma_i(\ell+1)}} \right) \left( 1 + \sum_{j'=1}^\ell u_{i\sigma_i(j')} \right) \left( \frac{u_{i\sigma_i(j)}}{1 + \sum_{j'=1}^\ell u_{i\sigma_i(j')}}\right) \\
   &\quad + \left( \frac{x_{i\sigma_i(m)}}{u_{i\sigma_i(m)}} \right) \left(1 + \sum_{j'=1}^m u_{i\sigma_i(j')} \right) \left( \frac{u_{i\sigma_i(j)}}{1 + \sum_{j'=1}^m u_{i\sigma_i(j')}}\right) \\
   &= \sum_{\ell=j}^{m-1} \left( \frac{x_{i\sigma_i(\ell)}}{u_{i\sigma_i(\ell)}} - \frac{x_{i\sigma_i(\ell+1)}}{u_{i\sigma_i(\ell+1)}} \right) u_{i\sigma_i(j)} + \left(\frac{x_{i\sigma_i(m)}}{u_{i\sigma_i(m)}} \right) u_{i\sigma_i(j)} \\
   &= x_{i\sigma_i(j)},
\end{align*}
where the second equality holds since $\phi_i(\sigma_i(j), S_i(\ell))=0$ as $\sigma_i(j)\not \in S_i(\ell)$ for all $\ell\in \{1,\dots,j-1\}$, and the third equality follows by $\phi_i(\sigma_i(j), S_i(\ell)) =( \frac{u_{i\sigma_i(j)}}{1 + \sum_{j'=1}^\ell u_{i\sigma_i(j')}})$ for all $\ell\in \{j,\dots, m\}$.
Therefore, the given distribution recovers the desired marginal choice probabilities under the MNL choice model.
\end{myproof}

We now present Algorithm~\ref{alg:Randomized Static 1/2}, a simple randomized static policy that, for any \(\delta > 0\), takes a \((1 - \delta)\)-approximate solution \((\lambda,x)\) to~\eqref{eq:problem:primal}, and samples an assortment for each customer \(i\) from the distribution \(\mathbb{P}_i\) constructed for Lemma~\ref{lemma:MNL-polyhedron-distribution}. We conclude this section by showing that this algorithm achieves  a \(\frac{1}{2}\times(1 - \delta)\)-approximation for general revenues and a \((1 - \tfrac{1}{e})\times(1 - \delta)\)-approximation when revenues are uniform per supplier, thereby proving Theorem~\ref{thm:general_revenues}.

\begin{algorithm}[h]
 \caption{Randomized Static Assortment for~\atar}\label{alg:Randomized Static 1/2}
 \begin{algorithmic}[1]
\State Compute a $(1-\delta)$-approximate solution $(\lambda,x)$ to~\eqref{eq:problem:primal} as in Section \ref{sec:algLP2}.
\State Use the solution $x$ to construct the probability distribution  $\mathbb{P}_i$ over subsets of suppliers for all $i \in \C$ as in Lemma~\ref{lemma:MNL-polyhedron-distribution}.
    \State For each $i\in \mathcal{C}$, sample an assortment \(S_i^*\subseteq\mathcal{S}\)
           according to distribution $\mathbb{P}_i$,  and offer \(S_i^*\) to customer \(i\).
 \State Observe the choices made by all customers. For all $j\in \mathcal{S}$, consider the final backlog $C_j$ of supplier $j$, i.e., the set of all customers who chose supplier $j$. Offer supplier $j$ an optimal assortment $C_j^*\subseteq C_j$ that maximizes its expected revenue, i.e., $C_j^*\in\underset{C\subseteq C_j} {\arg\max} \ R_j(C) $.
 \end{algorithmic}
 \end{algorithm}


\vspace{2mm}
\begin{myproof}[{\bf Proof of Theorem~\ref{thm:general_revenues}}]
For $\delta > 0$, let \((\lambda, x)\) be a \((1 - \delta)\)-approximate solution to~\eqref{eq:problem:primal}, and run Algorithm~\ref{alg:Randomized Static 1/2} with this input. By Lemma~\ref{lemma:MNL-polyhedron-distribution}, the probability that supplier \(j\) observes customer \(i\) in its backlog is exactly \(x_{ij}\). Moreover, for every supplier \(j \in \mathcal{S}\), the events that supplier \(j\) observes each customer \(i\) are mutually independent across customers. 
Thus, for any subset \(C \subseteq \mathcal{C}\), the probability that supplier \(j\) ends up observing exactly the set \(C\) of customers in its backlog is given by
\[
\lambda^{{\sf ind}}_{j,C}
  \;=\;
  \prod_{i \in C} x_{ij}
  \prod_{i \in \mathcal{C} \setminus C} (1 - x_{ij}).
\]
This defines a product distribution over customer subsets, where for each customer \(i\), inclusion in supplier \(j\)'s backlog is modeled as an independent Bernoulli random variable with parameter \(x_{ij}\). We refer to this distribution as the \emph{independent} distribution for supplier \(j\). It is easy to verify that \((\lambda^{{\sf ind}}, x)\) forms a feasible solution to~\eqref{eq:problem:primal}.

Let \({\sf ALG}_{\sf static}\) be the expected revenue obtained by Algorithm~\ref{alg:Randomized Static 1/2}. For any subset $C$ of customers, recall $g_j(C)$ is the optimal expected revenue obtained from supplier $j$. Hence, we have
\begin{equation}\label{eq:objective_c_onesidedstatic}
{\sf ALG}_{\sf static}=\sum_{j\in \mathcal{S}} \sum_{C\subseteq \mathcal{C}}
  g_j(C)\, \cdot \lambda^{{\sf ind}}_{j,C}.
\end{equation}
Since $(\lambda^{\sf ind}, x)$ and $(\lambda, x)$ are both feasible to \eqref{eq:problem:primal}, \(\lambda^{\sf ind}_{j,\cdot}\) and \(\lambda_{j,\cdot}\) share the same marginal values for every \(j\in\mathcal{S}\). Therefore, Lemma~\ref {lemma:MNL 1/2 coerrelation gap}, for all $j \in \mathcal{S}$, yields
\[
\frac{\sum_{C \subseteq \mathcal{C}} g_j(C) \cdot\lambda^{\mathrm{ind}}_{j,C}}{\sum_{C \subseteq \mathcal{C}} g_j(C) \cdot\lambda_{j,C}} \geq \frac{1}{{\sf CorrGap}},
\]
and aggregating over all $j \in \mathcal{S}$ implies that
\begin{equation}\label{eq:correlation-gap}
    \frac{\sum_{j \in \mathcal{S}} \sum_{C \subseteq \mathcal{C}} g_j(C) \cdot \lambda^{\mathrm{ind}}_{j,C}}{\sum_{j \in \mathcal{S}} \sum_{C \subseteq \mathcal{C}} g_j(C) \cdot \lambda_{j,C}} \geq \frac{1}{{\sf CorrGap}}.
\end{equation}

Finally, recall from Lemma \ref{lemma:equivalent-LPs} that \eqref{eq:problem:primal} is a relaxation of \atar. Let \((\lambda^*,x^*)\) be an optimal solution to \eqref{eq:problem:primal}, and \(\mathsf{OPT}\) be the optimal value of~\atar. Therefore, 
\begin{align}\label{eq:final-ratio}
\frac{{\sf ALG}_{\sf static}}{\mathsf{OPT}} \geq \frac{{\sf ALG}_{\sf static}}{\eqref{eq:problem:primal}} 
=&\frac{\sum_{j\in\mathcal{S}}\sum_{C\subseteq\mathcal{C}}g_j(C)\cdot\lambda_{j,C}^{\sf ind}}{\sum_{j\in\mathcal{S}}\sum_{C\subseteq\mathcal{C}}R_j(C)\cdot\lambda_{j,C}^{\star}}\notag\\
=&\frac{\sum_{j\in\mathcal{S}}\sum_{C\subseteq\mathcal{C}}g_j(C)\cdot\lambda_{j,C}^{\sf ind}}{\sum_{j\in\mathcal{S}}\sum_{C\subseteq\mathcal{C}}g_j(C)\cdot\lambda_{j,C}}\times \frac{\sum_{j\in\mathcal{S}}\sum_{C\subseteq\mathcal{C}}g_j(C)\cdot\lambda_{j,C}}{\sum_{j\in\mathcal{S}}\sum_{C\subseteq\mathcal{C}}R_j(C)\cdot\lambda_{j,C}^{\star}}\\
\geq&\frac{1}{{\sf CorrGap}}\times (1-\delta),\notag
\end{align}
where the first component of the last inequality follows from Equation~\eqref{eq:correlation-gap}, while the second term follows from \((\lambda,x)\) being a \((1 - \delta)\)-approximate solution to~\eqref{eq:problem:primal}, and the fact that, for all \( j \in \mathcal{S} \), the function \( g_j \) dominates \( R_j \). These together imply that \( \sum_{j \in \mathcal{S}} \sum_{C \subseteq \mathcal{C}} g_j(C) \cdot \lambda_{j,C} \geq \sum_{j \in \mathcal{S}} \sum_{C \subseteq \mathcal{C}} R_j(C) \cdot \lambda_{j,C}\geq (1-\delta)\sum_{j \in \mathcal{S}} \sum_{C \subseteq \mathcal{C}} R_j(C) \cdot \lambda_{j,C}^{\star} \).

Substituting the result of Lemma \ref{lemma:MNL 1/2 coerrelation gap} in Equation \eqref{eq:final-ratio} and choosing $\delta$ appropriately small, yields that for any $\epsilon>0$, Algorithm~\ref{alg:Randomized Static 1/2}, achieves a \((\frac{1}{2} - \epsilon)\)-approximation for~\atar~in general case where revenues are heterogeneous, and obtains a \((1-\frac{1}{e} - \epsilon)\)-approximation in the special case where revenues are uniform for each supplier.
\end{myproof}



\vspace{4mm}
\section{$\frac{1}{2}$-Approximation Algorithm for~\atar~with Same‐Order Revenues}\label{sec:1/2-greedy-approx}

In this section, we establish Theorem~\ref{thm:separable_revenues} by presenting an adaptive greedy algorithm for~\atar~under a widely used structural assumption on pairwise revenues. Under this assumption, we improve the approximation guarantee from \((\tfrac{1}{2} - \epsilon)\) for any $\epsilon > 0$ to an exact \(\tfrac{1}{2}\). While the improvement in approximation is modest, the key advantage lies in significantly better practical efficiency. In contrast, the randomized algorithm from Theorem~\ref{thm:general_revenues} involves solving~\eqref{eq:problem:primal} using the Ellipsoid method, which requires repeated calls to an FPTAS-based separation oracle, making it computationally demanding. Moreover, since the policy is randomized, it may need to be run multiple times to ensure consistent performance, further adding to the practical complexity.

\vspace{2mm}
\noindent {\bf Same‐Order Revenue Structure.} Let \(\{r_{ij}\}_{i \in \mathcal{C},\, j \in \mathcal{S}}\) represent pair revenues in a two-sided platform, we say that the platform admits a \emph{same-order} revenue structure across suppliers if there exists a permutation over customers \(\sigma_{\mathcal{C}}: \mathcal{C} \to \mathcal{C}\) such that for all \(j \in \mathcal{S}\), we have \(r_{\sigma_{\mathcal{C}}(1)j} \geq r_{\sigma_{\mathcal{C}}(2)j} \geq \dots \geq r_{\sigma_{\mathcal{C}}(n)j}.\) This condition implies that all suppliers rank customers in a common descending order with respect to the pair revenues in the bipartite graph.  

Note that the same-order condition holds for a variety of two-sided platforms. For example, consider a platform where each customer \(i \in \mathcal{C}\) submits a bid \(r_i\) indicating her willingness to pay, and each supplier \(j \in \mathcal{S}\) submits a bid \(r_j\) representing the maximum payment they offer upon a successful match. Assume that the platform's revenue from matching a customer \(i\) with a supplier \(j\) is given by a function \(\mathcal{R}:\RR^2_+\to \RR_+\) of the two bids, i.e., $r_{ij}=\mathcal{R}(r_i, r_j)$. If this function is monotone in \(r_i\) with a fixed direction (non-increasing or non-decreasing) for all \(r_j\), then the revenue structure \(\{r_{ij}\}_{i \in \mathcal{C},\, j \in \mathcal{S}}\) satisfies the same-order condition, meaning that all suppliers rank customers in the same order. 

For instance, the following revenue structures satisfy this condition, when for each \(i \in \mathcal{C},\ j \in \mathcal{S}\) we have 
\(r_{ij} = r_i\), where the platform's revenue depends solely on the price paid by the customer; 
\(r_{ij} = r_j\), where it depends solely on the supplier's payment to the platform; 
\(r_{ij} = r_i + r_j\), where it is the additive contribution of both customer and supplier payments; 
\(r_{ij} = r_i \cdot r_j\), where, though less common, it is the product of payments from both sides; 
\(r_{ij} = \min\{r_i, r_j\}\), where it is the minimum of bid payments from each side; and 
\(r_{ij} = \max\{r_i, r_j\}\), where it is the maximum of bid payments from each side.


\subsection{Greedy Algorithm}

When revenues admit the same‐order property, for all $j\in\mathcal{S}$, the optimal revenue function $g_j$ is submodular order with respect to a common order over customers. Exploiting this property, we develop a deterministic adaptive greedy algorithm that effectively incorporates observed agent choices at each decision step. Unlike Algorithm~\ref{alg:Randomized Static 1/2}, which necessitates randomized policies, our adaptive greedy algorithm directly leverages sequentially obtained information, achieving a precise \(\frac{1}{2}\)-approximation factor.

In a setting with a same-order revenue structure, our algorithm proceeds as follows. The platform takes a common descending order of customers, based on pairwise revenues, for all suppliers, and process customers in this order. At each step, the platform offers the current customer \(i\in\mathcal{C}\) a set of suppliers \(S_i\subseteq\mathcal{S}\) that maximizes the marginal increase in expected revenue, observes the customer’s choice, and moves to the next customer. After all customers are processed, for each supplier \(j\in\mathcal{S}\) let \(C_j^n\) denote the backlog of customers who chose \(j\). For all $j\in\mathcal{S}$, the platform then selects a subset \(C_j^{n*}\subseteq C_j^n\) that maximizes the expected revenue \(R_j(C_j^{n*})\) and offers it to supplier \(j\).

\begin{algorithm}[h]
 \caption{Same-Order Greedy for~\atar}\label{alg:Submodular_order_greedy}
 \begin{algorithmic}[1]
 \State Consider a common descending order of  $i_1,\ldots,i_n$ over the set of customers $\mathcal{C}$ with respect to pairwise revenues for each supplier.
 \State For each $j\in \mathcal{S}$, set $C^0_j=\emptyset$.
\For{$t=1,\ldots,n$}
\State Offer assortment $S_{i_t} \in {\arg\max}_{S \subseteq \mathcal{S}} \left\{ \sum_{j\in S} g_{j}(i_t|C^{t-1}_j) \cdot \phi_{i_t}(j,S)\right\}$ to customer $i_t$.
\State Observe the choice $\ell\in S_{i_t}\cup\{0\}$ of customer $i_t$ which happens w.p. $\phi_{i_t}(\ell,S_{i_t})$.
\State If $\ell\neq 0$, update $C_{\ell}^t \leftarrow C_{\ell}^{t-1}\cup\{i_t\}$ and $C_j^{t} \leftarrow C_j^{t-1}$ for all $j\in\mathcal{S}\setminus\{\ell\}$. Otherwise, $C_j^{t} \leftarrow C_j^{t-1}$ for all $j\in\mathcal{S}$.
\EndFor
\State For all $j\in \mathcal{S}$, consider the final backlog $C_j$ of supplier $j$, i.e., the set of all customers who chose supplier $j$. Offer supplier $j$ an optimal assortment $C_j^*\subseteq C_j$ that maximizes its expected revenue, i.e., $C_j^*\in\underset{C\subseteq C_j} {\arg\max} \ R_j(C) $.
 \end{algorithmic}
 \end{algorithm}
Please note that in Step 4 of Algorithm~\ref{alg:Submodular_order_greedy}, we solve a classical assortment optimization problem under MNL. It is well known that under MNL, the optimal assortment is revenue-ordered; that is, any optimal solution includes the top-\(k\) products with the highest revenues for some \(k \in \{0,1,\dots,m\}\). By sorting the products in decreasing order of revenue and evaluating the expected revenue for each prefix of the sorted list, we can identify the optimal assortment in \(\mathcal{O}(m\log(m))\) time. Consequently, Step 4 can be implemented efficiently, and the overall runtime of the algorithm remains polynomial in the size of customers and suppliers.

 \subsection{Proof of Theorem~\ref{thm:separable_revenues}}
By the same-order condition, all suppliers rank customers in a common descending order with respect to pairwise revenues in the bipartite graph. Under this ordering, for all \( j \in \mathcal{S} \), the optimal revenue function \( g_j \) satisfies the submodular order property, implying that processing customers in this order allows us to leverage the diminishing returns property for each newly processed customer. 

In our analysis, we adapt the randomized primal-dual argument from the analysis of the greedy \(\frac{1}{2}\)-approximation algorithm proposed by \citet{housni2024twosided} for their \emph{one-sided adaptive} matching problem, which is a special case of~\atar\ with equal revenues across all pairs. At each step, their algorithm chooses a customer arbitrarily, and offers her the assortment that maximizes the marginal increase of suppliers’ demand functions, i.e., the probability of selecting an item from a given assortment. 
Their analysis does not directly apply to~\atar~as it exploits the submodularity of the demand functions, an assumption that does not hold for the optimal revenue functions. However, we are still able to derive the desired result by leveraging the existence of a common descending revenue order of customers and the submodular order property of the optimal revenue functions.

To this end, fix a descending order of customers \(i_1,\dots,i_n\) for which each supplier’s optimal revenue function \(g_j\) satisfies the submodular order property, such order exists as ensured by our same-order revenue assumption. Treat this ordering as the customer processing order, and for any \(i\in\mathcal{C}\), \(j\in\mathcal{S}\) and \(C\subseteq\mathcal{C}\), define the marginal contribution $g_j(i \mid C) := g_j\bigl(C\cup\{i\}\bigr) - g_j(C).$ These marginals serve as the basis for both the greedy selections in the algorithm and the construction of the dual solution in our analysis.

Our guarantee will hold against the LP relaxation \eqref{eq:relaxation_onesided_customers}. The dual program for \eqref{eq:relaxation_onesided_customers} is:
\begin{align}
 \begin{aligned}
 \min \quad & \sum_{j\in \mathcal{S}}\beta_j + \sum_{i\in \mathcal{C}} \alpha_i &\\
 \text{s.t.} \quad &  \beta_j +\sum_{i\in C}\gamma_{i,j} \geq g_j(C),  &\forall  C\subseteq\mathcal{C}, \ j\in \mathcal{S}, \notag\\
 & \alpha_i - \sum_{j\in S}\gamma_{i,j}\cdot \phi_{i}(j,S) \geq 0, &\forall S\subseteq\mathcal{S}, \ i\in \mathcal{C},\notag\\
 & \alpha_i, \ \beta_j, \ \gamma_{i,j}\in\RR, &\forall i\in \mathcal{C}, \ j\in \mathcal{S}.\notag\\
\end{aligned}\label{eq:dual_relaxation_onesided_customers}\tag{\textsc{Dual--I}}
\end{align}
Let \({\sf ALG}_{\sf greedy}\) denote the expected revenue obtained by Algorithm~\ref{alg:Submodular_order_greedy}. We construct random variables \(\tilde{\alpha}_i\), \(\tilde{\gamma}_{i,j}\), and \(\tilde{\beta}_j\) based on the decisions made by Algorithm~\ref{alg:Submodular_order_greedy} along the customer processing sequence. We then demonstrate that their expectations form a feasible dual solution whose objective value is exactly $2$ times \({\sf ALG}_{\sf greedy}\). Thus, by strong duality, \({\sf ALG}_{\sf greedy}\) is at least half of \eqref{eq:relaxation_onesided_customers}.

\paragraph{Construction of the Variables.} At step \(t\), as customer \(i_t\) is processed and offered assortment \(S_{i_t}\), we define the random variable \(\tilde{\alpha}_{i_t}\) by
\[
\tilde{\alpha}_{i_t}
= g_{\xi_{i_t}}\bigl(\{i_t\}\mid C^{t-1}_{\xi_{i_t}}\bigr)
= g_{\xi_{i_t}}\bigl(C^{t-1}_{\xi_{i_t}}\cup\{i_t\}\bigr) - g_{\xi_{i_t}}\bigl(C^{t-1}_{\xi_{i_t}}\bigr),
\]
where \(\xi_{i_t}\) denotes the supplier chosen by \(i_t\) (if \(i_t\) selects the outside option, then \(\tilde{\alpha}_{i_t}=0\)).  For each \(j\in S_{i_t}\cup\{0\}\), the probability that \(\xi_{i_t}=j\) is \(\phi_{i_t}(j,S_{i_t})=\frac{u_{ij}}{1+\sum_{\ell\in S_{i_t}} u_{i\ell}}\), and is zero otherwise.  Thus \(\tilde{\alpha}_{i_t}\) depends on both the history of previous choices and the current selection.  Similarly, for every supplier \(j\in\mathcal{S}\), we set $\tilde{\gamma}_{i_t,j} = g_{j}\bigl(\{i_t\}\mid C^{t-1}_j\bigr)$. Observe that \(\tilde{\gamma}_{i_t,j}\) remains random, as it is determined by the random backlog \(C^{t-1}_j\).  At the final step of the algorithm, we define $\tilde{\beta}_j = g_j\bigl(C^n_j\bigr)$, representing the optimal expected revenue realized from supplier \(j\) after all customers have been processed.

 \paragraph{Feasibility and Objective Value.} We now verify that the expectations of these random variables yield a feasible solution to~\eqref{eq:dual_relaxation_onesided_customers} whose objective value is within a constant factor of the algorithm’s expected revenue. Clearly,
\begin{align*}
\sum_{i_t\in\C}\tilde{\alpha}_{i_t} = \sum_{i_t\in\C}\sum_{j\in\S} g_j(i_t|C^{t-1}_j)\cdot \mathds{1}_{\{\xi_{i_t}=j\}} &=\sum_{j\in\S}\sum_{i_t\in C_j^n} g_j(i_t|C^{t-1}_j)
&=\sum_{j\in\S}\left(g_j(C^n_j) - g_j(\emptyset)\right) = \sum_{j\in\S}g_j(C^n_j).
 \end{align*}

In the second equality, we exploit the fact that for each supplier \(j\in\mathcal{S}\) and time \(t\in\{1,\dots,n\}\), the backlog updates according to
\[
C_j^t = 
\begin{cases}
C_j^{t-1}, & \text{if customer } i_t \text{ does not choose } j,\\
C_j^{t-1}\cup\{i_t\}, & \text{if customer } i_t \text{ chooses } j.
\end{cases}
\]
Moreover, the final backlogs \(C_1^n,\dots,C_m^n\) are mutually disjoint since each customer selects at most one supplier.  Hence, by taking expectations we have
\[
\sum_{i\in\mathcal{C}} \EE\bigl[\tilde{\alpha}_i\bigr] + \sum_{j\in\mathcal{S}}\EE\bigl[\tilde{\beta}_j\bigr]
\;=\;
2 \EE\bigl[\sum_{j\in\mathcal{S}} g_j\bigl(C_j^n\bigr)\bigr]=2{\sf ALG}_{\sf greedy},
\]
where the second equality holds because the algorithm’s expected revenue for any realization of disjoint backlogs \(\{C_j^n\}_{j \in \mathcal{S}}\) is \(\sum_{j \in \mathcal{S}} g_j\bigl(C_j^n\bigr)\). Thus, \({\sf ALG}_{\sf greedy}\) is exactly half of the expected dual objective, thereby, by strong duality, establishing the \(\tfrac{1}{2}\)-approximation guarantee.

 Next, we verify that the expectations of our random variables satisfy the first dual constraint.  Specifically, for every supplier \(j\in\mathcal{S}\) and subset \(C\subseteq\mathcal{C}\), we require
\[
\EE\bigl[\tilde{\beta}_j\bigr] \;+\; \sum_{i\in C}\EE\bigl[\tilde{\gamma}_{i,j}\bigr]
\;\ge\;
g_j(C),
\]
where the expectation is taken over the random customer choices. We now show that this constraint is implied by the \emph{interleaved partition} bound, which applies to submodular order functions.
Given any set \( C \) and a total order \( \pi \) on its elements, \citet{udwani2024submodular} define an \emph{interleaved partition} of \( C \) as a sequence of disjoint sets \( \{O_1, E_1, \ldots, O_k, E_k\} \) partitioning \( C \), where the sets in \( \{O_\ell\}_{\ell=1}^k \) and \( \{E_\ell\}_{\ell=1}^k \) alternate and do not cross in order \( \pi \). They show that for a monotone, sub-additive function \( g \) with submodular order \( \pi \), and any set \( A \) with interleaved partition \( \{O_t, E_t\}_{t=1}^k \), we have
\begin{align}
g(A) \leq g \left( \cup_{t=1}^k E_t \right) + \sum_{t =1}^k g \left( O_t \big | \cup_{\ell=1}^{t - 1} E_\ell \right).\label{eq:interleaved-inequality} 
\end{align}

Now, fix any subset \( C \subseteq \mathcal{C} \) and supplier \( j \in \mathcal{S} \). Define the sequence of sets \( \{O_t, E_t\}_{t=1}^n \) such that for each \( t \in \{1,\ldots,n\} \):
\[
E_t \;=\;
\begin{cases}
\{i_t\} & \text{if } i_t \in C^{n}_j,\\[4pt]
\emptyset & \text{otherwise},
\end{cases}
\qquad\text{and}\qquad
O_t \;=\;
\begin{cases}
\{i_t\} & \text{if } i_t \in C \setminus C^{n}_j,\\[4pt]
\emptyset & \text{otherwise}.
\end{cases}
\]

Note that \(\{O_t, E_t\}_{t=1}^n\) partitions \(C \cup C_j^n\), since for each \(t \in \{1, \dots, n\}\), if \(i_t \in C \cup C_j^n\) then \(i_t\) belongs to exactly one set in \(\{O_t, E_t\}_{t=1}^n\), and if \(i_t \notin C \cup C_j^n\) it belongs to neither. Moreover, since all customers \(i_1, \dots, i_n\) are processed in a common submodular order for all suppliers including \(j\), sets \(\{O_t\}_{t=1}^n\) and \(\{E_t\}_{t=1}^n\) alternate and never cross in this order. Thus, the collections \(\{E_t\}_{t=1}^n\) and \(\{O_t\}_{t=1}^n\) form an interleaved partition of \(C \cup C_j^n\).

By Lemma~\ref{lemma:udwani-MNL-SO}, for all $j\in\mathcal{S}$, the function $g_j$ is monotone, sub-additive, and submodular order with respect to the common descending order of revenues. Hence, applying Equation~\eqref{eq:interleaved-inequality} to the set $C\cup C^n_j$ yields:
\begin{align*}
g_j(C)\leq g_j(C\cup C^n_j)&\leq g_j (\cup_{t=1}^n E_t)+ \sum_{t =1}^n g_j ( O_t \mid \cup_{\ell=1}^{t-1} E_\ell )\\
&= g_j(C^n_j)+  \sum_{t =1}^n g_j(O_t\mid C^{t-1}_j)   = g_j(C^n_j)+   \sum_{i_t \in C\setminus C^n_j} g_j(i_t\mid C^{t-1}_j) \\
  &\leq g_j(C^n_j)+ \sum_{i_t \in C} g_j(i_t\mid C^{t-1}_j)  = \tilde \beta_j+ \sum_{i_t \in C} \tilde\gamma_{i_t,j}.
 \end{align*}
Here, the first inequality holds due to the monotonicity of \(g_j\); the second inequality follows from \(\{O_t, E_t\}_{t=1}^n\) forming an interleaved partition of \(C\cup C_j^n\) and Equation~\eqref{eq:interleaved-inequality}. The first equality holds because \(\bigcup_{\ell=1}^{t-1}E_\ell = C_j^{t-1}\) for all \(t\) by definition of \(E_\ell\)'s. The second equality follows by the definition of \(O_\ell\)'s and the fact that \(g_j(\emptyset\mid C_j^{t-1})=0\). Finally, the last inequality again uses the monotonicity of \(g_j\), which implies \(g_j(i_t\mid C_j^{t-1})\ge0\).

For any \( j \in \mathcal{S} \) and \( C \subseteq \mathcal{C} \), we showed that \( g_j(C) \le \tilde{\beta}_j + \sum_{i \in C} \tilde{\gamma}_{i,j} \); taking expectation over all trajectories then gives \( g_j(C) \le \EE[\tilde{\beta}_j] + \sum_{i \in C} \EE[\tilde{\gamma}_{i,j}] \). Therefore, the first family of constraints in~\eqref{eq:dual_relaxation_onesided_customers} is satisfied. 
Next, we verify that the expected values of the dual variables satisfy the second family of constraints. For every \( i \in \mathcal{C} \) and subset \( S \subseteq \mathcal{S} \), the dual feasibility condition requires:
\begin{equation}
     \mathbb{E}[\tilde{\alpha}_i] - \sum_{j \in S} \mathbb{E}[\tilde{\gamma}_{i,j}] \cdot \phi_i(j, S) \geq 0. \label{eq:aux_primal_dual2}
\end{equation}
To establish this inequality, we condition on the state of the supplier backlogs \( C_j^{t-1} \) at time step \( t \), when customer \( i_t \) is processed. Given these sets, the variables \( \tilde{\gamma}_{i_t,j} \) become deterministic. Thus, the expectation in Equation \eqref{eq:aux_primal_dual2} is solely over the random supplier choice \( \xi_{i_t} \) of customer \( i_t \), which we denote by \( \mathbb{E}_{\xi_{i_t}} \). This gives:
\begin{equation*} \label{eq:aux_primal_dual1}
\mathbb{E}_{\xi_{i_t}}[\tilde{\alpha}_{i_t} \mid C_{\xi_{i_t}}^{t-1}] = \mathbb{E}_{\xi_{i_t}} \left[ g_{\xi_{i_t}}(i_t \mid C_{\xi_{i_t}}^{t-1}) \mid C_{\xi_{i_t}}^{t-1} \right] = \sum_{j \in S_{i_t}} g_j(i_t \mid C_j^{t-1}) \cdot \phi_{i_t}(j, S_{i_t}).
\end{equation*}
Finally, note that
$$
     \sum_{j\in S}\tilde{\gamma}_{i_t,j}\cdot\phi_{i_t}(j,S) = \sum_{j\in S}g_j(i_t|C^{t-1}_j)\cdot\phi_{i_t}(j,S), 
$$ 
where we used that $\phi_{i_t}(j,S)=0$ for any $j\notin S$. Due to step 4 of Algorithm~\ref{alg:Submodular_order_greedy}, conditional on $C_j^{t-1}$ we have:
$$\sum_{j\in S_{i_t}}g_j(i_t|C^{t-1}_j) \cdot\phi_{i_t}(j,S_{i_t})\geq \sum_{j\in S}g_j(i_t|C^{t-1}_j)\cdot\phi_{i_t}(j,S).$$
Equation~\eqref{eq:aux_primal_dual2} then follows by taking an expectation over $C_j^{t-1}$. 

Since the expected value of our dual variables forms a feasible solution to \eqref{eq:dual_relaxation_onesided_customers}, we  conclude that the expected value obtained by running Algorithm~\ref{alg:Submodular_order_greedy} is at least $1/2$ of~\eqref{eq:relaxation_onesided_customers},  and consequently of the original problem. Therefore, the proof of the $1/2$-approximation of Algorithm~\ref{alg:Submodular_order_greedy} follows.

\vspace{4mm}
\section{Numerical Experiments}\label{sec:numerics}
In this section, we numerically evaluate the efficacy and runtime of our algorithms. We synthetically generate MNL preference weights for each customer and supplier, along with pair-dependent revenues. We compare the performance of our randomized static algorithm and the greedy algorithm against the LP upper bound for \atar~to assess the empirical results relative to our worst-case guarantees. We also compare the running times of the static and greedy algorithms.

\subsection{Experimental Setup}
We consider a setting with a two-sided platform consisting of $n$ customers and $m$ suppliers, where agents in each side make choices according to their MNL choice models. Specifically, for all $i \in \mathcal{C}$ and $j \in \mathcal{S}$, the preference weight of customer $i$ for supplier $j$, and vice versa, is sampled from the uniform distribution on $[0.1, 5]$, i.e., $u_{ij},\ w_{ji}  \sim \mathrm{Uniform}[0.1, 5]$. To generate {\em same-order} revenues, we sample \( r^\mathcal{C}_{i},\ r^\mathcal{S}_{j} \sim \mathrm{Uniform}[0.01, 1] \) independently and set \( r_{ij} = r^\mathcal{C}_{i} + r^\mathcal{S}_{j} \). For the general revenue case, we sample \( r_{ij} \sim \mathrm{Uniform}[0.01, 1] \) directly. For each instance we run two algorithms: our randomized static Algorithm~\ref{alg:Randomized Static 1/2}, and the greedy Algorithm~\ref{alg:Submodular_order_greedy}.

In order to run the static algorithm we need to solve~\eqref{eq:problem:primal}. Although we provide a theoretical runtime guarantee for solving~\eqref{eq:problem:primal} using the Ellipsoid method, in our experiments we instead solve it via the column generation method, which is widely used in practice and generally more computationally efficient compared to the Ellipsoid method. To implement this, we employ the FPTAS proposed by~\citet{chen2025fairassortment} as an approximate oracle for the~\eqref{eq:subdual} problem.

 Remember ${\sf ALG}_{\sf static}$ and ${\sf ALG}_{\sf greedy}$ denote the expected revenues obtained by Algorithm~\ref{alg:Randomized Static 1/2} and Algorithm~\ref{alg:Submodular_order_greedy}, respectively. After solving the LP for each instance, we compute ${\sf ALG}_{\sf static}$ using Equation~\eqref{eq:objective_c_onesidedstatic}, and estimate ${\sf ALG}_{\sf greedy}$ by averaging the revenue obtained by the greedy algorithm over $1000$ generated adaptive sample paths. In each generated adaptive sample path, the greedy algorithm selects a customer to serve, offers her an assortment of suppliers, and we simulate her choice using the customer's MNL model. The algorithm then selects the next customer to serve, repeating this process until all customers have been processed. Finally, each supplier is offered a subset of her backlog customers that maximizes her expected revenue.

Note that while our theoretical performance guarantee for Algorithm~\ref{alg:Randomized Static 1/2} holds for general revenue structures, the guarantee for Algorithm~\ref{alg:Submodular_order_greedy} applies only to same-order revenue structures. In our experiments, however, we use Algorithm~\ref{alg:Submodular_order_greedy} as a heuristic for general revenues as well, and compare its performance against Algorithm~\ref{alg:Randomized Static 1/2} in the result section.




\vspace{2mm}
\noindent
{\bf LP Benchmark.}
The worst-case performance guarantee of Algorithm~\ref{alg:Randomized Static 1/2} is stated relative to~\eqref{eq:problem:primal}, as shown in the proof of Theorem~\ref{thm:general_revenues}, whereas the guarantee for Algorithm~\ref{alg:Submodular_order_greedy} in Theorem~\ref{thm:separable_revenues}, is given relative to~\eqref{eq:relaxation_onesided_customers}. It is computationally hard to find an exact solution to either~\eqref{eq:relaxation_onesided_customers} or~\eqref{eq:problem:primal}. However, as shown in Section~\ref{sec:1/2-epsilon-approx}, for any \( \delta > 0 \), we can efficiently obtain a \( (1 - \delta) \)-approximate solution \( (\lambda, x) \) to~\eqref{eq:problem:primal}. Furthermore, the proof of Lemma~\ref{lemma:equivalent-LPs} establishes that~\eqref{eq:problem:primal} provides an upper bound on~\eqref{eq:relaxation_onesided_customers}. 
Therefore, we compare the performance of our algorithms against \( \frac{1}{1 - \delta} \) times the objective value of~\eqref{eq:problem:primal}, evaluated at its \( (1 - \delta) \)-approximate solution, which we denote by ${\sf LP}_{\sf approx}$ and use it as our LP benchmark. 
To evaluate the performance of our algorithms in the experiments, we report the ratios ${\sf ALG}_{\sf static}/{\sf LP}_{\sf approx}$ and ${\sf ALG}_{\sf greedy}/{\sf LP}_{\sf approx}$, which provide a pessimistic estimate since our LP benchmark is an upper bound on both \eqref{eq:relaxation_onesided_customers} and \eqref{eq:problem:primal}.

For each pair of $(n, m)$, we sample $50$ instances and report the average performance ratios with respect to the LP benchmark across all instances solved within a 3-hour time limit, along with the corresponding minimum and maximum values. We also report the average time for solving~\eqref{eq:problem:primal} that is required for our static algorithm. Notably, in all instances the remaining part of our static algorithm and all steps of the greedy algorithm run in less than a second; therefore, we omit reporting their running times in our tables. All the experiments where performed on a machine with an Intel(R) i7, 20 cores clocked at $1.30$GHz, and $32$GB of RAM.


\subsection{Results}
\vspace{1mm}
\noindent
{\bf Same-Order Revenues.} In Table~\ref{table:revenue-comparison}(a), we present the performance of our randomized static algorithm and the greedy algorithm, along with the running time for approximately solving~\eqref{eq:problem:primal}. To compute the  benchmark, ${\sf LP}_{\sf approx}$, we obtain a \( (1 - \delta) \)-approximate solution to~\eqref{eq:problem:primal}, using $\delta = 0.01$ for $n = m \in \{2, 3, 5, 8, 10\}$. For larger instances with \( n = m \in \{12, 15\} \), marked with a star in the table, we use \( \delta = 0.1 \), as solving~\eqref{eq:problem:primal} with \( \delta = 0.01 \) exceeds the 3-hour time limit for these cases. We note that choosing a larger \( \delta \) can potentially lead to a higher LP benchmark, which in turn results in a more pessimistic estimate of performance ratios.
For reference, the theoretical guarantee for the static algorithm when $\delta=0.01$ is 0.495 and when $\delta=0.1$ is 0.45. The theoretical guarantee for the greedy is $0.5$. As shown in all instances, our performance ratios consistently exceed these worst-case theoretical guarantees.

Notably, the greedy algorithm achieves the highest empirical performance, with an average  ratio of $88\%$ in our test instances, compared to $76\%$ for the static algorithm. This improvement highlights the value of adaptivity in the greedy approach. Additionally, the greedy algorithm is significantly faster, typically completing in under a second, while the static algorithm requires solving a large LP, which becomes computationally expensive as the instance size increases.

\vspace{2mm}
\noindent
{\bf General Revenues.} 
In Table~\ref{table:revenue-comparison}(b), we present our results for the general revenue setting. Although our performance guarantee for Algorithm~\ref{alg:Submodular_order_greedy} applies only to same-order revenue structures, here we use Algorithm~\ref{alg:Submodular_order_greedy} as a heuristic for general revenue structures and compare its performance against our randomized static algorithm. While general revenue structures do not necessarily admit a common customer order that all suppliers agree on with respect to pairwise revenues, we relax this condition for heuristic purposes. Specifically, we construct a descending order of customers based on their average pairwise revenue across all suppliers, and run Algorithm~\ref{alg:Submodular_order_greedy} using this order.
Similar to the same-order case, to compute the LP benchmark, we obtain a \( (1 - \delta) \)-approximate solution to~\eqref{eq:problem:primal}, using $\delta = 0.01$ for $n = m \in \{2, 3, 5, 8, 10\}$. For larger instances with \( n = m \in \{12, 15\} \), marked with a star in the table, we use \( \delta = 0.1 \), as solving~\eqref{eq:problem:primal} with \( \delta = 0.01 \) exceeds the 3-hour time limit for these cases. 

Our analysis of Algorithm~\ref{alg:Submodular_order_greedy} does not apply to general revenue structures. Nevertheless, we observe that, even without a theoretical guarantee in the general revenue setting, the greedy algorithm consistently outperforms the static algorithm across all instances, with similar performance ratios as in the case of same-order revenues.

\begin{table}[t]
\centering
\caption{\bf Performance of our static and greedy algorithms against the LP benchmark for~\atar~under different revenue structures. The last column shows the time to approximately solve~\eqref{eq:problem:primal} (in seconds). Instances marked with $^*$ use $\delta = 0.1$.}
\vspace{0.5em}

\begin{minipage}[t]{0.92\textwidth}
\centering
\scalebox{0.95}{
\begin{tabular}{c|ccc|ccc|c}
    \toprule
    & \multicolumn{3}{c|}{${\sf ALG}_{\sf static}/{\sf LP}_{\sf approx}$} & \multicolumn{3}{c|}{${\sf ALG}_{\sf greedy}/{\sf LP}_{\sf approx}$} & Time   \\
    $(n,m)$ & min & mean & max & min & mean & max & \eqref{eq:problem:primal} (s) \\
    \hline
    (2,2) & 0.79 & 0.88 & 0.99 & 0.73 & 0.92 & 0.99 & 0.05 \\
    (3,3) & 0.76 & 0.83 & 0.89 & 0.88 & 0.92 & 0.96 & 1.36 \\
    (5,5) & 0.75 & 0.79 & 0.83 & 0.84 & 0.90 & 0.94 & 87 \\
    (8,8) & 0.74 & 0.76 & 0.79 & 0.88 & 0.93 & 0.97 & 2,463 \\
    (10,10) & 0.73 & 0.75 & 0.77 & 0.87 & 0.89 & 0.91 & 10,726 \\
    $(12,12)^*$ & 0.65 & 0.67 & 0.70 & 0.78 & 0.82 & 0.84 & 542 \\
    $(15,15)^*$ & 0.65 & 0.67 & 0.69 & 0.79 & 0.82 & 0.83 & 2,372 \\
    \bottomrule
\end{tabular}}
\vspace{3mm}

{\textbf{(a)}~Same-Order Revenue Structure}
\end{minipage}

\vspace{3mm} 

\begin{minipage}[t]{0.92\textwidth}
\centering
\scalebox{0.95}{
\begin{tabular}{c|ccc|ccc|c}
    \toprule
    & \multicolumn{3}{c|}{${\sf ALG}_{\sf static}/{\sf LP}_{\sf approx}$} & \multicolumn{3}{c|}{${\sf ALG}_{\sf greedy}/{\sf LP}_{\sf approx}$} & Time \\
    $(n,m)$ & min & mean & max & min & mean & max &  \eqref{eq:problem:primal} (s) \\
    \hline
    (2,2) & 0.81 & 0.93 & 0.99 & 0.70 & 0.93 & 1.01 & 0.05 \\
    (3,3) & 0.79 & 0.88 & 0.99 & 0.79 & 0.92 & 1.00 & 1.00 \\
    (5,5) & 0.78 & 0.84 & 0.94 & 0.84 & 0.90 & 0.94 & 60 \\
    (8,8) & 0.77 & 0.81 & 0.88 & 0.85 & 0.89 & 0.93 & 2,203 \\
    (10,10) & 0.78 & 0.81 & 0.85 & 0.85 & 0.89 & 0.91 & 6,312 \\
    $(12,12)^*$ & 0.69 & 0.72 & 0.75 & 0.78 & 0.81 & 0.83 & 337 \\
    $(15,15)^*$ & 0.69 & 0.71 & 0.72 & 0.78 & 0.81 & 0.82 & 1,748 \\
    \bottomrule
\end{tabular}}
\vspace{3mm}

{\textbf{(b)}~General Revenue Structure}
\end{minipage}

\label{table:revenue-comparison}
\end{table}

\vspace{4mm}
\section{Conclusion}\label{sec:conclusion}
We studied adaptive two-sided assortment optimization for maximizing expected revenue in two-sided platforms where agents on both sides make single choices according to their  MNL choice models. Departing from prior work that focuses on number of matches within this context, we proposed approximation algorithms for revenue maximization under heterogeneous pairwise revenues.
Our work opens several directions for future research. One natural extension is to generalize the model to alternative choice models beyond MNL, such as mixtures of MNL. Another is to study the \emph{fully adaptive} setting in which the platform may alternate between processing customers and suppliers and choose any agent at each step. A further direction is to allow agents to select multiple options, introducing richer interaction patterns and more realistic decision behaviors. Incorporating cardinality constraints on offered assortments is also important, since platforms are often limited in how many items they can display. Finally, the computational hardness of~\atar~remains open and warrants further investigation.

{
\newpage
\addcontentsline{toc}{section}{Bibliography}
\bibliographystyle{plainnat}
\bibliography{ref}

\begin{thebibliography}{21}
\providecommand{\natexlab}[1]{#1}
\providecommand{\url}[1]{\texttt{#1}}
\expandafter\ifx\csname urlstyle\endcsname\relax
  \providecommand{\doi}[1]{doi: #1}\else
  \providecommand{\doi}{doi: \begingroup \urlstyle{rm}\Url}\fi

\bibitem[Agrawal et~al.(2012)Agrawal, Ding, Saberi, and Ye]{agrawal2012price}
Shipra Agrawal, Yichuan Ding, Amin Saberi, and Yinyu Ye.
\newblock Price of correlations in stochastic optimization.
\newblock \emph{Operations Research}, 60\penalty0 (1):\penalty0 150--162, 2012.

\bibitem[Ahmed et~al.(2022)Ahmed, Sohoni, and Bandi]{ahmed2022parameterized}
Asrar Ahmed, Milind~G Sohoni, and Chaithanya Bandi.
\newblock Parameterized approximations for the two-sided assortment
  optimization.
\newblock \emph{Operations Research Letters}, 50\penalty0 (4):\penalty0
  399--406, 2022.

\bibitem[Aouad and Saban(2023)]{aouad2023}
Ali Aouad and Daniela Saban.
\newblock {Online Assortment Optimization for Two-Sided Matching Platforms}.
\newblock \emph{Management Science}, 69\penalty0 (4):\penalty0 2069--2087,
  2023.

\bibitem[Ashlagi et~al.(2022)Ashlagi, Krishnaswamy, Makhijani, Saban, and
  Shiragur]{ashlagi2022assortment}
Itai Ashlagi, Anilesh~K Krishnaswamy, Rahul Makhijani, Daniela Saban, and
  Kirankumar Shiragur.
\newblock Assortment planning for two-sided sequential matching markets.
\newblock \emph{Operations Research}, 70\penalty0 (5):\penalty0 2784--2803,
  2022.

\bibitem[Bertsimas and Tsitsiklis(1997)]{bertsimas-LPbook}
Dimitris Bertsimas and John~N. Tsitsiklis.
\newblock Introduction to linear optimization.
\newblock In \emph{Athena scientific optimization and computation series},
  1997.

\bibitem[Chen et~al.(2025)Chen, Golrezaei, and Susan]{chen2025fairassortment}
Qinyi Chen, Negin Golrezaei, and Fransisca Susan.
\newblock Fair assortment planning.
\newblock 2025.
\newblock arXiv preprint arXiv:2208.07341.

\bibitem[Désir et~al.(2020)Désir, Goyal, and Zhang]{desir2020capacitated}
Antoine Désir, Vineet Goyal, and Jiawei Zhang.
\newblock Capacitated assortment optimization: Hardness and approximation.
\newblock 2020.
\newblock Available at SSRN 2543309.

\bibitem[El~Housni et~al.(2024)El~Housni, Torrico, and
  Hennebelle]{housni2024twosided}
Omar El~Housni, Alfredo Torrico, and Ulysse Hennebelle.
\newblock Two-sided assortment optimization: Adaptivity gaps and approximation
  algorithms.
\newblock 2024.
\newblock arXiv preprint arXiv:2403.08929.

\bibitem[Fradkin(2015)]{fradkin2015}
Andrey Fradkin.
\newblock Search frictions and the design of online marketplaces.
\newblock \emph{The Third Conference on Auctions, Market Mechanisms and Their
  Applications}, 2015.

\bibitem[Jansen(2003)]{jansen2003approximate}
Klaus Jansen.
\newblock Approximate strong separation with application in fractional graph
  coloring and preemptive scheduling.
\newblock \emph{Theoretical Computer Science}, 302\penalty0 (1-3):\penalty0
  239--256, 2003.

\bibitem[Kunnumkal et~al.(2010)Kunnumkal, Rusmevichientong, and
  Topaloglu]{kunnumkal2010}
Sumit Kunnumkal, Paat Rusmevichientong, and Huseyin Topaloglu.
\newblock Assortment optimization under multinomial logit model with product
  costs.
\newblock 2010.
\newblock Available at
  \url{https://drive.google.com/file/d/1fkGNk_7sPdo24T7WZAE51isRvm4vzjVR}.

\bibitem[Lu et~al.(2023)Lu, Sahin, and Wang]{lu2023fairassortments}
Wentao Lu, Ozge Sahin, and Ruxian Wang.
\newblock A step towards fairer assortments: Algorithms and welfare
  implications.
\newblock 2023.
\newblock Available at SSRN 4514495.

\bibitem[Manshadi and Rodilitz(2022)]{manshadi2022}
Vahideh Manshadi and Scott Rodilitz.
\newblock {Online Policies for Efficient Volunteer Crowdsourcing}.
\newblock \emph{Management Science}, 68\penalty0 (9):\penalty0 6572--6590,
  2022.

\bibitem[Nissim et~al.(2024)Nissim, Segev, and Torrico]{danny2024}
Dan Nissim, Danny Segev, and Alfredo Torrico.
\newblock Revenue maximization in choice-based matching markets.
\newblock 2024.
\newblock Available at SSRN 5016287.

\bibitem[Rios and Torrico(2024)]{rios2024}
Ignacio Rios and Alfredo Torrico.
\newblock Platform design in curated dating markets.
\newblock 2024.
\newblock arXiv preprint arXiv:2308.02584.

\bibitem[Rios et~al.(2023)Rios, Saban, and Zheng]{rios2023}
Ignacio Rios, Daniela Saban, and Fanyin Zheng.
\newblock Improving match rates in dating markets through assortment
  optimization.
\newblock \emph{Manufacturing \& Service Operations Management}, 25\penalty0
  (4):\penalty0 1304--1323, 2023.

\bibitem[Rochet and Tirole(2003)]{rochet2003}
Jean-Charles Rochet and Jean Tirole.
\newblock Platform competition in two-sided markets.
\newblock \emph{Journal of the european economic association}, 1\penalty0
  (4):\penalty0 990--1029, 2003.

\bibitem[Shi(2022)]{shi2022}
Peng Shi.
\newblock Optimal match recommendations in two-sided marketplaces with
  endogenous prices.
\newblock 2022.
\newblock Available at SSRN 4037114.

\bibitem[Topaloglu(2013)]{Topaloglu2013}
Huseyin Topaloglu.
\newblock Joint stocking and product offer decisions under the multinomial
  logit model.
\newblock \emph{Production and Operations Management}, 22\penalty0
  (5):\penalty0 1182--1199, 2013.

\bibitem[Torrico et~al.(2023)Torrico, Carvalho, and
  Lodi]{torrico2023multiagent}
Alfredo Torrico, Margarida Carvalho, and Andrea Lodi.
\newblock Multi-agent assortment optimization in sequential matching markets.
\newblock 2023.
\newblock arXiv preprint arXiv:2006.04313.

\bibitem[Udwani(2023)]{udwani2024submodular}
Rajan Udwani.
\newblock Submodular order functions and assortment optimization.
\newblock \emph{International Conference on Machine Learning}, pages
  34584--34614, 2023.

\end{thebibliography}
}

\newpage

\begin{APPENDICES}
\section*{Appendix}
 \section{Properties of Optimal Revenue Function under MNL}
\label{Appendix:proof:non-submodularity-g_j}
Let $\mathcal{N}$ be a finite set of alternatives. The MNL-based optimal revenue function is defined as:
$$
g(A)=\underset{A'\subseteq A}{\max}\ \sum_{i\in A'} r_i\,\phi(i,A'), \qquad  \text{for all } A \subseteq \mathcal{N},
$$
where $r_i \ge 0$ is the revenue of alternative $i \in \mathcal{N}$, and $\phi(i, A')$ is the MNL choice probability $\phi(i, A') = \frac{u_i}{u_0 + \sum_{ k \in A'} u_k}, \ \text{for } i \in A'.$ We normalize the utility of the outside option to $u_0 = 1$.

\vspace{3mm}
\noindent {\bf Monotonicity of the Optimal Revenue Function.} Let $A \subseteq B \subseteq \mathcal{N}$. Any subset $A' \subseteq A$ is also contained in $B$, so
$$
g(B) \ge \sum_{i \in A'} r_i\, \phi(i, A'), \quad \text{for all } A' \subseteq A.
$$
By selecting the maximizer $A^* \subseteq A$ for $g(A)$, we get
$$
g(B) \ge \sum_{i \in A^*} r_i\, \phi(i, A^*) = g(A),
$$
thus $g$ is a monotone function.

\vspace{3mm}
\noindent {\bf Non-submodularity of the Optimal Revenue Function.}
Now we show the optimal revenue function $g$ is not submodular. Let $\mathcal{N} = \{1,2,3\}$, and define:
$$
\begin{aligned}
&u_1 = 1,\quad u_2 = 1,\quad u_3 = 3,\quad u_0 = 1,\\
&r_1 = 4,\quad r_2 = 3,\quad r_3 = 2.\\
\end{aligned}
$$
We compute value of function $g$ on some relevant sets:
$g(\{1,3\}) =2$, 
$g(\{3\}) = \frac{3}{2}$, 
$g(\{1,2,3\}) =\frac{7}{3}$, 
$g(\{2,3\}) =\frac{9}{5}$.
Now, let us compare marginal gains:
\begin{align*}
&g(\{1,3\}) - g(\{3\}) = 2 - \frac{3}{2} = 0.5\\
&g(\{1,2,3\}) - g(\{2,3\}) =\frac{7}{3}-\frac{9}{5}\approx 0.53
\end{align*}
Thus, we have $g(\{1,3\}) - g(\{3\}) < g(\{1,2,3\}) - g(\{2,3\})$. Therefore, the marginal gain of adding item $1$ to $\{2,3\}$ is greater than to $\{3\}$, contradicting submodularity. Hence, the diminishing returns condition is violated, and function $g$ is not submodular.

This proof shows that under heterogeneous revenues, the optimal revenue function $g$ is monotone but not submodular.

\section{Proofs from Section~\ref{sec:1/2-epsilon-approx}}
\subsection{Proof of Lemma~\ref{lemma:LPrelaxation-1}~\label{proof:sec:1/2-epsilon-approx:LP-relxation}}

    We present an adapted version of the proof of Lemma 2 in \citet{housni2024twosided} to prove our Lemma~\ref{lemma:LPrelaxation-1} by constructing a feasible solution $(\lambda,\tau)$ to~\eqref{eq:relaxation_onesided_customers} from the sample paths produced by an optimal policy in \atar. Denote by $\pi^\star$ an optimal policy for~\atar, and let $\Omega$ represent the set of all sample paths under this policy, with an individual sample path denoted by $\omega$. Each sample path corresponds to a trajectory in the decision tree induced by $\pi^\star$, beginning at the root (when the first customer is served) and ending at a leaf (when the final customer is processed and makes a selection). Recall that in~\atar, customers are processed sequentially, followed by the processing suppliers. In any sample path of~$\pi^\star$, each supplier is assigned an optimal subset of the customers from its backlog, that is, a subset that maximizes the platform’s expected revenue from that supplier, given the corresponding set of customers that choose each of them.

Given a sample path~$\omega$, let $S_i^\omega$ denote the assortment offered to customer $i \in \mathcal{C}$, and let  $\xi_i^\omega\in S_i^\omega\cup\{0\}$ represent the option selected by customer~$i$. Note that for a fixed~$\omega$, the choice~$\xi_i^\omega$ is deterministic. Additionally, define $C_j^\omega$ as the set of customers who selected supplier~$j \in \mathcal{S}$ along the path~$\omega$.

For each $\omega \in \Omega$, let $\PP[\omega]$ denote the probability of that sample path. This probability is given by $\PP[\omega] = \prod_{i \in \mathcal{C}} \phi_i(\xi_i^\omega, S_i^\omega)$. Using this, we define our variables as follows: for every $j \in \mathcal{S}$ and $C \subseteq \mathcal{C}$,
\begin{align*}
\lambda_{j,C} = \sum_{\omega\in\Omega: C^\omega_j=C}\PP[\omega] &=    \sum_{\omega\in\Omega: C^\omega_j=C}\prod_{i\in\C}\phi_i(\xi_i^\omega,S_i^w)   \\&= \sum_{\substack{\omega\in\Omega:\\ C^\omega_j=C}}\prod_{i\in C}\phi_i(j,S^\omega_i)\prod_{i\in\C\setminus C}\phi_i(\xi_i^\omega,S^\omega_i),
\end{align*}
and for all $i\in\C, \ S\subseteq\S$,
\[
\tau_{i,S} = \sum_{\omega\in\Omega: S^\omega_i=S}\PP[\omega_{\text{-}i}] = \sum_{\substack{\omega\in\Omega:\\ S^\omega_i=S}}\prod_{k\in\C\setminus\{i\}}\phi_k(\xi_k^\omega,S^\omega_k), 
\]
here $\PP[\omega_{\text{-}i}] = \prod_{k \in \mathcal{C} \setminus {i}} \phi_k(\xi_k^\omega, S_k^\omega)$ represents the probability of the sample path~$\omega$ excluding customer~$i$’s decision in the tree. This is justified by the fact that under~$\pi^\star$, the assortment~$S$ shown to customer~$i$ is independent of her choice.

Let us prove  these variables satisfy the constraints: First, for any $j\in\S$ we have
\begin{align*}
\sum_{C\subseteq\C}\lambda_{j,C} = \sum_{C\subseteq\C}\sum_{\substack{\omega\in\Omega: \\ C^\omega_j=C}}\prod_{i\in\C}\phi_i(\xi_i^\omega,S_i^w) &=\sum_{\omega\in\Omega}\left(\sum_{C\subseteq\C}\one_{\{C^\omega_j=C\}}\right)\prod_{i\in\C}\phi_i(\xi_i^\omega,S_i^w)\\
&=\sum_{\omega\in\Omega}\prod_{i\in\C}\phi_i(\xi_i^\omega,S_i^w) = \sum_{\omega\in\Omega} \PP[\omega]= 1,
\end{align*}
where $\one_{\{C^\omega_j=C\}}$ is the indicator function that is equal to 1 if $C_j^\omega = C$, and 0 otherwise. The second equality results from reordering the terms in the summation. The third equality holds because, in any given sample path, there is exactly one set of customers assigned to supplier~$j$. The last equality holds because we are summing over all possible paths. Similarly, we can show that for any $i \in \mathcal{C}$, it holds that $\sum_{S \subseteq \mathcal{S}} \tau_{i,S} = 1$. We now establish the constraint in the LP that links the variables $\lambda$ and $\tau$. For any $i \in \mathcal{C}$ and $j \in \mathcal{S}$,
\begin{align*}
\sum_{C\subseteq \mathcal{C}:C\ni i}\lambda_{j,C}&=\sum_{\substack{C\subseteq\C: \\C\ni i}}\sum_{\substack{\omega\in\Omega:\\ C^\omega_j=C}}\prod_{k\in\C}\phi_k(\xi_k^\omega,S_k^\omega)\\
&=\sum_{\substack{C\subseteq\C: \\C\ni i}}\sum_{\substack{\omega\in\Omega:\\ C^\omega_j=C}}\phi_i(j,S^\omega_i)\cdot\prod_{k\neq i}\phi_k(\xi_k^\omega,S_k^\omega)\\
&=\sum_{\substack{C\subseteq\C: \\C\ni i}}\sum_{\substack{\omega\in\Omega:\\ C^\omega_j=C}}\left(\sum_{\substack{S\subseteq\S:\\S\ni j}}\phi_i(j,S)\cdot\one_{\{S=S_i^\omega\}}\right)\cdot\prod_{k\neq i}\phi_k(\xi_k^\omega,S_k^\omega)\\
&=\sum_{\omega\in\Omega}\sum_{\substack{C\subseteq\C: \\C\ni i}}\sum_{\substack{S\subseteq\S:\\S\ni j}}\phi_i(j,S)\cdot\one_{\{S=S_i^\omega\}}\cdot\one_{\{C=C_j^\omega\}}\cdot\prod_{k\neq i}\phi_k(\xi_k^\omega,S_k^\omega)\\
&=\sum_{\omega\in\Omega}\sum_{\substack{S\subseteq\S:\\S\ni j}}\phi_i(j,S)\cdot\one_{\{S=S_i^\omega\}}\left(\sum_{C\subseteq\C\setminus\{i\}}\one_{\{C_j^\omega=C\cup\{i\}\}}\right)\cdot\prod_{k\neq i}\phi_k(\xi_k^\omega,S_k^\omega)\\
&=\sum_{\substack{S\subseteq\S:\\S\ni j}}\sum_{\substack{\omega\in\Omega:\\ S_i^\omega = S}}\phi_i(j,S)\cdot\PP[\omega_{\text{-}i}]=\sum_{S\subseteq\S: S\ni j}\phi_i(j,S)\cdot\tau_{i,S},
\end{align*}
where in the second to last equality we used that in each path there is be exactly one set of customers equal to $C^\omega_j$.

For a fixed sample path \( \omega \in \Omega \), the maximum expected revenue that supplier \( j \in \mathcal{S} \) can obtain by selecting a customer from its optimally chosen subset of \( C_j^\omega \), the backlog set of customers who initially selected \( j \), is given by \( g_j(C_j^\omega) \). Therefore, the objective value achieved by the optimal policy \( \pi^\star \) is expressed as
\begin{align*}
&\sum_{\omega\in\Omega}\left(\sum_{j\in\S}g_j(C_j^\omega)\right)\cdot\PP[\omega] = \sum_{\omega\in\Omega}\left(\sum_{j\in\S}\sum_{C\subseteq\C}g_j(C)\cdot\one_{\{C=C_j^\omega\}}\right)\cdot\PP[\omega]\\
&=\sum_{j\in\S}\sum_{C\subseteq\C}g_j(C)\sum_{\omega\in\Omega}\one_{\{C=C_j^\omega\}}\cdot\PP[\omega]=\sum_{j\in\S}\sum_{C\subseteq\C}g_j(C)\sum_{\substack{\omega\in\Omega:\\ C_j^\omega =C}}\PP[\omega] =\sum_{j\in\S}\sum_{C\subseteq\C}g_j(C)\cdot\lambda_{j,C}.
\end{align*}

\subsection{Proof of Lemma~\ref{lemma:approx}~\label{appendix:proof:lemma:approx}}
Assume for some \(\delta>0\), we have access to \(\mathcal{A}\), a polynomial‐time \((1-\delta)\)-approximation algorithm for Problem~\eqref{eq:subdual} for each \(j\in\mathcal{S}\). The proof of Lemma~\ref{lemma:approx} proceeds in two parts.
Part 1 establishes that given the approximation algorithm \(\mathcal{A}\), there exists an Ellipsoid‐based algorithm that returns a \((1-\delta)\)-approximate solution to~\eqref{eq:problem:primal}.
Part 2 demonstrates that the overall running time of this Ellipsoid-based algorithm is polynomial in the input size.

\noindent{\bf Part 1.} The first part of the proof consists of two steps.  
First, we show that running the Ellipsoid method with a \((1 - \delta)\)-approximate separation oracle on the dual program~\eqref{eq:problem:dual} yields a dual objective value \(r'\) that is at least a \((1 - \delta)\)-factor of the optimal primal objective~\eqref{eq:problem:primal}; specifically, \(r' \ge (1 - \delta) \cdot \text{\eqref{eq:problem:primal}}\).  
Second, for each supplier \(j \in \mathcal{S}\), let \(\mathcal{V}_j \subseteq 2^\mathcal{C}\) denote the collection of subsets of customers that violate the~\eqref{constraint:dual-AC} constraint at the end of the Ellipsoid method iterations. We then introduce an auxiliary dual program~\eqref{eq:problem:dual-aux} that keeps the~\eqref{constraint:dual-AC} constraints only for those sets \(C \in \mathcal{V}_j\) for every \(j \in \mathcal{S}\), together with its corresponding auxiliary primal program~\eqref{eq:problem:primal-aux}, where variables \(\lambda_{j,C}\)'s are enforced to be zero for all \(C \notin \mathcal{V}_j\). We show that the optimal objective values of both auxiliary programs exceed \(r'\), and consequently is at least \((1 - \delta) \cdot \text{\eqref{eq:problem:primal}}\).

\underline{\textit{Step 1.}} To run the Ellipsoid method and solve~\eqref{eq:problem:dual}, we need an efficient separation oracle. An ideal separation oracle consists of three parts: the first verifies whether the~\eqref{constraint:dual-AC} constraint holds for all $j \in \mathcal{S}$; the second checks whether $\frac{\alpha_{ij}}{u_{ij}} + \sum_{\ell \in \mathcal{S}} \alpha_{i\ell} - \gamma_{ij} \geq 0$ for all $i \in \mathcal{C},\ j \in \mathcal{S}$; and the third ensures that $\alpha_{ij} \geq 0$ for all $i \in \mathcal{C},\ j \in \mathcal{S}$. Since the problem~\eqref{eq:subdual} for any $j\in\S$ is NP-complete (see \citet{kunnumkal2010}), we run the Ellipsoid method on~\eqref{eq:problem:dual} using the $(1 - \delta)$-approximate algorithm $\mathcal{A}$ as part of our separation oracle.

Let $(\boldsymbol\alpha',\boldsymbol\beta',\boldsymbol\gamma')$ be the solution returned at termination, and set \(r' \;=\; \sum_{j\in\mathcal{S}}\beta'_j \;+\; \sum_{(i,j)\in\mathcal{C}\times\mathcal{S}}\alpha'_{ij}\) that is the dual objective. Although $(\boldsymbol\alpha',\boldsymbol\beta',\boldsymbol\gamma')$ does not need to satisfy every dual constraint, it is \emph{approximately} feasible in the following sense. When the oracle evaluates this point, the first part of the oracle solves $\subdual_j(\boldsymbol\gamma')$ and produces a set $C_j^{\mathcal{A}}$ such that \(R_j(C_j^{\mathcal{A}})-\sum_{i\in C_j^{\mathcal{A}}}\gamma'_{ij}
\;\;\ge\;\;
(1 - \delta)\cdot\subdual_j(\boldsymbol\gamma').\) The Ellipsoid method accepts the point only if \(R_j(C_j^{\mathcal{A}})-\sum_{i\in C_j^{\mathcal{A}}}\gamma'_{ij}\le\beta'_j\), even though it may still be the case that 
\(
\subdual_j(\boldsymbol\gamma')\;=\;\max_{C\subseteq\mathcal{C}}
\bigl\{R_j(C)-\sum_{i\in C}\gamma'_{ij}\bigr\} \;>\;\beta'_j.
\)
We begin by showing that the dual objective obtained through this process remains close to the optimal value of the primal objective~\eqref{eq:problem:primal}. To establish this, consider solving~\eqref{eq:problem:dual} under the restriction that \(\boldsymbol\alpha = \boldsymbol\alpha'\) and \(\boldsymbol\gamma = \boldsymbol\gamma'\) are fixed. This restricted problem admits a straightforward solution since the only remaining decision variables are the \(\beta_j\)'s, which must be selected to satisfy the constraint in~\eqref{constraint:dual-AC} with equality for each \(j \in \mathcal{S}\). Specifically, we set
\[
\boldsymbol\beta^*_j = \subdual_j(\boldsymbol\gamma') = \max_{C \subseteq \mathcal{C}} \left\{ R_j(C) - \sum_{i \in C} \gamma'_{ij} \right\},
\]
and the resulting objective value is
\[
r_{\boldsymbol\beta^*} = \sum_{j \in \mathcal{S}} \subdual_j(\boldsymbol\gamma') + \sum_{(i,j) \in \mathcal{C} \times \mathcal{S}} \alpha'_{ij}.
\]
It follows directly that \(r_{\boldsymbol\beta^*} \geq \text{\eqref{eq:problem:dual}}\), as fixing \(\boldsymbol\beta = \boldsymbol\beta^*\) is equivalent to introducing more constraints into~\eqref{eq:problem:dual}.
Note that we also have
$$
r' = \sum_{j \in \mathcal{S}} \beta'_j + \sum_{(i,j) \in \mathcal{C} \times \mathcal{S}} \alpha'_{ij} 
\geq (1 - \delta) \cdot  \sum_{j \in \mathcal{S}} \subdual_j(\boldsymbol\gamma') + (1 - \delta) \cdot \sum_{(i,j) \in \mathcal{C} \times \mathcal{S}} \alpha'_{ij}
= (1 - \delta) \cdot  r_{\boldsymbol\beta^*}\,,
$$
where the inequality follows from $\beta'_j \geq R_j(C_j^\mathcal{A})-\sum_{i\in C_j^\mathcal{A}}\gamma'_{ij}\geq (1 - \delta) \cdot \subdual_j(\boldsymbol\gamma')$ and $\alpha'_{ij} \geq 0$.
Together, the two inequalities above give 
\begin{equation}
\label{eqn:key-ineq}
r' \geq (1 - \delta) \cdot r_{\boldsymbol\beta^*}\ge  (1 - \delta) \cdot \eqref{eq:problem:dual} = (1 - \delta) \cdot \eqref{eq:problem:primal}\,,
\end{equation}
where the last equality holds by strong duality.\footnote{Since both~\eqref{eq:problem:primal} and~\eqref{eq:problem:dual} are feasible and bounded, they each admit an optimal solution. By the strong duality theorem (see Theorem~4.4 in~\citet{bertsimas-LPbook}), their optimal values are equal, i.e., \(\eqref{eq:problem:primal} = \eqref{eq:problem:dual}\).} The inequality in~\eqref{eqn:key-ineq} plays a central role and will serve as a foundation for the arguments developed in Step~2.

\underline{\textit{Step 2.}} We first introduce an auxiliary version of the dual problem, defined as follows:
\begin{align}
\begin{aligned}
\min \quad & \sum_{j \in \mathcal{S}} \beta_j + \sum_{(i,j) \in \mathcal{C} \times \mathcal{S}} \alpha_{ij}\\
  \text{s.t.}\quad&  \beta_j \geq R_j(C)-\sum_{i\in C}\gamma_{ij},  &\forall \  C\in \mathcal{V}_j, \ j\in \mathcal{S},\\
     &  \frac{\alpha_{ij}}{u_{ij}} + \sum_{\ell \in \mathcal{S}} \alpha_{i\ell} \geq \gamma_{ij}, &\forall \ i\in \mathcal{C}, \ j\in \mathcal{S},\\
    &   \alpha_{ij} \geq 0,\ \beta_j, \gamma_{ij}\in \mathbb{R}, & \forall \ i\in \mathcal{C}, \ j\in \mathcal{S}.
\end{aligned}
\tag{\textsc{Dual--II--Aux}} \label{eq:problem:dual-aux}
\end{align}
Observe that the auxiliary dual program differs from~\eqref{eq:problem:dual} by keeping constraint~\eqref{constraint:dual-AC} only on sets \(C \in \mathcal{V}_j\) for each \(j \in \mathcal{S}\), where \(\mathcal{V}_j\) denotes the collection of subsets of customers for which constraint~\eqref{constraint:dual-AC} was violated for supplier \(j\) during solving~\eqref{eq:problem:dual} using the Ellipsoid method. As noted in the following remark, the total number of such violated constraints across all $j\in\mathcal{S}$ is polynomial in the input size.
\begin{remark}\label{remark:ellipsoid-runtime}
In~\eqref{eq:problem:dual}, we have $\mathcal{O}(nm)$ variables. As shown by \citet{bertsimas-LPbook}, given a separation oracle, the Ellipsoid method would solve \eqref{eq:problem:dual} in at most $\mathcal{O}((nm)^{6}\log(nmU))$ iterations, where $U$ is the bit complexity of \eqref{eq:problem:dual}. Within each iteration, one violated constraint is identified. Hence, we must have that  $\sum_{j\in \mathcal{S}} |\mathcal{V}_j|$, i.e., the number of sets for which the~\eqref{constraint:dual-AC} constraint in \eqref{eq:problem:dual} is violated for any $j\in\mathcal{S}$ during running the Ellipsoid method, is polynomial in the input size.
\end{remark}

The constraints that were not violated played no role in any of the iterations of the Ellipsoid method when solving~\eqref{eq:problem:dual}, applying the Ellipsoid method, with the same \((1 - \delta)\)-approximate separation oracle, to the auxiliary dual~\eqref{eq:problem:dual-aux} will return the same solution \((\boldsymbol\alpha', \boldsymbol\beta', \boldsymbol\gamma')\), yielding the same objective value \(r'\).
As in our earlier analysis, observe that applying the Ellipsoid method with an approximate separation oracle effectively expands the feasible region of the linear program. As a result, the solution \((\boldsymbol\alpha', \boldsymbol\beta', \boldsymbol\gamma')\) obtained at termination may not be feasible to the original LP, and the corresponding objective value \(r'\) may fall below the true optimum. In particular, we have
\(
r' \leq \eqref{eq:problem:dual-aux}.
\)
Combining this with the inequality established in~\eqref{eqn:key-ineq}, we obtain the following bound:
\begin{equation}
\label{eqn:key-ineq-2}
\eqref{eq:problem:dual-aux} \geq r' \geq (1 - \delta) \cdot \eqref{eq:problem:primal}.
\end{equation}
Consider the primal counterpart to the auxiliary dual problem \eqref{eq:problem:dual-aux} defined as:
\begin{align}
\begin{aligned}
 \max \quad & \sum_{j\in \mathcal{S}}\sum_{C \in \mathcal{C}}R_j(C)\cdot \lambda_{j,C}&\\
\text{s.t.} \quad & \sum_{C\in \mathcal{C}}\lambda_{j,C} = 1,  &\forall \ j\in \mathcal{S}, \notag\\
& \sum_{C\in \mathcal{C}: C\ni i} \lambda_{j,C} =  x_{ij}, &\forall \ i\in \mathcal{C}, \ j\in \mathcal{S}, \notag\\
& \frac{x_{ij}}{u_{ij}} + \sum_{\ell \in \mathcal{S}} x_{i\ell} \leq 1, &\forall \ i\in \mathcal{C}, \ j\in \mathcal{S}, \notag\\
& \lambda_{j,C}=0,  &\forall \ C\not\in \mathcal{V}_j,\ j\in \S, \notag\\
& \lambda_{j,C}, \ x_{ij}\geq 0,  &\forall \ j\in \S, \ C\subseteq \mathcal{C}, \ i\in \mathcal{C}. \notag
    \end{aligned}\label{eq:problem:primal-aux} \tag{\textsc{LP--II--Aux}}
\end{align}

This formulation differs from~\eqref{eq:problem:primal} by imposing \(\lambda_{j,C} = 0\) for all \(C \notin \mathcal{V}_j\) and for every \(j \in \mathcal{S}\). After solving our original dual problem using the Ellipsoid method, we enforce $\lambda_{j,C}=0$ for all $j\in \S$ and $C\not\in \mathcal{V}_j$. As highlighted in Remark~\ref{remark:ellipsoid-runtime}, we will end up with a polynomial number of variables, i.e., $\{ \widehat{x}_{ij}: i\in \mathcal{C},\ j\in \mathcal{S}\}\cup\{ \widehat{\lambda}_{j,C}:j\in \mathcal{S},\ C\in  \mathcal{V}_j\}$, to consider for the auxiliary primal problem. We can then solve the auxiliary primal problem using any polynomial-time algorithm and find an optimal basic feasible solution.

 Let \((\widehat\lambda, \widehat x)\) denote the optimal solution to~\eqref{eq:problem:primal-aux}. By the strong duality\footnote{Both~\eqref{eq:problem:primal-aux} and~\eqref{eq:problem:dual-aux} are feasible and bounded, and therefore each admits an optimal solution. For~\eqref{eq:problem:primal-aux}, setting \(\lambda_{j,C} = 0\) for all \(j \in \mathcal{S}, C \notin \mathcal{V}_j\) yields a feasible solution with objective value at most \(|\mathcal{S}| \cdot \underset{i\in\mathcal{C},j\in\mathcal{S}}{\max}\: r_{ij}\). For~\eqref{eq:problem:dual-aux}, choosing \(\beta_j = \underset{i\in\mathcal{C},j\in\mathcal{S}}{\max}\: r_{ij}\), \(\boldsymbol\alpha = \mathbf{0}\), and \(\boldsymbol\gamma = \mathbf{0}\) yields a feasible solution with nonnegative objective value. These bounds allow us to apply the strong duality theorem (see Theorem~4.4 in~\citet{bertsimas-LPbook}), which ensures that \(\eqref{eq:problem:primal-aux} = \eqref{eq:problem:dual-aux}\).} and Equation~\eqref{eqn:key-ineq-2} we conclude that
\(
\eqref{eq:problem:primal-aux} = \eqref{eq:problem:dual-aux} \geq (1 - \delta) \cdot \eqref{eq:problem:primal}.
\)
This establishes that the optimal solution \((\widehat\lambda, \widehat x)\) obtained by solving~\eqref{eq:problem:primal-aux} is a feasible, and more importantly a \((1 - \delta)\)-approximate solution to~\eqref{eq:problem:primal}.

\noindent{\bf Part 2.} Here we analyze the runtime of the proposed Ellipsoid-based algorithm in Part 1. In Step~1, the Ellipsoid method, when paired with a polynomial-time separation oracle, operates in polynomial-time. Step~2 requires solving a linear program whose number of variables is polynomial in the input size, which can also be solved in polynomial-time. Therefore, assuming access to a polynomial-time separation oracle, the overall computational complexity of the Ellipsoid-based algorithm presented in Step 1 and Step 2 of Part 1 is polynomial in the input size.

Please note that the optimal solution to~\eqref{eq:problem:primal-aux} is a feasible solution to~\eqref{eq:problem:primal} that achieves at least a \((1 - \delta)\) fraction of its optimal value. Therefore, the Ellipsoid-based algorithm presented in this proof guarantees finding a \((1 - \delta)\)-approximate solution to~\eqref{eq:problem:primal} in polynomial time.


\begin{algorithm}[t]
\caption{The Ellipsoid Method for \eqref{eq:problem:dual}}\label{alg:ellipsoid-method}
\small{
{\bf Input:} Starting solution $(\boldsymbol\alpha_0, \boldsymbol\beta_0,\boldsymbol\gamma_0)$, starting matrix $\mathbf{D}_0$, maximum number of iterations $t_{\max}$, a $(1 - \delta)$-approximation algorithm $\mathcal{A}$ for Problem \eqref{eq:subdual} for any $j\in\mathcal{S}$ and some $\delta>0$. Here,  we  rewrite~\eqref{eq:problem:dual} in the form of $\min\{\mathbf{d}^\top \mathbf{s} : \mathbf{A}\mathbf{s} \geq \mathbf{b}, \mathbf{s} \geq 0\}$. 
 The matrix $\mathbf{A}$ is chosen such that for its first $ m\times2^n$ rows $\mathbf{a}_{(j,C)}^\top \mathbf{s} = \beta_j +\sum_{i\in C}\gamma_{ij}$, and for the last $ nm$ rows $\mathbf{a}_{(i,j)}^\top \mathbf{s} =\frac{\alpha_{ij}}{u_{ij}} + \sum_{\ell \in \mathcal{S}} \alpha_{i\ell} -\gamma_{ij}$ where $\mathbf{a}_{(j,C)}$'s, and $\mathbf{a}_{(i,j)}$'s are the row vectors in $\mathbf{A}$ indexed by their corresponding pairs. Similarly, the vector $\mathbf{b}$ is chosen such that $b_{(j,C)} = R_j(C)$, and $b_{(i,j)}=0$. \\
{\bf Output:} (i) A collection $\mathcal{V}_j$ of sets that have violated~\eqref{constraint:dual-AC} constraints for each $j\in\mathcal{S}$. (ii) An optimal, feasible solution $(\boldsymbol\alpha^\star, \boldsymbol\beta^\star,\boldsymbol\gamma^\star)$, (iii) optimal objective $\textsc{obj}$.
\begin{enumerate}
  \item {\bf Initialization.} $(\boldsymbol\alpha, \boldsymbol\beta,\boldsymbol\gamma) = (\boldsymbol\alpha_0, \boldsymbol\beta_0,\boldsymbol\gamma_0), (\boldsymbol\alpha^\star, \boldsymbol\beta^\star,\boldsymbol\gamma^\star) = (\mathbf{0}_{nm}, \mathbf{1}_{m}, \mathbf{0}_{nm}),  \textsc{obj} = m, \mathbf{D} = \mathbf{D}_0, \mathcal{V}_j = \emptyset\ \text{for all } j\in \mathcal{S},$ and  $t = 0$.
  \item \label{step:ellipsoid-while} While $t \leq t_{\max}$:
  \begin{enumerate}
      \item {\bf Find a violated constraint.} 
      \begin{itemize}
          \item Check if we can reduce the objective further. If $\sum_{j \in \mathcal{S}} \beta_j + \sum_{(i,j) \in \mathcal{C} \times \mathcal{S}} \alpha_{ij}  \geq \textsc{obj}$, 
          set $\mathbf{a} = -\mathbf{d}$ and go to Step 2(b).    
          \item        Check if the last $nm$ nontrivial constraints hold. If $\frac{\alpha_{ij}}{u_{ij}} + \sum_{\ell \in \mathcal{S}} \alpha_{i\ell} -\gamma_{ij}  < 0$ for some $(i, j)$, set $\mathbf{a} = \mathbf{a}_{(i,j)}$. Go to Step 2(b).
          \item Check if the non-negativity constraints hold. If $\alpha_{ij} < 0$ for some $(i, j)$, set $\mathbf{a} = \mathbf{e}_{j\cdot n+i}$, here $\mathbf{e}_k$ is the unit vector where its $k^{th}$ coordinate is equal to one. Go to Step 2(b).
          \item Check if the~\eqref{constraint:dual-AC} constraints for any $j\in\mathcal{S}$, holds using the approximate separation oracle. \begin{itemize}
          \item Apply the $(1 - \delta)$-approximation algorithm $\mathcal{A}$ to Problem \eqref{eq:subdual} that returns $C_j^\mathcal{A}$ for any $j\in\mathcal{S}$ such that
          $ \revmod_j(C_j^\mathcal{A}, \boldsymbol\gamma)  \ge (1 - \delta) \cdot \subdual(\boldsymbol\gamma)$. 
          \item
          If $\revmod_j(C_j^\mathcal{A}, \boldsymbol\gamma) > \beta_j$ for some $j\in\mathcal{S}$ , then set $C_j^\mathcal{A}$ is violating the constraint for supplier $j\in\mathcal{S}$. Set $\mathbf{a} = \mathbf{a}_{(j,C_j^\mathcal{A})}$ and add $C_j^\mathcal{A}$ to $\mathcal{V}_j$, and go to Step 2(b)
          \end{itemize}
          \item If we have found no violated constraint, update our best feasible solution and its objective:
          $$
          (\boldsymbol\alpha^\star, \boldsymbol\beta^\star,\boldsymbol\gamma^\star) \leftarrow (\boldsymbol\alpha, \boldsymbol\beta,\boldsymbol\gamma) \quad \text{and} \quad 
          \textsc{obj} \leftarrow \sum_{j \in \mathcal{S}} \beta_j^\star + \sum_{(i,j) \in \mathcal{C} \times \mathcal{S}} \alpha_{ij}^\star\,.
          $$
          Then, go back to the start of Step~\ref{step:ellipsoid-while} and re-enter the while loop. 
      \end{itemize}
      \item {\bf Use the violated constraint to decrease the volume of the ellipsoid and find a new solution.}
      $$
      \begin{aligned}
      (\boldsymbol\alpha, \boldsymbol\beta,\boldsymbol\gamma) & \leftarrow (\boldsymbol\alpha, \boldsymbol\beta,\boldsymbol\gamma) + \frac{1}{(2nm +m)+1}\frac{\mathbf{D}\mathbf{a}}{\sqrt{\mathbf{a}^\top \mathbf{D}\mathbf{a}}} \,;\\
      \mathbf{D} & \leftarrow \frac{(2nm +m)^2}{(2nm +m)^2 -1} \Big(\mathbf{D} - \frac{2}{(2nm +m)+1}\frac{\mathbf{D}\mathbf{a}\mathbf{a}^\top\mathbf{D}}{\mathbf{a}^\top\mathbf{D}\mathbf{a}}\Big) \,.
      \end{aligned}
      $$
  \item $t \leftarrow t + 1\,.$
  \end{enumerate}
\item {\bf The ellipsoid is sufficiently small.} Return $\mathcal{V}_j$ for all $j\in\mathcal{S}$, $(\boldsymbol\alpha^\star, \boldsymbol\beta^\star,\boldsymbol\gamma^\star)$,  and $\textsc{obj}$. 
\end{enumerate}
}
\end{algorithm}

\noindent {\bf Details of the Ellipsoid Method.~\label{appendix:Ellipsoid-details}}
Here we present full details of the Ellipsoid method for~\eqref{eq:problem:dual}. The analysis follows the same one proposed by \citet{chen2025fairassortment} that leverages approximate oracles to efficiently solve a different linear program with exponentially many variables. We are including the details of the method in our context for completeness.


For simplicity of notation, we can rewrite~\eqref{eq:problem:dual} in the form of $\min\{\mathbf{d}^\top \mathbf{s} : \mathbf{A}\mathbf{s} \geq \mathbf{b}, \mathbf{s} \geq 0\}$. Here, $\mathbf{s} = (\boldsymbol\alpha, \boldsymbol\beta,\boldsymbol\gamma) \in \mathbb{R}^{2nm +m}$ is the vector of decision variables. $\mathbf{d}$ is a vector of size $2nm +m$ chosen such that $\mathbf{d}^\top \mathbf{s} =  \sum_{j \in \mathcal{S}} \beta_j + \sum_{(i,j) \in \mathcal{C} \times \mathcal{S}} \alpha_{ij}$. $\mathbf{A}$ is a $N \times (2nm +m)$ matrix and $\mathbf{b}$ is a vector of size $N$, where $N = m\times2^n+nm$, i.e., number of non trivial constraints of the dual. 
Let the first $ m\times2^n$ rows of $\mathbf{A}$ and $\mathbf{b}$ be indexed by a pairs $(j,C)$ with $j\in\mathcal{S}, C\subseteq \mathcal{C}$, and the last $nm$ rows of $\mathbf{A}$ and $\mathbf{b}$ be indexed by a pairs $(i,j)\in \mathcal{C}\times\mathcal{S}$.
The matrix $\mathbf{A}$ is chosen such that for its first $ m\times2^n$ rows $\mathbf{a}_{(j,C)}^\top \mathbf{s} = \beta_j +\sum_{i\in C}\gamma_{ij}$, and for the last $ nm$ rows $\mathbf{a}_{(i,j)}^\top \mathbf{s} =\frac{\alpha_{ij}}{u_{ij}} + \sum_{\ell \in \mathcal{S}} \alpha_{i\ell} -\gamma_{ij}$ where $\mathbf{a}_{(j,C)}$'s, and $\mathbf{a}_{(i,j)}$'s are the row vectors in $\mathbf{A}$ indexed by their corresponding pairs. Similarly, the vector $\mathbf{b}$ is chosen such that $b_{(j,C)} = R_j(C)$, and $b_{(i,j)}=0$. 

Within the Ellipsoid method, we keep track of the following quantities: 
(i) $(\boldsymbol\alpha, \boldsymbol\beta,\boldsymbol\gamma) \in \mathbb{R}^{2nm +m}$, the center of the current ellipsoid, which is also the current solution to the dual problem; note that this solution might not be feasible for the dual problem. 
(ii) $(\boldsymbol\alpha^\star, \boldsymbol\beta^\star,\boldsymbol\gamma^\star)$, the best feasible solution to the dual problem we have found so far. We initialize $(\boldsymbol\alpha^\star, \boldsymbol\beta^\star,\boldsymbol\gamma^\star)$ to be $(\mathbf{0}_{nm}, \mathbf{1}_{m}, \mathbf{0}_{nm})$, where $\mathbf{0}_{nm}$ is a zero vector of length $nm$, and $\mathbf{1}_{m}$ is a vector of all ones with length $m$; this is always a feasible solution to~\eqref{eq:problem:dual}\footnote{Assume we have applied an initial normalization on revenues such that $R_j(C)\leq 1$ for all  $j\in\mathcal{S}, C\subseteq \mathcal{C}$.}.  
(iii) The current best objective $\textsc{obj}$. We initialize it to be $m$, which is the objective of $(\mathbf{0}_{nm}, \mathbf{1}_{m}, \mathbf{0}_{nm})$. 
(iv) A positive-definite matrix $\mathbf{D} \in \mathbb{R}^{(2nm +m) \times (2nm +m)}$, which represents the shape of the ellipsoid.
(v) A collection $\mathcal{V}_j$ of subsets of customers that have violated the~\eqref{constraint:dual-AC} constraint during the execution of the Ellipsoid method for each $j\in\mathcal{S}$. 

Full details of the Ellipsoid method is presented in Algorithm~\ref{alg:ellipsoid-method}. At a high level, the Ellipsoid method for~ \eqref{eq:problem:dual} works as follows. At each iteration, it generates an ellipsoid $E$ centered at the current solution $\mathbf{s} = (\boldsymbol\alpha, \boldsymbol\beta,\boldsymbol\gamma)$, which is defined as:
$$
E = \{\mathbf{x} : (\mathbf{x} - \mathbf{s})^\top \mathbf{D}^{-1} (\mathbf{x} - \mathbf{s}) \leq 1\} \,.
$$
By design of the Ellipsoid method, the ellipsoid $E$ always contains the intersection of the feasibility region of~\eqref{eq:problem:dual} and the half space $\{\mathbf{s} : \mathbf{d}^\top \mathbf{s} < \textsc{obj}\}$.

The Ellipsoid method for~ \eqref{eq:problem:dual} terminates when the ellipsoid $E$ is sufficiently small, by selecting a sufficiently large $t_{\max}$. Recall that the ellipsoid $E$ always contains the intersection of the feasibility region of~\eqref{eq:problem:dual} and the half space $\{\mathbf{s} : \mathbf{d}^\top \mathbf{s} < \textsc{obj}\}$. Intuitively, when the ellipsoid gets reduced to a sufficiently small volume, it is unlikely that there exists a feasible solution that can further reduce our objective. As shown in \citet{bertsimas-LPbook}, given a separation oracle, the Ellipsoid method is guaranteed to terminate in $t_{\max} = \mathcal{O}((nm)^{6}\log(nmU))$ iterations, where $U$ is the bit complexity of~\eqref{eq:problem:dual}. Hence, the Ellipsoid method terminates in some numbers of iterations that is polynomial in input size. Moreover, when the Ellipsoid method terminates, it returns the following: (i) for any $j\in\mathcal{S}$ the collection $\mathcal{V}_j$ of subsets of customers that have violated the~\eqref{constraint:dual-AC} constraint; (ii) the optimal solution $(\boldsymbol\alpha^\star, \boldsymbol\beta^\star,\boldsymbol\gamma^\star)$ and (iii) the optimal objective $\textsc{obj}$.

\end{APPENDICES}
\end{document}